\theoremstyle{definition}
\newtheorem{definition}{Definition}[section]
\theoremstyle{plain}
\newtheorem{theorem}[definition]{Theorem}
\newtheorem{lemma}[definition]{Lemma}
\newtheorem{proposition}[definition]{Proposition}
\theoremstyle{remark}
\newtheorem{remark}[definition]{Remark}
\newcommand{\Do}{\partial\!\!\!/}
\begin{document}

\title{ Quaternifications and Extensions of  Current Algebras on \(S^3\) }
\author{Tosiaki Kori and Yuto Imai 
\\Department of Mathematics\\
Graduate School 
of Science and Engineering\\
Waseda University,\\Tokyo 169-8555, Japan
\\email{ kori@waseda.jp, y.imai@aoni.waseda.jp}}
\date{}
\maketitle

	\begin{abstract}
Let \(\mathbf{H}\) be the quaternion algebra.   Let \(\mathfrak{g}\) be a complex Lie algebra and let \(U(\mathfrak{g})\) be the enveloping algebra of \(\mathfrak{g}\).   The quaternification \(\mathfrak{g}^{\mathbf{H}}=\,(\,\mathbf{H}\otimes U(\mathfrak{g}),\,[\quad,\quad]_{\mathfrak{g}^{\mathbf{H}}}\,)\) of \(\mathfrak{g}\) is defined by the bracket 
\begin{equation}
\bigl[\,\mathbf{z}\otimes X\,,\,\mathbf{w}\otimes Y\,\bigr]_{\mathfrak{g}^{\mathbf{H}}}\,=\,(\mathbf{z}\cdot \mathbf{w})\otimes\,(XY)\,-\,
(\mathbf{w}\cdot\mathbf{z})\otimes (YX)\,,\nonumber
\end{equation}
for  \(\mathbf{z},\,\mathbf{w}\in \mathbf{H}\) and  \(X,\,Y\in U(\mathfrak{g})\,\).    Let \(S^3\mathbf{H}\) be the ( non-commutative ) algebra of \(\mathbf{H}\)-valued smooth mappings over \(S^3\) and let  \(S^3\mathfrak{g}^{\mathbf{H}}=S^3\mathbf{H}\otimes U(\mathfrak{g})\).    
The Lie algebra structure on \(S^3\mathfrak{g}^{\mathbf{H}}\) is induced naturally from that of \(\mathfrak{g}^{\mathbf{H}}\).     We introduce a 2-cocycle on \(S^3\mathfrak{g}^{\mathbf{H}}\)  by the aid of a tangential vector field on \(S^3\subset \mathbf{C}^2\) and 
 have the corresponding central extension \(S^3\mathfrak{g}^{\mathbf{H}} \oplus(\mathbf{C}a)\).    
  As a subalgebra of \(S^3\mathbf{H}\) we have the algebra of Laurent  polynomial spinors \(\mathbf{C}[\phi^{\pm}]\) spanned by a complete orthogonal system of eigen spinors \(\{\phi^{\pm(m,l,k)}\}_{m,l,k}\) of the tangential Dirac operator on \(S^3\).   Then  \(\mathbf{C}[\phi^{\pm}]\otimes U(\mathfrak{g})\) is a Lie subalgebra of \(S^3\mathfrak{g}^{\mathbf{H}}\).   We have the central extension    
 \(\widehat{\mathfrak{g}}(a)=
(\,\mathbf{C}[\phi^{\pm}]
\otimes U(\mathfrak{g}) \,) \oplus(\mathbf{C}a)\) as a Lie-subalgebra of \(S^3\mathfrak{g}^{\mathbf{H}} \oplus(\mathbf{C}a)\).     
Finally we  have a Lie algebra \(\widehat{\mathfrak{g}}\) which is obtained by adding to \(\widehat{\mathfrak{g}}(a)\) a derivation \(d\) which acts on  \(\widehat{\mathfrak{g}}(a)\) by the Euler  vector field \(d_0\).    That is the \(\mathbf{C}\)-vector space \(\widehat{\mathfrak{g}}=\left(\mathbf{C}[\phi^{\pm}]\otimes U(\mathfrak{g})\right)\oplus(\mathbf{C}a)\oplus (\mathbf{C}d)\) endowed with the bracket
\begin{eqnarray*}
&& \bigl[\,\phi_1\otimes X_1+ \lambda_1 a + \mu_1d\,,\phi_2\otimes X_2 + \lambda_2 a + \mu_2d\,\,\bigr]_{\widehat{\mathfrak{g}}}
\, =	
(\phi_1\phi_2)\otimes (X_1\,X_2)
\, -\,(\phi_2\phi_1)\otimes (X_2X_1)\\[0.2cm]
\qquad&&+\mu_1d_0\phi_2\otimes X_2-\mu_2d_0\phi_1\otimes X_1 
 +  (X_1\vert X_2)c(\phi_1,\phi_2)a\,.
\end{eqnarray*}
When  \(\mathfrak{g}\) is a simple Lie algebra with its Cartan subalgebra \(\mathfrak{h}\)  we shall investigate the weight space decomposition of \(\widehat{\mathfrak{g}}\) with respect to the  subalgebra \(\widehat{\mathfrak{h}}=
(\phi^{+(0,0,1)}\otimes \mathfrak{h} )\oplus(\mathbf{C}a)
\oplus(\mathbf{C}d)\).   
\end{abstract}
\maketitle

2010 Mathematics Subject Classification.    81R10, 17B65,  17B67, 22E67.\\
{\bf Key Words }    Infinite dimensional Lie algebras,  Current algebra, 
	Lie algebra extensions, Quaternion analysis.

\medskip

\section{Introduction}
The set of smooth mappings from a manifold to a Lie algebra has  been a subject of investigation both from a purely mathematical standpoint and from quantum field theory.   In quantum field theory they appear as a current algebra or an infinitesimal gauge transformation group.    Loop algebras are the simplest example.   Loop algebras and their representation theory have been fully worked out.   A loop algebra valued in  a  simple Lie algebra or its complexification turned out to behave like a simple Lie algebra  and the highly developed theory of finite dimensional Lie algebra was extended to such loop algebras.    Loop algebras appear in the simplified model of quantum field theory where the space is one-dimensional and many important facts in the representation theory of loop algebra  were first discovered by physicists.    As is well known 
A. Belavin et al. \cite{BPZ} constructed two-dimensional conformal field theory based on the irreducible representations of Virasoro algebra.    
It turned out that in many applications to field theory one must deal with certain extensions of  the associated  loop algebra rather than the loop algebra itself.    The central extension of a loop algebra is called  an affine Lie algebra and the highest weight theory of finite dimensional Lie algebra was extended to this case.   \cite{K}, \cite{K-W},  \cite{P-S} and \cite{W} are good references to study these subjects. 

In this paper we shall investigate a  generalization of affine Lie algebras to the Lie algebra of mappings from three-sphere \(S^3\)  to a Lie algebra.     As an affine Lie algebra is a central extension of the Lie algebra of smooth mappings from \(S^1\) to the complexification of a Lie algebra, so our objective is an extension of the Lie algebra of smooth mappings from \(S^3\) to the quaternification of a Lie algebra.        As for the higher dimensional generalization of loop groups, J. Mickelsson introduced an abelian exension  of current groups  \(Map(S^3,SU(N))\) for \(N\geq 3\), \cite{M}.   It is related to the Chern-Simons function on the space of \(SU(N)\)-connections and the associated current algebra \( Map(S^3,su(N))\) has an abelian extension \( Map(S^3,su(N))\oplus {\cal A}_3^{\ast}\) by the affine dual of the space \({\cal A}_3\) of connections over \(S^3\), \cite{Ko4}.     
In \cite{P-S} it was shown that, for any smooth manifold \(M\) and a simple Lie algebra \(\mathfrak{g}\), there is a  universal central extension of the Lie algebra \(Map(M,\mathfrak{g})\).     The kernel of the extension is given by the space of 
 complex valued 1-forms modulo exact 1-forms; \(\Omega^1(M)/d\Omega^0(M)\).     It implies that any extension is a weighted linear combination of extensions obtained as a pull back of the universel extension of the loop algebra \(L\mathfrak{g}\) by a smooth loop \(f:S^1\longrightarrow M\).   While two-dimensional conformal field theory is based on this central extension,  we would like to provide a mathematical tool that could help constructing a four-dimensional conformal field theory.
This is why we are dealing with central extensions of the Lie algebra of smooth mappings from \(S^3\) to the quaternification of a Lie algebra.    Now we shall give a brief explanation of each section.

Let  \( \mathbf{H}\)  be the quaternion numbers.     In this paper we shall denote a quaternion \(a+jb\in \mathbf{H}\) by \(\left(\begin{array}{c}a\\b\end{array}\right)\).   This comes from the identification of \( \mathbf{H}\)  with the matrix algebra
 \[
\mathfrak{mj}(2,\mathbf{C})=\left\{\left(\begin{array}{cc}
a&-\overline b\\[0.2cm] b&\overline a\end{array}\right)\,:\quad a,b\in\mathbf{C}\right\}.\]
\(\mathbf{H}\) becomes an associative algebra and 
the Lie algebra structure \( (\, \mathbf{H}, [\,,\,]_{ \mathbf{H}}\,)\) is induced on it.    The trace of  \(\mathbf{a}=\left(\begin{array}{c}a\\b\end{array}\right)\in \mathbf{H}\) is defined by \(tr\,\mathbf{a}=a+\overline a\).    For \(\mathbf{u}, \mathbf{v}, \mathbf{w}\in \mathbf{H}\) we have 
\(tr\,( [\mathbf{u},\mathbf{v}]_{ \mathbf{H}}\cdot\mathbf{w}\,)=\,tr\,( \mathbf{u}\cdot [\mathbf{v}, \mathbf{w}]_{ \mathbf{H}}\,)\).

Let \((\, \mathfrak{g}\, ,\, \bigl[\quad ,\quad\bigr]_{\mathfrak{g}}\,) \) be a complex Lie algebra.     Let 
\(U(\mathfrak{g})\) be the enveloping algebra.    The quaternification  of \(\mathfrak{g}\) is defined as the vector space  \(\mathfrak{g}^{\mathbf{H}}= \mathbf{H}\otimes U(\mathfrak{g})\) endowed with the bracket 
\begin{equation}
\bigl[\,\mathbf{z}\otimes X\,,\,\mathbf{w}\otimes Y\,\bigr]_{\mathfrak{g}^{\mathbf{H}}}\,=\,(\mathbf{z}\cdot \mathbf{w})\otimes\,(XY)\,-\,(\mathbf{w}\cdot\mathbf{z})\otimes (YX)\,,\label{bracket} 
\end{equation}
for  \(\mathbf{z},\,\mathbf{w}\in \mathbf{H}\) and  \(X,\,Y\in U(\mathfrak{g})\,\).  
It extends the Lie algebra structure \((\mathfrak{g},\,\bigl[\quad,\quad\bigr]_{\mathfrak{g}}\,) \)  to  \(\left(\mathfrak{g}^{\mathbf{H}},\, \bigl[\quad\,,\quad\,\bigr]_{\mathfrak{g}^{\mathbf{H}}}\right)\) .     The quaternions \(\mathbf{H}\) give also a half spinor representation of \(Spin(4)\).   That is, \(\Delta=\mathbf{H}\otimes \mathbf{C}=\mathbf{H}\oplus \mathbf{H}\) gives an irreducible complex representation of the Clifford algebra \({\rm Clif}( \mathbf{R}^4)\): 
\({\rm Clif } ( \mathbf{R} ^4)\otimes  \mathbf{C}\,\simeq\, {\rm End} (\Delta )\), and 
\(\Delta\) decomposes into irreducible representations \(\Delta^{\pm}=\mathbf{H}\) of \({\rm Spin}(4)\).   
   Let \(S^{\pm}=\mathbf{C}^2\times \Delta^{\pm}\) be the trivial even ( respectively odd ) spinor bundle.   A section of spinor bundle is called a spinor.     The space of even half spinors \(C^{\infty}(S^3,S^+)\) is identified with the space \(S^3\mathbf{H}=Map(S^3,\mathbf{H})\).   Now the space    \(S^3{\mathfrak{g}}^{\mathbf{H}}\,=S^3\mathbf{H}\otimes U(\mathfrak{g})\,\) becomes a Lie algebra with respect to  the  bracket: 
 \begin{equation}
 [\,\phi \otimes X\, , \,\psi  \otimes Y\,]_{ S^3{\mathfrak{g}}^{\mathbf{H}}} =  ( \phi \psi ) \otimes \,(XY) \,
- \,(\psi \phi)\,  \otimes (YX)
 , 
\end{equation}
for 
  \(X,Y\in U(\mathfrak{g})\,\) and \(\,\phi,\,\psi\,\in S^3\mathbf{H}\, \).    In the sequel we shall abbreviate the Lie bracket \([\,,\,]_{ S^3{\mathfrak{g}}^{\mathbf{H}}}\)  simply to \([\,,\,]\).     Such an abbreviation will be often adopted for other Lie algebras. 
  
    Recall that the central extension of a loop algebra \(L\mathfrak{g}=\mathbf{C}[z,z^{-1}]\otimes\mathfrak{g}\,\) is the Lie algebra \((L\mathfrak{g}\oplus \mathbf{C}a\,,[\,,\,]_c \,)\)   given by  the bracket
\[[P\otimes X,Q\otimes Y]_c=PQ\otimes [X,Y]+(X\vert Y)c(P,Q)a\,,\]
with the aid of the 2-cocycle \(c(P,Q)=\frac{1}{2\pi}\int_{S^1}(\frac{d}{dz}P)\,Q\,dz\), where \((\cdot \vert \cdot)\)  is a non-degenerate invariant symmetric bilinear form on \(\mathfrak{g}\), \cite{K}.   
We shall give an analogous 2-cocycle on  \(S^3\mathbf{H}\,\).    
Let \(\theta\) be the vector field on \(S^3\) defined by 
\begin{equation}
\theta = z_1 \frac{\partial}{\partial z_1} + z_2 \frac{\partial}{\partial z_2}
- \Bar {z_1} \frac{\partial}{\partial \Bar{z_1}} - \Bar{z_2} \frac{\partial}{\partial \Bar{z_2}}.
\end{equation}
For \(\varphi=\left(\begin{array}{c}u\\ v\end{array}\right)\in S^3\mathbf {H}\), we put 
\[
\Theta\,\varphi=\frac{1}{2\sqrt{-1}}\,\left(\begin{array}{c}\theta\, u\\[0.2cm] \theta\, v\end{array}\right).\]
Let \(c:\,S^3{\mathbf{H}}\times S^3{\mathbf{H}}\longrightarrow \mathbf{C}\) be the bilinear form given by 
\begin{equation}
c(\phi_1,\phi_2)\,=\,
\,\frac{1}{2\pi^2}\int_{S^3}\,tr   [\,\Theta\,\phi_1\,\cdot \phi_2\,] d\sigma\,,\quad \phi_1,\,\phi_2 \in S^3\mathbf{H}.
  \end{equation}
 \(c\) defines a 2-cocycle on the algebra \(S^3\mathbf{H}\,\).    That is, \(c\,\) satisfies the following equations:
\[
c(\phi_1,\,\phi_2)=\,- \,c(\phi_2,\,\phi_1)\,\]
and 
\[c(\,\phi_1\cdot\,\phi_2\,,\,\phi_3)+c(\,\phi_2\cdot\,\phi_3\,,\,\phi_1)+c(\,\phi_3\cdot\,\phi_1\,\,,\,\phi_2)=0\,.
\]
We extend \(c\) to the 2-cocycle on \(S^3{\mathfrak{g}}^{\mathbf{H}}\) by 
\begin{equation}
c(\,\phi_1\otimes X\,,\,\phi_2\otimes Y\,)=\,(X \vert Y)\,c(\phi_1,\phi_2),
\end{equation}
where \((\,\cdot\vert\,\cdot)\) is the non-degenerate invariant symmetric bilinear form on \(\mathfrak{g}\) extended to \(U(\mathfrak{g})\).

Let \(a\) be an indefinite element.  
The Lie algebra extension of \(S^3{\mathfrak{g}}^{\mathbf{H}}\,\) by the 2-cocycle \(c\) is the 
\(\mathbf{C}\)-vector space \(S^3{\mathfrak{g}}^{\mathbf{H}}\oplus \mathbf{C}a\,\) endowed with the following bracket:   
 \begin{eqnarray}\label{liebra}
 [\,\phi \otimes X\, , \,\psi  \otimes Y\,]^{\wedge}&=&  ( \phi \cdot\psi ) \otimes \,(X\,Y) \,
- \,(\psi  \cdot\phi) \otimes (Y\,X)\, + \, c(\phi,\psi)(X\vert Y)\, a\,,\nonumber\\[0.2cm]
[a\,,\phi\otimes X\,]^{\wedge}&=&0\,,
\end{eqnarray}
for 
\(X,Y\in U(\mathfrak{g})\) and 
 \(\phi,\,\psi\,\in S^3{\mathbf{H} }\).
 
In section 2 we shall review the theory of spinor analysis after \cite{Ko2, Ko3}. 
 Let \(D : \,S^+\longrightarrow S^-\) be the ( half spinor ) Dirac operator.  
  Let \(D=\gamma_+(\frac{\partial}{\partial n}-\Do)\) be the polar decomposition on \(S^3\subset\mathbf{C}^2\) of the Dirac operator, where \(\Do\) is the tangential Dirac operator on \(S^3\) and \(\gamma_+\) is the Clifford multiplication of the unit normal derivative on \(S^3\).     The eigenvalues of \(\Do\) are given by \(\{\frac{m}{2},\,\,-\frac{m+3}{2}\,;\,m=0,1,\cdots \} \), with multiplicity \((m+1)(m+2)\).    We have an explicitly written formula for eigenspinors \(\left\{ \phi^{+(m,l,k)},\,\phi^{-(m,l,k)}\right\}_{0\leq l\leq m,\,0\leq k\leq m+1}\)  corresponding to the eigenvalue \(\frac{m}{2}\) and  \(-\frac{m+3}{2}\) respectively and they give rise to a complete orthogonal system in \(L^2(S^3, S^+)\).     A spinor \(\phi\) on a domain \(G\subset \mathbf{C}^2\) is called a {\it harmonic spinor} on \(G\) if \(D\phi=0\).
Each \(\phi^{+(m,l,k)}\) is extended to a harmonic spinor on \(\mathbf{C}^2\), while each \(\phi^{-(m,l,k)}\) is extended to a harmonic spinor on \(\mathbf{C}^2\setminus \{0\}\).   Every harmonic spinor \(\varphi\) on \(\mathbf{C}^2\setminus \{0\}\) has a Laurent series expansion by the basis \(\phi^{\pm(m,l,k)}\):
 \begin{equation}
 \varphi(z)=\sum_{m,l,k}\,C_{+(m,l,k)} \phi^{+(m,l,k)}(z)+\sum_{m,l,k}\,C_{-(m,l,k)}\phi^{-(m,l,k)}(z).
 \end{equation}
 If only finitely many coefficients are non-zero it is called a {\it spinor of Laurent polynomial type }.     The algebra of spinors of Laurent polynomial type is denoted by   \(\mathbf{C}[\phi ^{\pm}] \).    
  \(\mathbf{C}[\phi ^{\pm}] \) is a subalgebra of \(S^3\mathbf{H}\) that is algebraically generated by 
  \(\phi^{+(0,0,1)}=\left(\begin{array}{c}1\\0\end{array}\right)\), 
   \(\phi^{+(0,0,0)}=\left(
  \begin{array}{c}0\\ -1\end{array}\right) \),
     \( \phi^{+(1,0,1)} =\left(\begin{array}{c}z_2\\-\overline z_1\end{array}\right) \) and 
   \(\phi^{-(0,0,0)}=\left(\begin{array}{c}z_2\\\overline z_1\end{array}\right)\,\).     
   
 As a Lie subalgebra of \(S^3\mathfrak{g}^{\mathbf{H}}\),  \(\,\mathbf{C}[\phi^{\pm}]\otimes U(\mathfrak{g})\,\) has the central extension by the 2-cocycle \(c\,\).   That is,      
 the \(\mathbf{C}\)-vector space \(\widehat{\mathfrak{g}}(a)=\mathbf{C}[\phi ^{\pm}]\otimes U(\mathfrak{g})\oplus \mathbf{C}a\) endowed with the  Lie bracket (\ref{liebra})  becomes an extension of  \(\mathbf{C}[\phi^{\pm}]\otimes U(\mathfrak{g})\) with 1-dimensional center \(\mathbf{C}a\).   
  Finally we shall construct the Lie algebra which is obtained by adding to \(\widehat{\mathfrak{g}}(a)\)  a derivation \(d\) which acts on \(\widehat{\mathfrak{g}}(a)\) by the Euler vector field \(d_0\) on \(S^3\).     
   The Euler vector field is by definition
\(d_0\,= \frac{1}{2 }( z_1 \frac{\partial}{\partial z_1} + z_2 \frac{\partial}{\partial z_2}  +  \overline z_1 \frac{\partial}{\partial \overline z_1} +\overline z_2 \frac{\partial}{\partial \overline z_2}) \).    
 We have the following fundamental property of the cocycle \(c\,\).
 \[
c(\,d_0\phi_1\,,\phi_2\,)+
c(\,\phi_1\,, d_0\phi_2\,)=0.
\]
Let 
\(\widehat{\mathfrak{g}}=(\,\mathbf{C}[\phi^{\pm}]\otimes U(\mathfrak{g})\,)\oplus(\mathbf{C}a)\oplus (\mathbf{C}d)\).   We endow \(\widehat{\mathfrak{g}}\) with the bracket defined by 
 \begin{eqnarray*}
 [\,\phi \otimes X\, , \,\psi  \otimes Y\,]_{\widehat{\mathfrak{g}}} &= &
  [\,\phi \otimes X\, , \,\psi  \otimes Y\,]^{\wedge}
 \,,  \qquad 
 [\,a\,, \phi\otimes X\,] _{\widehat{\mathfrak{g}}}=0\,, \\[0.2cm]
 \, [\,d,\,a\,]_{\widehat{\mathfrak{g}}}&=& 0\,,\qquad
  [\,d, \phi \otimes X\,] _{\widehat{\mathfrak{g}}}=\,d_0 \phi \otimes X\, .
  \end{eqnarray*}
Then \((\,\widehat{\mathfrak{g}}\,,\,[\,\,,\,\,]_{\widehat{\mathfrak{g}}}\,)\) is an extension of the Lie algebra \(\widehat{\mathfrak{g}}(a)\) on which \(d\) acts as  \(d_0\).      
In section 4, when  \(\mathfrak{g}\) is a simple Lie algebra with its Cartan subalgebra \(\mathfrak{h}\,\),  we shall investigate  the weight space decomposition of \(\,\widehat{\mathfrak{g}}\) with respect to the  subalgebra \(\widehat{\mathfrak{h}}=(\phi^{+(0,0,1)}\otimes \mathfrak{h} )\oplus(\mathbf{C}a) \oplus(\mathbf{C}d)\), the latter is a commutative subalgebra and \(ad(\widehat{\mathfrak{h}})\)  acts on  \(\widehat{\mathfrak{g}}\) diagonally.    For this purpose we look at the representation of the adjoint action of \(\mathfrak{h}\) on the enveloping algebra \(U(\mathfrak{g})\).   Let \(\mathfrak{g}=\sum_{\alpha\in\Delta}\,\mathfrak{g}_{\alpha}
\) be the root space decomposition of \(\mathfrak{g}\).      Let \(\Pi=\{\alpha_i;\,i=1,\cdots,r={\rm rank}\,\mathfrak{g}\}\subset \mathfrak{h}^{\ast}\) be the set of simple roots and  \(\{\alpha_i^{\vee}\,;\,i=1,\cdots,r\,\}\subset \mathfrak{h}\) be the set of simple coroots.   The Cartan matrix \(A=(\,a_{ij}\,)_{i,j=1,\cdots,r}\) is given by \(a_{ij}=\left\langle \alpha_i^{\vee},\,\alpha_j \right\rangle\).      Fix a standard set of generators \(H_i=\alpha_i^{\vee}\), \(X_i= X_{\alpha_i}\in \mathfrak{g}_{\alpha_i}\),  \(Y_i= X_{-\alpha_i}\in \mathfrak{g}_{-\alpha_i}\), so that \([X_i,\,Y_j ]=H_j\delta_{ij}\), \([H_i,\,X_j]=-a_{ji}X_j\) and \([H_i,\,Y_j]=a_{ji}Y_j\). 
   We see that the set of weights of the representation \((\,U(\mathfrak{g}),\,ad(\mathfrak{h} ))\)  becomes
\begin{equation}
\Sigma=\{\,\sum_{i=1}^r\,k_i\alpha_i\in \mathfrak{h}^{\ast}\,;\quad k_i \in \mathbf{Z},\,i=1,\cdots,r\,\}
\end{equation}
The weight space of  \(\lambda\in\Sigma  \) is by definition 
\begin{equation}
\mathfrak{g}^U_{\lambda}\,=\,\{\xi\in U(\mathfrak{g})\,;\, ad(h)\xi=\lambda(h)\xi,\,\forall h \in\mathfrak{h}\},
\end{equation}
when \(\mathfrak{g}^U_{\lambda}\neq 0\).   
Then, given \(\lambda=\sum_{i=1}^r\,k_i\alpha_i\) , we have 
\begin{equation*}
\mathfrak{g}^U_{\lambda}=\mathbf{C}[Y_{1}^{q_1}\cdots\,Y_{r}^{q_r}\,H_1^{l_1}\cdots H_r^{l_r}\,X_{1}^{p_1}\cdots X_{r}^{p_r}\,
;\, p_i, \,q_i,\, l_i\in \mathbf{N}\cup 0,\, \,k_i=p_i-q_i\,, \,i=1,\cdots,r \, ]\,.
\end{equation*} 
 The weight space decomposition becomes 
\begin{equation}
U(\mathfrak{g})=\bigoplus_{\lambda\in \Sigma}\,\mathfrak{g}^U_{\lambda}\,,\quad \mathfrak{g}^U_0\supset U(\mathfrak{h}).
\end{equation}
Now we proceed to the representation \((\,\widehat{\mathfrak{g}},\,ad(\widehat{\mathfrak{h}})\,)\).   
The dual space $\mathfrak{h}^* $ of $\mathfrak{h} $ can be regarded naturally as a subspace of  $\,\widehat{\mathfrak{h}}^{\,\ast}$.   So  $\Sigma \subset \mathfrak{h}^*$ is seen to be a subset of $\,\widehat{\mathfrak{h}}^{\,*}$.    
 We define $\delta \in \widehat{\mathfrak{h}}^{\,*}$ by putting 
\(\left\langle\delta , \,h_i  \,\right\rangle =\,\left\langle\delta , a\right\rangle = 0\),  \(1 \leqq i \leqq  r\),  and 
 \(\left\langle\delta , d\right\rangle = 1\).   
 Then the set of weights $\widehat{\Sigma}$ of the representation $(\,\widehat{\mathfrak{g}} , \,ad(\widehat{\mathfrak{h}})\,)$ is
\begin{eqnarray}
\widehat{\Sigma} &=& \left\{ \frac{m}{2} \delta+  \lambda;\quad \lambda \in \Sigma\, ,\,m\in\mathbf{Z}\,\right\} \nonumber\\[0.2cm]
&& \bigcup \left\{ \frac{m}{2} \delta ;\quad  m\in \mathbf{Z}\, \right\}.
\end{eqnarray}
The weight space decomposition of \(\widehat{\mathfrak{g}}\) is given by
\begin{equation}
\widehat{ \mathfrak{g}}\,=\, \bigoplus_{m\in \mathbf{Z}}\, \widehat{ \mathfrak{g}}_{\frac{m}{2}\delta}\,\bigoplus\,\left(\,\bigoplus_{\lambda\in \Sigma,\,\,m\in \mathbf{Z}}\, 
\widehat{ \mathfrak{g}}_{\frac{m}{2}\delta+\lambda}\,\right)
\end{equation}
Each weight space is given as follows.\\
 
\begin{eqnarray*}  
\widehat{ \mathfrak{g}}_{\frac{m}{2}\delta+ \lambda}\,&=&\mathbf{C}[\phi ^{\pm};\,m\,] \otimes \mathfrak{g} _{ \lambda}^U\,\qquad\mbox{ for \(m\neq 0\) and \(\lambda\neq 0\,\),} \\[0.2cm]
 \widehat{ \mathfrak{g}}_{\frac{m}{2}\delta}&=&  \,\mathbf{C}[\phi^{\pm};\,m\,]  \otimes \mathfrak{g}^U_0\,\, \qquad\mbox{for  \(m\neq  0 \)  },\\[0.2cm]
  \widehat{ \mathfrak{g}}_{0\delta}&= &(\,\mathbf{C}[\phi^{\pm};0\,]  \otimes \mathfrak{g}^U_0\,)\oplus(\mathbf{C}a)\oplus(\mathbf{C}d)\,\supset\,\widehat{\mathfrak{h}}\,, 
\end{eqnarray*}
where
 \[\mathbf{C}[\phi^{\pm}; m\,] \,=\,\left\{\varphi\in \mathbf{C}[\phi^{\pm}] ;\,|z|^m\varphi(\frac{z}{|z|})=\varphi(z)\,\right\}.\]

\section*{Acknowledgement}
A r\'esum\'e of these results is appeared in \cite{K-I}.     The present article is devoted to the  explanation of these results with detailed proof.     
The authors would like to express their thanks to Professors Yasushi Homma of Waseda University  for his valuable objections to the early version of this paper.

\section{Quaternification of a Lie algebra} 

\subsection{Quaternion algebra}

The quaternions \(\mathbf{H}\) are formed from the real numbers \(\mathbf{R}\) by adjoining three symbols \(\,i,\,j,\,k\,\) satisfying the identities:
\begin{eqnarray}\label{q}
i^2&=&j^2=k^2=-1\,,\nonumber\\[0.2cm]
ij&=&-ji=k,\quad jk=-kj=i, \quad ki=-ik=j\,.
\end{eqnarray}   A general quaternion is of the form \(\,x=x_1+x_2i+x_3j+x_4k\,\) with 
\(x_1,x_2,x_3,x_4\in \mathbf{R}\).   By taking \(x_3=x_4=0\) the complex numbers \(\mathbf{C}\) are contained in \(\mathbf{H}\) if we identify \(i\) as the usual complex number.    Every quaternion 
\(x\) has a unique expression   
\(x=z_1+jz_2\) with \(z_1,z_2\in\mathbf{C}\).    This identifies \(\mathbf{H}\) with \(\mathbf{C}^2\) as \(\mathbf{C}\)-vector spaces.   
The quaternion multiplication will be from the right \(x\longrightarrow xy\) where 
\(y=w_1+jw_2\) with \(w_1,\,w_2\in \mathbf{C}\):
\begin{equation}
xy=( z_1+jz_2\,)(  w_1+jw_2\,)=(z_1w_1-\overline z_2 w_2)+j(\overline z_1w_2+z_2w_1) .
\label{rmulti}
\end{equation}
The multiplication  of a  \(g=a+jb\in \mathbf{H}\) to \(\mathbf{H}\) from the left yields an endomorphism in \(\mathbf{H}\): \(\{x\longrightarrow gx\}\in End_{\mathbf{H}}(\mathbf{H})\).
If we look on it under the identification \(\mathbf{H}\simeq\mathbf{C}^2\) mentioned above we have the \(\mathbf{C}\)-linear map
\begin{equation}
\mathbf{C}^2\ni\left(\begin{array}{c}z_1\\z_2\end{array}\right)\,\longrightarrow
\,\left(\begin{array}{cc}
a&-\overline b\\[0.2cm] b&\overline a\end{array}\right)\left(\begin{array}{c}z_1\\z_2\end{array}\right)\,\in \mathbf{C}^2\,.\end{equation}
This establishes the \(\mathbf{R}\)- linear isomorphism
\begin{equation}
\mathbf{H}\,\ni\, a+jb\,\stackrel{\simeq}{\longrightarrow}\, \left(\begin{array}{cc}
a&-\overline b\\[0.2cm] b&\overline a\end{array}\right)\,\in \mathfrak{mj}(2,\mathbf{C}),
\end{equation}
where we defined 
\begin{equation}
\mathfrak{mj}(2,\mathbf{C})=\left\{\left(\begin{array}{cc}
a&-\overline b\\[0.2cm] b&\overline a\end{array}\right)\,:\quad a,b\in\mathbf{C}\right\}.
\end{equation}
The complex matrices corresponding to \(i,\,j,\,k\in\mathbf{H}\) are 
\begin{equation}\label{3basis}
e_3=
\left(\begin{array}{cc}
i&0\\[0.2cm] 0&-i\end{array}\right)\,,\, e_2=\left(\begin{array}{cc}
0&-1\\[0.2cm] 1&0\end{array}\right)\,,\, e_1=\left(\begin{array}{cc}
0&-i\\[0.2cm] -i&0\end{array}\right)\,. 
\end{equation}
These are the basis of the Lie algebra $\mathfrak{su}$(2).
Thus we have the identification of the following objects
\begin{equation}
\mathbf{H}\,\simeq \mathfrak{mj}(2,\mathbf{C})\simeq \mathbf{R}\oplus \mathfrak{su}(2).\label{trivext}
\end{equation}
The correspondence between the elements is given by 
\begin{equation}
a+jb\equiv \left(\begin{array}{c}a\\b\end{array}\right)\,\longleftrightarrow 
\left(\begin{array}{cc}
a&-\overline b\\[0.2cm] b&\overline a\end{array}\right)
\,
\longleftrightarrow \,
s + pe_1+ qe_2 + re_3\,,
\label{correspond}
\end{equation}
where \(a=s+ir,\,b=q+ip\).

   \(\mathbf{H}\) becomes an associative algebra with  the multiplication law defined by 
\begin{equation}
\left(\begin{array}{c}z_1\\z_2\end{array}\right)\,\cdot \,\left(\begin{array}{c}w_1\\w_2\end{array}\right)\,=\,\left(\begin{array}{c}z_1w_1-\overline z_2w_2\\ \overline z_1w_2+z_2w_1\end{array}\right)\,,\label{productlaw}
\end{equation}
which is the rewritten formula of (\ref{rmulti}) and 
the right-hand side is the first row of the matrix multiplication
\[\left(\begin{array}{cc}
z_1&-\overline z_2\\[0.2cm] z_2&\overline z_1\end{array}\right)\,
\left(\begin{array}{cc}
w_1&-\overline w_2\\[0.2cm] w_2&\overline w_1\end{array}\right)\,.\]
It implies the 
Lie bracket of two vectors in \(\mathbf{H}\), that becomes   
\begin{equation}\label{Liebr}
\left [\,\left(\begin{array}{c}z_1\\z_2\end{array}\right),\,\left(\begin{array}{c}w_1\\w_2\end{array}\right)\,\right ] \,
=\left(
\begin{array}{c}z_2\overline w_2-\overline z_2w_2\\
(w_1-\overline w_1)z_2-(z_1-\overline z_1)w_2\end{array}\right).
\end{equation}

These expressions are very convenient to develop the analysis on \(\mathbf{H}\), and give an interpretation on the quaternion analysis by the language of spinor analysis.

\begin{proposition}
Let \(\mathbf{z}=\left(\begin{array}{c}z_1\\z_2\end{array}\right),\, \mathbf{w}=\left(\begin{array}{c}w_1\\w_2\end{array}\right)\in\mathbf{H}\,\). Then the trace of \(\mathbf{z}\cdot\mathbf{w}\in \mathbf{H}\,\simeq \mathfrak{mj}(2,\mathbf{C})\)  is given by
\begin{equation}
tr\,(\mathbf{z}\cdot\mathbf{w})\,=\,2{\rm Re}(z_1w_1-\overline z_2 w_2),
\end{equation}
and we have, for \(\mathbf{z}_1,\, \mathbf{z}_2,\,\mathbf{z}_3\,\in \mathbf{H}\), 
\begin{equation}
tr\,\left(\,[\,\mathbf{z}_1,\,\mathbf{z}_2\,]\,\cdot\mathbf{z}_3\,\right)\,=\,tr\,\left(\,\mathbf{z}_1\cdot [\,\mathbf{z}_2\,,\,\mathbf{z}_3\,]\,\right).
\end{equation}
\end{proposition}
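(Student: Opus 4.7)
My plan is to leverage the identification $\mathbf{H} \simeq \mathfrak{mj}(2,\mathbf{C})$ from (\ref{trivext}) throughout, so that both the product and the Lie bracket become ordinary matrix multiplication and commutator, and the trace is literally the matrix trace. Once this is set up, both assertions become short computations.

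For the first identity, I would just compute the matrix of $\mathbf{z}\cdot\mathbf{w}$ directly. Using the product law (\ref{productlaw}), the vector $\mathbf{z}\cdot\mathbf{w}$ has components $z_1w_1-\overline{z}_2w_2$ and $\overline{z}_1w_2+z_2w_1$, and the associated element of $\mathfrak{mj}(2,\mathbf{C})$ is
\[
\begin{pmatrix} z_1w_1-\overline{z}_2w_2 & -\overline{\overline{z}_1w_2+z_2w_1} \\ \overline{z}_1w_2+z_2w_1 & \overline{z_1w_1-\overline{z}_2w_2} \end{pmatrix}.
\]
Summing the diagonal entries gives $(z_1w_1-\overline{z}_2w_2)+\overline{(z_1w_1-\overline{z}_2w_2)}=2\,\mathrm{Re}(z_1w_1-\overline{z}_2w_2)$, which is exactly the formula claimed.

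For the second identity, the point is that the Lie bracket on $\mathbf{H}$ corresponds under the isomorphism (\ref{trivext}) to the matrix commutator. Writing $\mathbf{z}_i$ for the corresponding matrices in $\mathfrak{mj}(2,\mathbf{C})$, I would expand
\[
\mathrm{tr}\bigl([\mathbf{z}_1,\mathbf{z}_2]\cdot\mathbf{z}_3\bigr)=\mathrm{tr}(\mathbf{z}_1\mathbf{z}_2\mathbf{z}_3)-\mathrm{tr}(\mathbf{z}_2\mathbf{z}_1\mathbf{z}_3)
\]
and apply the cyclicity of the ordinary matrix trace to the second term, $\mathrm{tr}(\mathbf{z}_2\mathbf{z}_1\mathbf{z}_3)=\mathrm{tr}(\mathbf{z}_1\mathbf{z}_3\mathbf{z}_2)$, so that the right hand side becomes $\mathrm{tr}\bigl(\mathbf{z}_1(\mathbf{z}_2\mathbf{z}_3-\mathbf{z}_3\mathbf{z}_2)\bigr)=\mathrm{tr}\bigl(\mathbf{z}_1\cdot[\mathbf{z}_2,\mathbf{z}_3]\bigr)$.

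There is essentially no obstacle here: once the embedding of $\mathbf{H}$ into $2\times2$ complex matrices is used, part (1) is a single matrix multiplication and part (2) is cyclicity of the trace. The only mild bookkeeping issue is to verify that $\mathfrak{mj}(2,\mathbf{C})$ is closed under matrix multiplication (so that $\mathbf{z}\mathbf{w}$ lies in $\mathfrak{mj}(2,\mathbf{C})$ and its trace makes sense via the formula $\mathrm{tr}\,\mathbf{a}=a+\overline{a}$ from the introduction), which is clear from the explicit form of the product already written out.
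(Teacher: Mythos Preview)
Your proof is correct. The paper states this proposition without proof, and your argument via the matrix realization $\mathbf{H}\simeq\mathfrak{mj}(2,\mathbf{C})$---computing the diagonal of the product for part~(1) and invoking cyclicity of the matrix trace for part~(2)---is exactly the intended elementary verification.
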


The center of the Lie algebra \(\mathbf{H}\) is \(\left\{\,\left(\begin{array}{c}t\\0\end{array}\right)\in \mathbf{H};\,t\in \mathbf{R}\,\right\}\simeq\mathbf{R}\), and (\ref{trivext}) says that  \(\mathbf{H}\) is the trivial central extension of \(\mathfrak{su}(2)\).

  \(\mathbf{R}^3\) being a vector subspace of  \(\mathbf{H}\):
\begin{equation}
\mathbf{R}^3\ni \left(\begin{array}{c}p\\q\\r\end{array}\right)\,\Longleftrightarrow   \left(\begin{array}{c}ir\\q+ip\end{array}\right)=ir+j(q+ip)\in \mathbf{H},\label{action}
\end{equation}
we have the action of \(\mathbf{H}\) on \(\mathbf{R}^3\).

\subsection{ Lie algebra structure on \(\mathbf{H}\otimes U(\mathfrak{g})\) }

Let \((\, \mathfrak{g}\, ,\, \bigl[\quad ,\quad\bigr]_{\mathfrak{g}}\,) \) be a complex Lie algebra.     Let 
\(U(\mathfrak{g})\) be the enveloping algebra of \(\mathfrak{g} \).     Let  \(\mathfrak{g}^{\mathbf{H}}= \mathbf{H}\otimes U(\mathfrak{g})\) and define the following bracket on \(\mathfrak{g}^{\mathbf{H}}\) :  
 \begin{equation}\label{brac}
 [\,\mathbf{z} \otimes X\, , \,\mathbf{w}  \otimes Y\,]_{\mathfrak{g}^{\mathbf{H}}} 
  =  (\mathbf{z}\cdot\mathbf{w})\otimes\,(XY) - (\mathbf{w} \cdot \mathbf{z}) \otimes (YX) \,
  \end{equation}
 for \(X ,\, Y \in U(\mathfrak{g})\) and \(\mathbf{z},\,\mathbf{w}\in \,\mathbf{H}\, \).

 By the quaternion number notation every element of \(\mathbf{H}\otimes \mathfrak{g}\) may be written as \(X+jY\) with \(X,Y \in \mathfrak{g}\).
Then the above definition is equivalent to 
\begin{eqnarray}
\bigl[\,X_1+jY_1\,,\,X_2+jY_2\,\bigr]_{\mathfrak{g}^{\mathbf{H}}} &=&\,[X_1,X_2]_{\mathfrak{g}}\,-\,(\,\overline Y_1Y_2\,-\,\overline Y_2Y_1\,)\nonumber\\[0.2cm]
&&\,+\,j\left( \, \overline X_1Y_2-Y_2X_1+Y_1X_2-\overline X_2Y_1\,\right)\,,
\end{eqnarray}
where \(\overline X\) is the complex conjugate of \(X\).   

\begin{proposition}\label{qtf}
The bracket  \( \bigl[\,\cdot\,,\,\cdot\,\bigr]_{\mathfrak{g}^{\mathbf{H}}}\) defines a Lie algebra structure on \(\mathbf{H}\otimes U(\mathfrak{g})\) .
 \end{proposition}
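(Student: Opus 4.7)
The plan is to recognize the bracket $[\cdot,\cdot]_{\mathfrak{g}^{\mathbf{H}}}$ as the commutator of a natural associative product on $\mathbf{H}\otimes U(\mathfrak{g})$, and then invoke the universal fact that the commutator of any associative algebra is a Lie bracket.

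First, I would equip $\mathfrak{g}^{\mathbf{H}}=\mathbf{H}\otimes U(\mathfrak{g})$ with the product $(\mathbf{z}\otimes X)(\mathbf{w}\otimes Y):=(\mathbf{z}\cdot\mathbf{w})\otimes(XY)$, extended bilinearly. This is well-defined and associative because both tensor factors are associative: $\mathbf{H}$ is associative via its identification with $\mathfrak{mj}(2,\mathbf{C})$ (the multiplication law (\ref{productlaw}) is literally the first row of the matrix product shown immediately after it), and $U(\mathfrak{g})$ is associative by construction. Checking that the obvious product on $\mathbf{H}\otimes U(\mathfrak{g})$ is associative is then a one-line verification, since associativity of each factor entails associativity of the tensor product.

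Next, I would observe that the bracket (\ref{brac}) is exactly the commutator of this associative product:
\[
[\mathbf{z}\otimes X,\mathbf{w}\otimes Y]_{\mathfrak{g}^{\mathbf{H}}}=(\mathbf{z}\otimes X)(\mathbf{w}\otimes Y)-(\mathbf{w}\otimes Y)(\mathbf{z}\otimes X).
\]
From this, bilinearity and antisymmetry are immediate. The Jacobi identity reduces to the standard identity valid in every associative algebra: for any $u,v,w$, the expression $[[u,v],w]+[[v,w],u]+[[w,u],v]$ expands into six terms $uvw,vuw,\ldots$ which cancel pairwise by associativity alone. Applied to $u=\mathbf{a}\otimes A$, $v=\mathbf{b}\otimes B$, $w=\mathbf{c}\otimes C$, each of the six terms yields a scalar tensor $(\text{word in }\mathbf{a},\mathbf{b},\mathbf{c})\otimes(\text{corresponding word in }A,B,C)$, and the cancellation proceeds verbatim.

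There is no genuine obstacle here. The only point of substance is the associativity of $\cdot$ on $\mathfrak{g}^{\mathbf{H}}$, which, as noted, is inherited from the two factors. The recasting of (\ref{brac}) as a commutator is the conceptual content of the proof, after which Jacobi is automatic; no computation specific to $\mathbf{H}$ or $U(\mathfrak{g})$ is needed beyond their associativity.
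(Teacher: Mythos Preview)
Your proposal is correct. The paper's own ``proof'' is merely the one-line assertion that the bracket satisfies antisymmetry and the Jacobi identity; your observation that \((\ref{brac})\) is precisely the commutator of the associative product \((\mathbf{z}\otimes X)(\mathbf{w}\otimes Y)=(\mathbf{z}\cdot\mathbf{w})\otimes(XY)\) on the tensor product of the associative algebras \(\mathbf{H}\) and \(U(\mathfrak{g})\) is the natural way to see this, and is if anything more informative than what the paper records.
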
 
 In fact the bracket defined in (\ref{brac}) satisfies the antisymmetry equation and the Jacobi identity.
 
\begin{definition}
The Lie algebra 
 \(\left(\mathfrak{g}^{\mathbf{H}}=
 \mathbf{H}\otimes U(\mathfrak{g})\,,\,\bigl[\,\quad,\quad \bigr]_{\mathfrak{g}^{\mathbf{H}}}\,\right)\) is called the  {\it quaternification of the Lie algebra} \(\mathfrak{g}\) .
\end{definition}

\section{Analysis on $\mathbf{H}$}

In this section we shall review the analysis of the Dirac operator on $\mathbf{H} \simeq \mathbf{C}^2$.   The general references are \cite{B-D-S} and \cite{ G-M}, and we follow the calculations  developed in  \cite{Ko1}, \cite{Ko2} and \cite{ Ko3}.

\subsection{Harmonic polynomials}

The Lie group SU(2) acts on $\mathbf{C}^2$ both from the right and from the left.
Let dR(g) and dL(g) denote respectively the right and the left infinitesimal actions of the Lie algebra
 $\mathfrak{su}(2)$.
We define the following vector fields on $\mathbf{C}^2$:
\begin{equation}
\theta _i = dR \left(\frac12e_i \right) ,\quad
 \tau _i = dL \left(\frac12e _i \right) ,
\qquad i = 1,2,3 , \label{lraction}
\end{equation}
where \(\{e_i;\,i=1,2,3\}\) is the normal basis of \(\mathfrak{su}(2)\), (\ref{3basis}) .   
Each of the triple $\theta_i(z)$, $i=1,2,3$,  and $\tau_i(z)$, $i=1,2,3$, gives a basis of the vector fields on 
the three sphere $\{|z|= 1\}\simeq S^3$. 

It is more convenient to introduce the following vector fields:
\begin{eqnarray}
e_+ &=& -z_2 \frac{\partial}{\partial \Bar{z_1}} +z_1 \frac{\partial}{\partial \Bar{z_2}} 
 =\theta_1 - \sqrt{-1} \theta_2 ,\label{triple1}\\
e_- &=& - \Bar{z_2} \frac{\partial}{\partial z_1} + \Bar{z_1} \frac{\partial}{\partial z_2}
 = \theta_1 + \sqrt{-1} \theta_2 ,\label{triple2}\\
\theta &=& z_1 \frac{\partial}{\partial z_1} + z_2 \frac{\partial}{\partial z_2}
- \Bar {z_1} \frac{\partial}{\partial \Bar{z_1}} - \Bar{z_2} \frac{\partial}{\partial \Bar{z_2}}
 = 2\sqrt{-1}\,\theta_3 .\label{triple3}\\
\Hat{e}_+ &=&- \Bar{z_1} \frac{\partial}{\partial \Bar{z_2}} + z_2 \frac{\partial}{\partial z_1}
 =\tau_1 - \sqrt{-1} \tau_2 ,\\
\Hat{e}_- &=& \Bar{z_2} \frac{\partial}{\partial \Bar{z_1}} - z_1 \frac{\partial}{\partial z_2}
 =\tau_1 + \sqrt{-1} \tau_2,\\
\Hat{\theta} &=& z_2 \frac{\partial}{\partial z_2} + \Bar{z_1} \frac{\partial}{\partial \Bar{z_1}}
- \Bar {z_2} \frac{\partial}{\partial \Bar{z_2}} - z_1 \frac{\partial}{\partial z_1}
 = 2\sqrt{-1}\,\tau_3 . 
\end{eqnarray}
We have the commutation relations;
\begin{eqnarray}
[\theta , e_+] = 2e_+, \,& \, [\theta , e_-] = -2e_-, \,& \, [e_+ ,e_-]=- \theta .\\[0,2cm]
[\hat{\theta} , \hat{e}_+] = 2\hat{e}_+, \,&\, [\hat{\theta} , \hat{e}_-] = -2\hat{e}_-,\, &\,
 [\hat{e}_+ ,\hat{e}_-]=- \hat{\theta} . 
\end{eqnarray}

Both Lie algebras spanned by \( (e_+, e_-, \theta) \) and  \( (\hat e_+, \hat e_-, \hat \theta) \) 
 are isomorphic to  $\mathfrak{sl}(2, \mathbf{C})$. 
 
In the following we denote a function $f(z, \Bar{z})$ of variables $z, \Bar{z}$ simply by $f(z)$.
   For \(m = 0, 1, 2, \cdots\), and \( l, k = 0, 1, \cdots , m\),  we define the polynomials:
\begin{eqnarray}
v ^{k} _{(l,m-l)} &=& (e_-)^k z^{l}_{1} z^{m-l}_{2}.\\[0.2cm]  
w ^{k} _{(l,m-l)} &=& (\Hat{e}_-)^k z^{l}_{2} \Bar{z}^{m-l}_{1}.
\end{eqnarray}
Then \(v ^{k} _{(l,m-l)}\) and \(w ^{k} _{(l,m-l)}\)  are harmonic polynomials on $\mathbf{C}^2$;   
\[\Delta v ^{k} _{(l,m-l)}=\Delta w ^{k} _{(l,m-l)}=0\,,\]
   where \(\Delta
= \frac{\partial ^2}{\partial z _1 \partial \Bar{z}_1} + \frac{\partial ^2}{\partial z _2 \partial \Bar{z}_2}
\).    

\noindent  \(\left\{\,\frac{1}{\sqrt{2}\pi}v ^{k} _{(l,m-l)}\, ;  \,m = 0, 1, \cdots,\,  0\leq k,l\leq m\,\right\} \)  forms a \(L^2(S^3)\)-complete orthonormal basis of the space of harmonic polynomials,  as well as   \(\left\{\,\frac{1}{\sqrt{2}\pi}w ^{k} _{(l,m-l)}\, ;  \,m = 0, 1, \cdots,\,  0\leq k,l\leq m\,\right\} \).
 
\begin{proposition}\label{a}
\begin{eqnarray} 
e_+ v ^{k} _{(l,m-l)}  &=&- k(m - k +1) v ^{k-1} _{(l,m-l)} , \notag \\[0.2cm]
e_- v ^{k} _{(l,m-l)}  &= &v ^{k+1} _{(l,m-l)},\\[0,2cm]
\theta v ^{k} _{(l,m-l)}  &=& (m - 2k) v ^{k}_{(l,m-l)}\, . \notag
\end{eqnarray}
\begin{eqnarray} 
\Hat{e}_+ w ^{k} _{(l,m-l)} &=&- k(m - k +1) w ^{k-1} _{(l,m-l)}, \notag \\[0.2cm]
\Hat{e}_-w ^{k} _{(l,m-l)} &=& w ^{k+1} _{(l,m-l)},\\[0.2cm]
\Hat{\theta} w ^{k} _{(l,m-l)}  &=& (m - 2k) w ^{k} _{(l,m-l)}. \notag
\end{eqnarray}
\end{proposition}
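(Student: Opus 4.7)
The plan is to view the identities as the highest-weight representation theory of the $\mathfrak{sl}_2(\mathbf{C})$-triple $(e_+,e_-,\theta)$, with $v^0_{(l,m-l)}=z_1^lz_2^{m-l}$ playing the role of a highest-weight vector of weight $m$. The corresponding formulas for $w^k_{(l,m-l)}$ are then obtained by applying the same argument to the partner triple $(\hat e_+,\hat e_-,\hat\theta)$ acting on the highest-weight vector $z_2^l\bar z_1^{m-l}$, so I describe only the $v$-case in detail.

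First I verify the base case $k=0$. Each term of $e_+=-z_2\partial_{\bar z_1}+z_1\partial_{\bar z_2}$ differentiates with respect to an antiholomorphic variable, so $e_+(z_1^lz_2^{m-l})=0$, which matches the stated formula at $k=0$. The antiholomorphic part of $\theta$ likewise annihilates $z_1^lz_2^{m-l}$, leaving $\theta v^0_{(l,m-l)}=(l+(m-l))v^0_{(l,m-l)}=m\,v^0_{(l,m-l)}$. The identity $e_-v^k_{(l,m-l)}=v^{k+1}_{(l,m-l)}$ is immediate from the definition. The $\theta$-formula then follows by induction on $k$ using $[\theta,e_-]=-2e_-$:
\[\theta v^k_{(l,m-l)}=e_-\theta v^{k-1}_{(l,m-l)}-2v^k_{(l,m-l)}=\bigl[(m-2k+2)-2\bigr]v^k_{(l,m-l)}=(m-2k)v^k_{(l,m-l)}.\]

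For the $e_+$-formula I induct on $k$ using $[e_+,e_-]=-\theta$, i.e.\ $e_+e_-=e_-e_+-\theta$: under the inductive hypotheses $e_+v^{k-1}_{(l,m-l)}=-(k-1)(m-k+2)v^{k-2}_{(l,m-l)}$ and $\theta v^{k-1}_{(l,m-l)}=(m-2k+2)v^{k-1}_{(l,m-l)}$,
\begin{align*}
e_+v^k_{(l,m-l)}&=e_-e_+v^{k-1}_{(l,m-l)}-\theta v^{k-1}_{(l,m-l)}\\
&=-(k-1)(m-k+2)\,v^{k-1}_{(l,m-l)}-(m-2k+2)\,v^{k-1}_{(l,m-l)}\\
&=-k(m-k+1)\,v^{k-1}_{(l,m-l)},
\end{align*}
where the last step uses the elementary identity $(k-1)(m-k+2)+(m-2k+2)=k(m-k+1)$. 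For $w^k_{(l,m-l)}$ one first checks that $z_2^l\bar z_1^{m-l}$ is a highest-weight vector for $(\hat e_+,\hat e_-,\hat\theta)$: the two terms of $\hat e_+=-\bar z_1\partial_{\bar z_2}+z_2\partial_{z_1}$ involve derivatives in $\bar z_2$ and $z_1$, both absent from $z_2^l\bar z_1^{m-l}$, and the $\hat\theta$-weight is $l+(m-l)=m$. The rest of the argument is formally identical.

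The only real subtlety is bookkeeping the sign coming from the nonstandard relation $[e_+,e_-]=-\theta$ (rather than the Chevalley normalization $+\theta$); this sign is what produces the minus sign in the coefficient $-k(m-k+1)$. Once it is tracked through the induction, the proof reduces to a textbook computation inside the $(m+1)$-dimensional highest-weight representation of $\mathfrak{sl}_2(\mathbf{C})$.
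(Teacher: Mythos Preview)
Your proof is correct. The paper does not actually supply a proof of this proposition: Section~2 is explicitly a review of results from \cite{Ko1,Ko2,Ko3}, and Proposition~\ref{a} is simply stated as a known fact. Your argument via the $\mathfrak{sl}_2(\mathbf{C})$ highest-weight mechanism is the natural one, and your handling of the nonstandard sign in $[e_+,e_-]=-\theta$ is accurate; the induction steps and the arithmetic identity $(k-1)(m-k+2)+(m-2k+2)=k(m-k+1)$ check out, as does the verification that $z_2^l\bar z_1^{m-l}$ is annihilated by $\hat e_+$ and has $\hat\theta$-weight $m$.
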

Therefore the space of harmonic polynomials on $\mathbf{C}^2$ is decomposed by the right action of SU(2) into $\sum_{m}\sum^{m}_{l=0} H_{m, l}$.
Each $H_{m, l} = \sum^{m}_{k=0} \mathbf{C }\,v^{k}_{(l, m-l)}$ gives an (m+1) dimensional irreducible 
representation of \(SU(2)\) with the highest weight $\frac{m}{2}$, \cite{T}.   

 We have the following relations.
\begin{eqnarray}
 w^{k} _{(l,m-l)}&=& (-1)^k\frac{l!}{(m-k)!}\,v^{m-l}_{(k,m-k)}\, ,\\[0.2cm]
\overline{v^{k} _{(l,m-l)}}&=& (-1)^{m-l-k}\frac{k!}{(m-k)!}v^{m-k} _{(m-l,l)}\, .\label{vform}
\end{eqnarray}

\subsection{Harmonic spinors}

\(\Delta=\mathbf{H}\otimes \mathbf{C}=\mathbf{H}\oplus \mathbf{H}\) gives an irreducible complex representation of the Clifford algebra \({\rm Clif}( \mathbf{R}^4)\): 
\[ {\rm Clif } ( \mathbf{R} ^4)\otimes  \mathbf{C}\,\simeq\, {\rm End} (\Delta)\,.\]
\(\Delta\) decomposes into irreducible representations \(\Delta^{\pm}=\mathbf{H}\) of \({\rm Spin}(4)\).   Let 
\(S= \mathbf{C} ^2\times \Delta\) be the trivial spinor bundle on \( \mathbf{C} ^2\).   The corresponding bundle 
\(S^+= \mathbf{C} ^2\times \Delta^+\) ( resp.  \(S^-= \mathbf{C} ^2\times \Delta^-\) )  is called the even ( resp.  odd ) spinor bundle and the sections are called even ( resp. odd ) spinors.  The set of even spinors or odd spinors on a set \(M\subset  \mathbf{C} ^2\) is nothing but the smooth functions on \(M\) valued in \( \mathbf{H}\):
 \begin{equation}
Map(M, \mathbf{H})\,=\, C^{\infty}(M, S^+)\,.
 \end{equation}

 The Dirac operator is defined by
\begin{equation}
\mathcal{D} = c \circ d
\end{equation}
where $d : S \rightarrow S \otimes T^{*} \mathbf{C}^2 \simeq S \otimes T \mathbf{C}^2$
is the exterior differential and $c: S \otimes T \mathbf{C}^2 \rightarrow S$ is the bundle 
homomorphism coming from the Clifford multiplication.
By means of the decomposition $S = S^{+} \oplus S^{-}$ the Dirac operator has 
the chiral decomposition:
\begin{equation}
\mathcal{D} = 
\begin{pmatrix}
0 & D^{\dagger} \\
D & 0
\end{pmatrix}
: C^{\infty}(\mathbf{C}^2, S^{+} \oplus S^{-}) \rightarrow C^{\infty}(\mathbf{C}^2, S^{+} \oplus S^{-}).
\end{equation}
We find that $D$ and \(D^{\dagger}\) have the following coordinate expressions;
\begin{equation}\label{Dirac}
D =  \begin{pmatrix} \frac{\partial}{\partial z_1} & - \frac{\partial}{\partial \Bar{z_2}} 
\\ \\ \frac{\partial}{\partial z_2} & \frac{\partial}{\partial \Bar{z_1}} \end{pmatrix} , 
\quad
D^{\dagger} = \begin{pmatrix} \frac{\partial}{\partial \Bar{z_1}} & \frac{\partial}{\partial \Bar{z_2}} 
\\ \\ - \frac{\partial}{\partial z_2} & \frac{\partial}{\partial z_1} \end{pmatrix}.
\end{equation}
 An even (resp. odd) spinor $\varphi$ is called a {\it harmonic spinor} if $D \varphi = 0$ 
 ( resp. \(D^{\dagger} \varphi = 0\) ).

We shall introduce a set of harmonic spinors which, restricted to \(S^3\), forms a 
complete orthonormal basis of \(L^{2}(S^3, S^{+})\) .

Let $\nu$ and $\mu$ be vector fields on \(\mathbf{C}^2\) defined by
\begin{equation}
\nu = z_1 \frac{\partial}{\partial z_1} + z_2 \frac{\partial}{\partial z_2} \ , \ \ \
\mu = z_2 \frac{\partial}{\partial z_2} + \Bar{z_1} \frac{\partial}{\partial \Bar{z_1}} \ .
\end{equation}
Then the radial vector field is defined by
\begin{equation}
\frac{\partial}{\partial n} = \frac{1}{2 |z|}(\nu + \Bar{\nu}) =\frac{1}{2 |z|}(\mu + \Bar{\mu}).\label{radial}
\end{equation}

We shall denote by \(\gamma\) the Clifford multiplication of the radial vector $\frac{\partial}{\partial n}\,$ , (\ref{radial}).   
 $\gamma$ changes the chirality:
\begin{align*}
\gamma : S^+ \oplus S^- \longrightarrow S^- \oplus S^+ ; \ \ \ \gamma^2 = 1.
\end{align*}
The matrix expression of $\gamma$ becomes as follows:
\begin{equation}
\gamma  | S^+ = \frac{1}{|z|} 
\begin{pmatrix} \Bar{z_1} & -z_2 \\ \Bar{z_2} & z_1 \end{pmatrix}, \quad
\gamma | S^- = \frac{1}{|z|} 
\begin{pmatrix} z_1 & z_2 \\ -\Bar{z_2} & \Bar{z_1} \end{pmatrix}.
\label{gamma}
\end{equation}
In the sequel we shall write $\gamma_+$ (resp. $\gamma_-$) for $\gamma | S^+$ 
(resp. $\gamma | S^+$).\\

\begin{proposition}
The Dirac operators $D$ and \(D^{\dagger}\) have the following polar decompositions:
\begin{eqnarray*}
D &=& \gamma_+ \left( \frac{\partial}{\partial n} - \Do \right) ,\\[0.2cm]
D^\dagger &=& \left( \frac{\partial}{\partial n} + \Do + \frac{3}{2|z|} \right)\gamma_- \,,
\end{eqnarray*}
where the tangential (nonchiral) Dirac operator $\Do$ is given by
\begin{equation*}
\Do = - \left[ \sum^{3} _{i = 1} \left( \frac{1}{|z|} \theta_i \right) \cdot \nabla_{\frac{1}{|z|} \theta_i} \right]
= \frac{1}{|z|} 
\begin{pmatrix}
-\frac{1}{2} \theta & \,e_+ \\[0.2cm]
-e_- &\, \frac{1}{2} \theta
\end{pmatrix}.
\end{equation*}
\end{proposition}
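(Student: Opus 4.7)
The plan is to verify both polar decompositions by direct matrix computation, leveraging the fact that $\gamma^2 = I$. A direct check using the matrices in (\ref{gamma}) and $|z_1|^2 + |z_2|^2 = |z|^2$ gives $\gamma_+\gamma_- = \gamma_-\gamma_+ = I$, so it suffices to establish the equivalent identities
\[
\gamma_- D = \frac{\partial}{\partial n} - \Do, \qquad D^\dagger \gamma_+ = \frac{\partial}{\partial n} + \Do + \frac{3}{2|z|},
\]
after which left-multiplying the first by $\gamma_+$ and right-multiplying the second by $\gamma_-$ recover the two decompositions.

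For the first identity I would expand the matrix product
\[
\gamma_- D = \frac{1}{|z|}\begin{pmatrix} z_1 & z_2 \\ -\bar z_2 & \bar z_1\end{pmatrix}\begin{pmatrix} \partial_{z_1} & -\partial_{\bar z_2} \\ \partial_{z_2} & \partial_{\bar z_1}\end{pmatrix}
\]
and identify each entry. The diagonal entries become $\nu/|z|$ and $\bar\nu/|z|$, which by (\ref{radial}) and $\theta = \nu - \bar\nu$ rewrite as $\partial/\partial n \pm \theta/(2|z|)$; the off-diagonal entries are recognised as $-e_+/|z|$ and $e_-/|z|$ directly from (\ref{triple1})--(\ref{triple2}). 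Together these four entries reproduce $(\partial/\partial n)I - \Do$.

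The second identity is more delicate, since $D^\dagger$ is a matrix of differential operators that must be distributed over the function coefficients of $\gamma_+$. Writing $(D^\dagger \gamma_+)\varphi = D^\dagger(\gamma_+\varphi)$ and applying Leibniz entry by entry, the composition splits into a principal part, where the derivatives pass through $\gamma_+$ and (by the same pattern of matrix multiplication as above) yield $\partial/\partial n + \Do$, plus a zeroth-order remainder from derivatives of the coefficients $z_i, \bar z_i, 1/|z|$ in $\gamma_+$. Using $\partial_{z_i}(|z|^{-1}) = -\bar z_i/(2|z|^3)$, $\partial_{\bar z_i}(|z|^{-1}) = -z_i/(2|z|^3)$ and $\partial_{z_i}z_j = \partial_{\bar z_i}\bar z_j = \delta_{ij}$, the off-diagonal contributions cancel (they involve $z_1 z_2 - z_2 z_1$ and its conjugate) while the diagonal contributions combine to exactly $(3/(2|z|))I$.

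The main obstacle is precisely this last bookkeeping step: collecting the Leibniz contributions from all four entries of $\gamma_+$, confirming the off-diagonal cancellations, and verifying that the diagonal coefficient is $3/(2|z|)$ rather than any other multiple of $|z|^{-1}$. The integer $3$ is not incidental -- it encodes the dimension of $S^3$ (equivalently, the codimension of $\{0\}$ in $\mathbf{R}^4$) and is the sole source of asymmetry between the polar decompositions of $D$ and $D^\dagger$.
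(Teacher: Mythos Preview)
Your proposal is correct and follows essentially the same route as the paper: a direct matrix computation relating the coordinate derivatives in $D,\,D^\dagger$ to the radial and tangential fields $\nu,\bar\nu,e_+,e_-$ via $\gamma_\pm$. The paper phrases it as substituting identities such as $\partial_{z_1}=\frac{1}{|z|^2}(\bar z_1\nu-z_2 e_-)$ into $D$ and $D^\dagger$, while you equivalently compute $\gamma_- D$ and $D^\dagger\gamma_+$ and identify the entries; your explicit Leibniz bookkeeping for the $\tfrac{3}{2|z|}$ term is exactly the step the paper's one-line proof suppresses.
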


\begin{proof}
In the matrix expression (\ref{Dirac}) of $D$ and $D^\dagger$, we have \(\frac{\partial}{\partial z_1} = \frac{1}{|z|^2} (\Bar{z_1} \nu - z_2 e_-)\)  etc., 
and we have the 
desired formulas.
\end{proof}
The tangential Dirac operator on the sphere \(S^3 = \{|z| = 1 \}\);
 \begin{equation*}
 \Do | S^3 : C^{\infty} (S^3, S^+) \longrightarrow C^{\infty} (S^3, S^+)
 \end{equation*}
is a self adjoint elliptic differential operator.\\
We put, for $m = 0,1,2, \cdots ; l = 0,1, \cdots , m$ and $k=0,1, \cdots , m+1$,
\begin{eqnarray}\label{basespinor}
\phi^{+(m,l,k)} (z) &=& \sqrt{\frac{(m+1-k)!}{k!l!(m-l)!}} \begin{pmatrix} k v^{k-1} _{(l, m-l)}\\ \\ -v^{k}_{(l, m-l)} \end{pmatrix},\\ \notag \\
\phi^{-(m,l,k)} (z) &=& \sqrt{ \frac{(m+1-k)!}{k!l!(m-l)!}} \left(\frac{1}{\vert z\vert^2}\right)^{m+2}\begin{pmatrix} w^{k} _{(m+1-l,l)}\\ \\ w^{k}_{(m-l,l+1)} \end{pmatrix}.
\end{eqnarray}
\(\phi^{+(m,l,k)}\)  is a harmonic spinor on \(\mathbf{C}^2\) and  \(\phi^{-(m,l,k)}\) is a harmonic spinor on \(\mathbf{C}^2 \backslash \{0\}\) that is regular  at infinity.   

From Proposition \ref{a}
we have the following 
\begin{proposition}
On $S^3 = \{|z| = 1\}$ we have:
 \begin{eqnarray}
 \Do \phi^{+(m,l,k)} &=& \frac{m}{2} \phi^{+(m,l,k)} \,,\\[0.2cm]
 \Do \phi^{-(m,l,k)} &=& -\frac{m+3}{2} \phi^{-(m,l,k)} \, .
 \end{eqnarray}
The eigenvalues of $\,\Do$ are 
 \begin{equation}
 \frac{m}{2} \,,\quad - \frac{m+3}{2} \, ; \quad  m = 0, 1, \cdots,
 \end{equation}
and the multiplicity of each eigenvalue is equal to $(m+1)(m+2)$.\\
The set of eigenspinors
 \begin{equation}
 \left\{ \frac{1}{\sqrt{2}\pi }\phi^{+(m,l,k)}, \quad \frac{1}{\sqrt{2}\pi }\phi^{-(m,l,k)} \,
 ; \quad m = 0, 1, \cdots , \,  0\leq l\leq  m,\, 0\leq k\leq m+1\right\}
 \end{equation}
forms a complete orthonormal system of $L^2 (S^3, S^+)$.\\
\end{proposition}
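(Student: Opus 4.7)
The plan is to exploit the homogeneity of the spinors to read off the two eigenvalues at once from the polar decomposition $D=\gamma_+(\partial_n-\Do)$, then to handle multiplicities and orthogonality by index-counting and self-adjointness, and finally to deduce $L^2$-completeness from completeness of the harmonic-polynomial bases on $S^3$.

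Observe first that $\phi^{+(m,l,k)}$ has both components of total degree $m$ in $(z,\bar z)$ (each $v^{j}_{(l,m-l)}$ being a degree-$m$ harmonic polynomial obtained by applying $e_-$ to $z_1^l z_2^{m-l}$), so it is homogeneous of degree $m$ on $\mathbf{C}^2$; analogously $\phi^{-(m,l,k)}$, being $(1/|z|^2)^{m+2}$ times a vector of degree-$(m+1)$ harmonic polynomials in the $w$-basis, is homogeneous of degree $(m+1)-(2m+4)=-m-3$ on $\mathbf{C}^2\setminus\{0\}$. Since $\partial_n=\tfrac{1}{2|z|}(\nu+\bar\nu)$ acts as multiplication by $d/(2|z|)$ on any degree-$d$ homogeneous function, we get $\partial_n\phi^{+(m,l,k)}|_{S^3}=\tfrac{m}{2}\phi^{+(m,l,k)}|_{S^3}$ and $\partial_n\phi^{-(m,l,k)}|_{S^3}=-\tfrac{m+3}{2}\phi^{-(m,l,k)}|_{S^3}$. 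Combining with the harmonicity $D\phi^{\pm(m,l,k)}=0$ (stated when the spinors were introduced) and the invertibility of $\gamma_+$, we conclude $\Do\phi^{\pm(m,l,k)}|_{S^3}=\partial_n\phi^{\pm(m,l,k)}|_{S^3}$, yielding the claimed eigenvalues. As a cross-check, the $\phi^{+}$ case is also immediate by direct substitution: applying $\Do|_{S^3}=\bigl(\begin{smallmatrix}-\frac{1}{2}\theta&e_+\\-e_-&\frac{1}{2}\theta\end{smallmatrix}\bigr)$ to $\bigl(\begin{smallmatrix}kv^{k-1}_{(l,m-l)}\\-v^{k}_{(l,m-l)}\end{smallmatrix}\bigr)$ and invoking Proposition \ref{a} simplifies the top entry to $\tfrac{km}{2}v^{k-1}_{(l,m-l)}$ and the bottom to $-\tfrac{m}{2}v^{k}_{(l,m-l)}$, matching $\tfrac{m}{2}\phi^{+(m,l,k)}$.

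Counting $(l,k)$ with $0\le l\le m$ and $0\le k\le m+1$ gives the asserted multiplicity $(m+1)(m+2)$, with linear independence inherited from that of the component polynomials. Orthogonality across distinct eigenvalues is automatic from self-adjointness of $\Do|_{S^3}$, while orthogonality within a single eigenvalue and the precise normalization factor $\sqrt{(m+1-k)!/(k!\,l!\,(m-l)!)}$ follow by explicit $L^2(S^3)$-norm computations on the component polynomials using the stated orthonormality of $\{\tfrac{1}{\sqrt{2}\pi}v^k_{(l,m-l)}\}$. Completeness in $L^2(S^3,S^+)\simeq L^2(S^3,\mathbf{H})$ follows either by applying the spectral theorem to the elliptic self-adjoint operator $\Do$ on the compact manifold $S^3$ and then matching dimensions to verify that the listed $\phi^{\pm(m,l,k)}$ exhaust each eigenspace, or by appealing componentwise to density of the harmonic polynomials $\{v^k_{(l,m-l)}\}\cup\{w^k_{(l,m-l)}\}$ in $L^2(S^3)$. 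The main obstacle I anticipate is precisely this completeness-and-dimension matching step: one must argue that no eigenspinor has been overlooked, which needs either a Laurent-expansion theorem for $S^+$-valued harmonic functions on annuli in $\mathbf{C}^2$ (presumably established in the spinor-analysis references \cite{Ko2,Ko3}) or an a priori count via $Spin(4)=SU(2)\times SU(2)$-representation theory of the eigenspaces of $\Do|_{S^3}$.
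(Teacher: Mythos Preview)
Your argument is correct, and your primary route to the eigenvalues differs from the paper's. The paper simply points to Proposition~\ref{a} (the explicit formulas for $e_{\pm}$ and $\theta$ acting on the $v^{k}_{(l,m-l)}$) and leaves the rest as a direct matrix computation with $\Do|_{S^3}=\begin{pmatrix}-\tfrac12\theta & e_+\\ -e_- & \tfrac12\theta\end{pmatrix}$; this is exactly what you do in your ``cross-check'' for $\phi^{+}$. Your main approach instead combines the polar decomposition $D=\gamma_+(\partial_n-\Do)$, the stated harmonicity $D\phi^{\pm(m,l,k)}=0$, invertibility of $\gamma_+$, and homogeneity of degree $m$ (resp.\ $-m-3$) to conclude $\Do\phi^{\pm}|_{S^3}=\partial_n\phi^{\pm}|_{S^3}$ and then read off the eigenvalues from Euler's identity. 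This is cleaner in that it treats $\phi^{+}$ and $\phi^{-}$ uniformly without touching the $v^k$ or $w^k$ formulas; the cost is that it presupposes $D\phi^{\pm}=0$, whose verification (for $\phi^{-}$ in particular) is itself a computation of the same flavor as the paper's direct method. On multiplicities, orthogonality, and completeness the paper offers no argument at all, so your discussion---index counting, self-adjointness for orthogonality across eigenvalues, and either the elliptic spectral theorem plus dimension matching or componentwise density of harmonic polynomials for completeness---goes well beyond what the paper provides, and your flagged concern about exhausting each eigenspace is exactly the point that would need the references \cite{Ko2,Ko3}.
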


The constant for normalization of $\phi ^{\pm(m, l, k)}$ is determined by the integral:
\begin{equation}
\int _{S^3} |z^a _1 z^b _2|^2 d \sigma = 2 \pi^2 \frac{a! b!}{(a + b + 1)!}\, ,
\end{equation}
where $\sigma$ is the surface measure of the unit sphere $S^3=\{\vert z\vert=1\}$:
\begin{equation}
\int _{S^3} d \sigma_3 = 2 \pi ^2 .
\end{equation}

\subsection{Spinors of  Laurent polynomial type}

If \(\varphi\) is a harmonic spinor on 
 \(\mathbf{C}^2\setminus\{0\}\) then we have the expansion
 \begin{equation}
 \varphi(z)=\sum_{m,l,k}\,C_{+(m,l,k)}\phi^{+(m,l,k)}(z)+\sum_{m,l,k}\,C_{-(m,l,k)}\phi^{-(m,l,k)}(z),\label{Laurentspinor}
 \end{equation}
that is uniformly convergent on any compact subset of \(\mathbf{C}^2\setminus\{0\}\). 
The coefficients \(C_{\pm(m,l,k)}\) are given by the formula:
\begin{equation}\label{coef}
C_{\pm(m,l,k)}=\,\frac{1}{2\pi^2}\int _{S^3}\, \langle \varphi,\,\phi^{\pm(m,l,k)}\rangle\,d\sigma,
\end{equation}
where \(\langle\,,\,\rangle\) is the inner product of \(S^+\).    

\begin{lemma} \label{trace}
 \begin{eqnarray}
 \int_{S^3}\,tr\,\varphi \,d\sigma&=&4\pi^2Re.C_{+(0,0,1)},\label{coefficient1}\\[0.2cm]
  \int_{S^3}\,tr\,J\varphi \,d\sigma&=&4\pi^2Re.C_{+(0,0,0)}\nonumber.
  \end{eqnarray}
\end{lemma}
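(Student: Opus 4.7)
The plan is to reduce both identities to the coefficient formula (\ref{coef}) by expressing $\mathrm{tr}\,\varphi$ and $\mathrm{tr}\,J\varphi$ as scalar functions that coincide with $\langle \varphi,\phi^{+(0,0,1)}\rangle$ or $\langle \varphi,\phi^{+(0,0,0)}\rangle$ up to taking real parts. The key observation I will use throughout is that for $\mathbf{z}=\left(\begin{smallmatrix}a\\ b\end{smallmatrix}\right)\in\mathbf{H}$, the trace formula gives $\mathrm{tr}\,\mathbf{z}=a+\bar a=2\,\mathrm{Re}\,a$; in particular the trace extracts (twice the real part of) the first component.

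First I would write $\varphi(z)=\left(\begin{smallmatrix}u(z)\\ v(z)\end{smallmatrix}\right)$, so that $\mathrm{tr}\,\varphi=2\,\mathrm{Re}\,u$. Since $\phi^{+(0,0,1)}=\left(\begin{smallmatrix}1\\ 0\end{smallmatrix}\right)$ and the Hermitian inner product on $S^{+}$ satisfies $\langle\varphi,\phi^{+(0,0,1)}\rangle=u$, the coefficient formula (\ref{coef}) yields
\[
\int_{S^{3}}u\,d\sigma=\int_{S^{3}}\langle\varphi,\phi^{+(0,0,1)}\rangle\,d\sigma=2\pi^{2}C_{+(0,0,1)}.
\]
Taking real parts gives $\int_{S^{3}}\mathrm{tr}\,\varphi\,d\sigma=2\,\mathrm{Re}\int_{S^{3}}u\,d\sigma=4\pi^{2}\,\mathrm{Re}\,C_{+(0,0,1)}$, which is the first identity. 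Note that convergence of the Laurent series (\ref{Laurentspinor}) on $S^{3}$ (in an $L^{2}$-sense at least) combined with orthogonality of the basis $\{\phi^{\pm(m,l,k)}/(\sqrt 2\pi)\}$ is exactly what is encoded in (\ref{coef}), so no separate interchange of sum and integral is needed.

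For the second identity I would determine the action of $J$ (left multiplication by the quaternion $j=\left(\begin{smallmatrix}0\\ 1\end{smallmatrix}\right)$) on $S^{+}$ directly from the product law (\ref{productlaw}): one finds $J\left(\begin{smallmatrix}u\\ v\end{smallmatrix}\right)=\left(\begin{smallmatrix}-v\\ u\end{smallmatrix}\right)$, so $\mathrm{tr}(J\varphi)=-2\,\mathrm{Re}\,v$. Since $\phi^{+(0,0,0)}=\left(\begin{smallmatrix}0\\ -1\end{smallmatrix}\right)$, one has $\langle\varphi,\phi^{+(0,0,0)}\rangle=-v$, and (\ref{coef}) gives $\int_{S^{3}}(-v)\,d\sigma=2\pi^{2}C_{+(0,0,0)}$. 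Taking the real part then produces $\int_{S^{3}}\mathrm{tr}(J\varphi)\,d\sigma=4\pi^{2}\,\mathrm{Re}\,C_{+(0,0,0)}$.

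The only real obstacle is bookkeeping: one must pin down the sign and conjugation conventions for $J$, for the Hermitian pairing $\langle\cdot,\cdot\rangle$ on $S^{+}$, and for the trace on $\mathfrak{mj}(2,\mathbf{C})$, so that the two vectors $\phi^{+(0,0,1)}$ and $\phi^{+(0,0,0)}$ are precisely the ones picked out by $\mathrm{tr}$ and $\mathrm{tr}\circ J$ respectively. Once these conventions are aligned the lemma is immediate from (\ref{coef}); no analytical subtlety beyond that inherent in the expansion (\ref{Laurentspinor}) is required.
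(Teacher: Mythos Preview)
Your approach is exactly the paper's: both derive the lemma directly from the coefficient formula (\ref{coef}) by noting that $\phi^{+(0,0,1)}=\left(\begin{smallmatrix}1\\0\end{smallmatrix}\right)$ and $J=\phi^{+(0,0,0)}=\left(\begin{smallmatrix}0\\-1\end{smallmatrix}\right)$ are the spinors singled out by $\mathrm{tr}$ and $\mathrm{tr}\circ J$. One small bookkeeping slip to fix: in the paper $J=\left(\begin{smallmatrix}0\\-1\end{smallmatrix}\right)$ corresponds to $-j$, not $j$, so the product law (\ref{productlaw}) gives $J\varphi=\left(\begin{smallmatrix}v\\-u\end{smallmatrix}\right)$ rather than $\left(\begin{smallmatrix}-v\\u\end{smallmatrix}\right)$ --- precisely the sort of sign check you yourself flagged as the only real hazard.
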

The formulas follow from (\ref{coef}) if we take
\(\phi^{ +(0,0,1)}=\left(\begin{array}{c}1\\0\end{array}\right)\) and \(J=\phi^{ +(0,0,0)}=\left(\begin{array}{c}0\\-1\end{array}\right)\),

\begin{definition}
\begin{enumerate}
\item
We call the series (\ref{Laurentspinor})  {\it a spinor of  Laurent polynomial type} if only finitely many coefficients \(C_{\pm (m,l,k)}\)  are non-zero .   The space of spinors of Laurent polynomial type is denoted by   \(\mathbf{C}[\phi ^{\pm}] \).  
 \item
   For a spinor of Laurent polynomial type \(\varphi\) we call the vector \({\rm res}\,\varphi=\,\left(\begin{array}{c}-C_{-(0,0,1)}\\C_{-(0,0,0)}\end{array}\right)\) the residue at \(0\) of \(\varphi\).  
   \end{enumerate}
\end{definition} 

 We have the residue formula, \cite{Ko3}.
\begin{equation}
{\rm res}\,\varphi=\frac{1}{2\pi^2}\int_{S^3}\gamma_+(z)\varphi(z)\sigma(dz). \end{equation}

\begin{remark}\par
\medskip
To develop the spinor analysis on the 4-sphere \(S^4\) we patch two local coordinates \(\mathbf{C}^2_z\) and \(\mathbf{C}^2_w\) together by the inversion \(w=-\frac{\overline z}{|z|^2}\).   This is a conformal transformation with the conformal weight \(u=-\log |z|^2\).   An even spinor on a subset \(U\subset S^4\) is a pair of \(\phi\in C^{\infty}(U\cap \mathbf{C}^2\times \Delta)\) and \(\widehat\phi\in C^{\infty}(U\cap \widehat{\mathbf{C}}^2\times \Delta)\) such that \(\widehat\phi(w)=\overline{|z|^3(\gamma_+\phi)(z)}\) for \(w=-\frac{\overline z}{|z|^2}\).    
Let \(\varphi\) be a spinor of Laurent polynomial type on \(\mathbf{C}^2\setminus {0}=\widehat{\mathbf{C}}^2\setminus\widehat{0}\).   The coefficient  \(C_{\pm(m,l,k)}\)  of  \(\varphi\) and the coefficient  \(\widehat{C}_{\pm(m,l,k)}\,\) of \(\widehat\varphi\) are related by the formula:    
\begin{equation}
\widehat{C}_{-(m,l,k)}=\,\overline C_{+(m,l,k)},\qquad \widehat{C}_{+(m,l,k)}=\,\overline C_{-(m,l,k)}.
\end{equation}
\end{remark}
\begin{proposition}
The residue of 
\(\widehat{\varphi}\) is related to the trace of \(\varphi\,\), (\ref{trace}),  by
\begin{equation}
\,res\, \widehat\varphi\,=\,\frac{1}{2\pi^2}\int _{S^3}\,\widehat \varphi\,d\sigma\,=\,\left(\begin{array}{c}\overline C_{+(0,0,1)}\\ - \overline C_{+(0,0,0)}\end{array}\right)\,.
\end{equation}
\end{proposition}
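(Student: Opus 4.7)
The plan is to reduce the two equalities in the statement to a single termwise computation in the Laurent expansion (\ref{Laurentspinor}) and to exploit the $L^{2}$-orthogonality on $S^{3}$ of the basis $\{\phi^{\pm(m,l,k)}\}$ given by the preceding proposition.

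First I would evaluate the middle term $\tfrac{1}{2\pi^{2}}\int_{S^{3}}\varphi\,d\sigma$ directly. Restricted to $S^{3}$, the components of each $\phi^{+(m,l,k)}$ are (up to normalisation) the harmonic polynomials $v^{k-1}_{(l,m-l)}$ and $v^{k}_{(l,m-l)}$ of degree $m$, while those of $\phi^{-(m,l,k)}$ reduce, after the factor $|z|^{-(2m+4)}$ is swallowed on the unit sphere, to harmonic polynomials $w^{k}_{(\cdot,\cdot)}$ of degree $m+1$. Since harmonic polynomials of positive degree are $L^{2}(S^{3})$-orthogonal to the constants, every term of the Laurent series integrates to zero except the two constant modes $\phi^{+(0,0,1)}=\left(\begin{smallmatrix}1\\0\end{smallmatrix}\right)$ and $\phi^{+(0,0,0)}=\left(\begin{smallmatrix}0\\-1\end{smallmatrix}\right)$. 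Feeding the coefficient formula (\ref{coef}) for these two modes into the integral, together with $\int_{S^{3}}d\sigma=2\pi^{2}$, produces the column vector $\left(\begin{smallmatrix}\overline{C}_{+(0,0,1)}\\ -\overline{C}_{+(0,0,0)}\end{smallmatrix}\right)$. Taking traces of both sides recovers Lemma \ref{trace}, which provides a useful consistency check.

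For the left-hand equality I would combine the definition $\mathrm{res}\,\psi=\left(\begin{smallmatrix}-C_{-(0,0,1)}(\psi)\\ C_{-(0,0,0)}(\psi)\end{smallmatrix}\right)$, applied to $\psi=\widehat{\varphi}$, with the conjugation rule $\widehat{C}_{-(m,l,k)}=\overline{C}_{+(m,l,k)}$ of the preceding Remark. This identification immediately returns the same column vector as above, closing the chain of equalities.

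The main obstacle is not the orthogonality step, which is routine, but the bookkeeping of signs and complex conjugations. One must reconcile three conventions simultaneously: the anti-linearity built into the Hermitian inner product appearing in (\ref{coef}), the sign choices in the definition of $\mathrm{res}$, and the conformal transformation law $\widehat{\phi}(w)=\overline{|z|^{3}(\gamma_{+}\phi)(z)}$ under the inversion $w=-\overline{z}/|z|^{2}$ that underlies $\widehat{C}_{-}=\overline{C}_{+}$. Verifying that these three choices combine so that the two independent computations land on the same conjugated pair is essentially the content of the proposition.
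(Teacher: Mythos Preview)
The paper does not supply an explicit proof of this proposition; it is stated as an immediate consequence of the definition of the residue together with the coefficient relation $\widehat C_{-(m,l,k)}=\overline{C}_{+(m,l,k)}$ from the preceding Remark. Your argument fills in exactly this intended reasoning: the left-hand equality comes from applying the residue definition to $\widehat\varphi$ and invoking the Remark, and the middle equality from $L^{2}(S^{3})$-orthogonality of the $\phi^{\pm(m,l,k)}$, which kills all but the two constant modes $\phi^{+(0,0,1)}$ and $\phi^{+(0,0,0)}$.
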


\subsection{ Algebraic generators of   \(\mathbf{C}[\phi ^{\pm}] \) }

In the following we show that  \( \mathbf{C}[\,\phi^{\pm}\,]\) restricted to \(S^3\) becomes an algebra.         
The multiplication of two harmonic polynomials on \(\mathbf{C}^2\)  is not harmonic but its restriction to  \(S^3\) is again the restriction to \(S^3\) of some harmonic polynomial.    We shall see that this yields the fact that \( \mathbf{C}[\phi^{\pm}\,]\), restricted to \(S^3\),  becomes an associative subalgebra of \(S^3\mathbf{H}\).   
Before we give the proof we look at examples that convince us of the necessity of the restriction to \(S^3\) .

{\bf Example 1}\\
\(\phi^{+(1,0,1)}\cdot \phi^{-(0,0,0)}\,\) is decomposed to the sum
\[
\phi^{+(1,0,1)}(z)\cdot \phi^{-(0,0,0)}(z)=
\frac{1}{|z|^4}(\,\frac{2}{3}\phi^{+(2,0,1)}(z)+\frac{\sqrt{2}}{3}\phi^{+(2,1,2)}(z)\,)+
\frac{1}{|z|^2}
\frac{1}{2}\phi^{+(0,0,1)}(z),\]
which is not in \(\mathbf{C}[\phi^{\pm}]\).   But the restriction to \(S^3\) is 
\[
\,\frac{2}{3}\phi^{+(2,0,1)}+\frac{\sqrt{2}}{3}\phi^{+(2,1,2)}\,+\frac{1}{2}\phi^{+(0,0,1)}+\frac{1}{6}\phi^{-(1,1,1)}+\frac{1}{3\sqrt{2}}\phi^{-(1,0,0)}\,\in \mathbf{C}[\phi^{\pm}]\vert_{ S^3}.\]
See the table at the end of this subsection.

We start with the following facts.
\begin{enumerate}
\item
We have the product formula for the harmonic polynomials \(v^k_{(a.b)}\).   
\begin{equation}\label{product}
v ^{k_1} _{(a_1,b_1)} v ^{k_2} _{(a_2,b_2)}=\sum_{j=0}^{a_1+a_2+b_1+b_2}C_j\vert z\vert^{2j}\,v ^{k_1+k_2-j} _{(a_1+a_2-j,\,b_1+b_2-j)} \, 
\end{equation}
for some rational numbers  \(C_j=C_j(a_1,a_2,b_1,b_2,k_1,k_2)\), see  Lemma 4.1 of \cite{Ko1}.   
\item
Let \(k=k_1+k_2\), \(a=a_1+a_2\) and \(b=b_1+b_2\).    The above (\ref{product}) yields that, restricted to \(S^3\), the harmonic  polynomial \(v^k_{(a,b)}\) is equal to  a constant multiple of 
\(\,v^{k_1}_{(a_1,b_1)} v^{k_2}_{(a_2,b_2)}\) modulo a linear combination of polynomials \(v^{k-j}_{(a-j,b-j)}\,\), \(1\leq j\leq min(k,a,b)\).

\item
\(\left(\begin{array}{c}v^k_{(l,m-l)}\\0\end{array}\right)\) and \(\left(\begin{array}{c}0\\v^{k+1}_{(l,m-l)}\end{array}\right)\) are written by linear combinations of \(\phi^{+(m,l,k+1)}\) and \(\phi^{-(m-1,k,l)}\).
\item
Therefore the product of two spinors \(\phi^{\pm (m_1,l_1,k_1)}\cdot \phi^{\pm (m_2,l_2,k_2)}\) belongs to \( \mathbf{C}[\phi^{\pm}]\vert_{S^3}\) .  \( \mathbf{C}[\phi^{\pm}]\vert_{S^3}\)  becomes an associative algebra.   
\item
 \(\phi^{\pm(m,l,k)}\) is written by a linear combination of the products  \(\,\phi^{\pm(m_1,l_1,k_1)}\cdot \phi^{\pm(m_2,l_2,k_2)}\) for \(0\leq m_1+m_2\leq m-1\,\),  \(0\leq l_1+l_2\leq l\) and \(0\leq k_1+k_2\leq k\) .
\end{enumerate}
Hence 
we find that the algebra  \( \mathbf{C}[\phi^{\pm}]\vert_{S^3}\)  is generated by the following \(I,\,J,\,\kappa,\,\mu\):
\begin{align}\label{generate}
I=\phi^{ +(0,0,1)}=\left(\begin{array}{c}1\\0\end{array}\right),\,&\quad 
J=\phi^{ +(0,0,0)}=\left(\begin{array}{c}0\\-1\end{array}\right),\nonumber \\[0.2cm]
\kappa= \phi^{ +(1,0,1)}=\left(\begin{array}{c}z_2\\-\overline z_1\end{array}\right),&\quad 
\mu=\phi^{-(0,0,0)}=\left(\begin{array}{c}z_2\\ \overline z_1\end{array}\right) .
\end{align}
 The others are generated by these basis,   For example,
 \begin{align}
\lambda=  \phi^{ +(1,1,1)}=\left(\begin{array}{c} z_1\\ \overline z_2\end{array}\right) =-\kappa J\,
, & \quad 
\nu= \phi^{-(0,0,1)}= \left(\begin{array}{c}-z_1\\  \overline z_2\,
\end{array}\right) =-\mu J , \nonumber\\[0.2cm]
 \phi^{ +(1,0,0)}=\sqrt{2}\left(\begin{array}{c}0\\-z_2\end{array}\right)
 =\frac{1}{\sqrt{2}}J(\kappa+\mu), &\quad
\phi^{+(1,0,2)}=\sqrt{2}\left(\begin{array}{c} \overline z_1\\ 0 \end{array}\right)=\frac{1}{\sqrt{2}}J(\mu-\kappa)\,
,\nonumber\\[0.2cm]
\phi^{ +(1,1,2)}=\sqrt{2}\left(\begin{array}{c}-\overline z_2\\0\end{array}\right)=-\frac{1}{\sqrt{2}}J(\lambda+\nu) , &\quad
 \phi^{ +(1,1,0)}=\sqrt{2}\left(\begin{array}{c}0\\-z_1\end{array}\right)
 =\frac{1}{\sqrt{2}}J(\lambda-\nu),\nonumber \\[0.2cm]
\phi^{ -(1,0,0)}=\sqrt{2}\left(\begin{array}{c}z_2^2\\z_2
\overline z_1\end{array}\right)=\frac{1}{\sqrt{2}}\nu J(\kappa+\mu) , &\quad
\phi^{ -(1,1,0)}=\sqrt{2}\left(\begin{array}{c}z_2\overline z_1\\
\overline z_1^2\end{array}\right)=\frac{1}{\sqrt{2}}\mu J(\mu-\kappa) , \nonumber\\[0.2cm]
\phi^{ -(1,1,2)}=\sqrt{2}\left(\begin{array}{c}-z_1\overline z_2\\
\overline z_2^2\end{array}\right)=\frac{1}{\sqrt{2}}\nu J(\lambda+\nu) , &\quad
\phi^{-(1,0,2)}=\sqrt{2}\left(\begin{array}{c}  z_1^2\\ -z_1\overline z_2 \end{array}\right)=\frac{1}{\sqrt{2}}\mu J(\lambda-\nu)\,
\nonumber\\[0.2cm]
\phi^{ -(1,0,1)}=\left(\begin{array}{c}-2z_1 z_2\\
\vert z_2\vert^2-\vert z_1\vert^2\end{array}\right)=
\frac{\nu}{2}(\kappa+\mu&+ J(\lambda-\nu)) , \quad\,\nonumber\\[0.2cm]
\phi^{-(1,1,1)}=\left(\begin{array}{c} \vert z_2\vert^2-\vert z_1\vert^2\\ 2\overline z_1\overline z_2 \end{array}\right)=
\frac{\mu}{2}(-\kappa+\mu&+J(\lambda+\nu))\, . \quad\,
\nonumber
 \end{align}
 
\medskip

\subsection{2-cocycle on   \(S^3\mathbf{H} \)} 

Let \(S^3\mathbf{H}=Map(S^{3}, \mathbf{H})\,=\,C^{\infty}(S^3, S^+)\) be the set of smooth even spinors on \(S^3\).   
We define the Lie algebra structure on  \(S^3\mathbf{H}\)  after  (\ref{Liebr}), that is, for even spinors   $\phi _1 = \begin{pmatrix} u_1\\ v_1 \end{pmatrix}$ and $\phi _2 = \begin{pmatrix} u_2\\ v_2 \end{pmatrix}$,   we have the Lie bracket
\begin{equation}
\bigl[\,\phi _1\, , \,\phi _2\,\bigr]=
 \begin{pmatrix} \,v_1 \Bar{v} _2 - \Bar{v}_1 v_2 \,\\[0.2cm]
\,(u _ 2 - \Bar {u} _2 ) v _1 - (u _1 - \Bar {u} _1) v_2\, \end{pmatrix} .
\end{equation}

For a \(\varphi=\left(\begin{array}{c}u\\ v\end{array}\right)\in S^3\mathbf {H}\), we put 
\[
\Theta\,\varphi\,=\,\left(\begin{array}{cc}\frac{1}{2\sqrt{-1}}\theta& 0\\[0.3cm] 0&-\frac{1}{2\sqrt{-1}}\overline \theta\end{array}\right)\left(\begin{array}{c}u\\ v\end{array}\right)\,=\,\frac{1}{2\sqrt{-1}}\,\left(\begin{array}{c}\,\theta\, u\\[0.3cm] \,\theta\, v\end{array}\right).\]

\begin{lemma}~~ For any  \(\phi,\,\psi \in S^3\mathbf{H}\),  we have 
\begin{eqnarray}
\Theta\,(\phi\cdot\psi\,)\,&=&\,(\Theta\,\phi)\cdot \,\psi\,+\,\phi\cdot\,(\Theta\,\psi)\,.
\label{leibnitz}
\\[0.2cm]
\int_{S^3}\,\Theta\,\varphi\,d\sigma\,&=&\,0\,.\label{tracetheta}
\end{eqnarray}
\end{lemma}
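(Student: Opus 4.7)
The crux of both statements is a single observation: although written with complex derivatives, the renormalized operator $X := \tfrac{1}{2\sqrt{-1}}\,\theta$ is a \emph{real} vector field on $\mathbf{C}^2$, tangent to $S^3$, and Killing with respect to the round metric. This is visible in two equivalent ways. From (\ref{triple3}) we have $\theta = 2\sqrt{-1}\,\theta_3 = 2\sqrt{-1}\,dR(\tfrac{1}{2}e_3)$, so $X$ is (up to sign) the right-invariant vector field on $SU(2)\simeq S^3$ generated by $\tfrac{1}{2}e_3\in\mathfrak{su}(2)$. Equivalently, a direct inspection of $\theta = z_1\partial_{z_1}+z_2\partial_{z_2}-\bar z_1\partial_{\bar z_1}-\bar z_2\partial_{\bar z_2}$ yields both the conjugation identity $\overline{\theta f}=-\theta\bar f$ for every smooth $f$, and the tangency relation $\theta(|z|^2)=0$.

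For (\ref{leibnitz}) I would expand both sides componentwise using the quaternion product (\ref{productlaw}). Writing $\phi=\begin{pmatrix}u_1\\v_1\end{pmatrix}$ and $\psi=\begin{pmatrix}u_2\\v_2\end{pmatrix}$, one has $\phi\cdot\psi=\begin{pmatrix}u_1u_2-\bar v_1 v_2\\ \bar u_1 v_2+v_1 u_2\end{pmatrix}$. Applying $\theta$ and using the scalar Leibniz rule, the top component of $\theta(\phi\cdot\psi)$ becomes $(\theta u_1)u_2+u_1(\theta u_2)-(\theta\bar v_1)v_2-\bar v_1(\theta v_2)$. Now substitute $\theta\bar v_1 = -\overline{\theta v_1}$, divide by $2\sqrt{-1}$, and note that $\overline{(2\sqrt{-1})^{-1}}=-(2\sqrt{-1})^{-1}$, so $\overline{(\Theta\phi)_2}=\tfrac{\theta\bar v_1}{2\sqrt{-1}}$. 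The four resulting summands then match term by term with the top component of $(\Theta\phi)\cdot\psi+\phi\cdot(\Theta\psi)$ as read off from (\ref{productlaw}). The bottom component is handled identically.

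For (\ref{tracetheta}) I would invoke that $X$ is Killing on $S^3$. A Killing vector field preserves the volume form, so $\mathrm{div}(X)=0$ with respect to the round surface measure $d\sigma$; since $S^3$ is closed, the divergence theorem yields $\int_{S^3}(Xf)\,d\sigma = -\int_{S^3} f\,\mathrm{div}(X)\,d\sigma = 0$ for every smooth $f$. Applied to each component of $\varphi=\begin{pmatrix}u\\v\end{pmatrix}$, this gives the desired vanishing. (Alternatively, one could expand $\varphi$ in the basis $\{\phi^{\pm(m,l,k)}\}$: each $v^k_{(l,m-l)}$ with $m\geq 1$ is $L^2$-orthogonal to the constant $v^0_{(0,0)}\equiv 1$, and the surviving $m=0$ term is $\theta$-annihilated, so every term integrates to zero.)

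The only subtle point—more a bookkeeping matter than an obstacle—is the conjugation identity $\theta\bar f=-\overline{\theta f}$. This is exactly what forces the prefactor $\tfrac{1}{2\sqrt{-1}}$ in the definition of $\Theta$: without it the signs in the Leibniz computation for the quaternion product (which pairs $u_1$ with $u_2$ but $\bar v_1$ with $v_2$, etc.) would fail to balance. Once this symmetry is internalized, (1) is a routine component-by-component identity and (2) is a direct application of the divergence theorem for Killing fields on a closed manifold.
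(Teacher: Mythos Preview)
Your argument is correct and follows the same line as the paper's (very terse) proof: the paper leaves (\ref{leibnitz}) unproved and reduces (\ref{tracetheta}) to the scalar fact $\int_{S^3}\theta f\,d\sigma=0$ without further justification, whereas you supply the componentwise check for (\ref{leibnitz}) and the Killing-field/divergence-theorem reason for the scalar vanishing. The key identity $\theta\bar f=-\overline{\theta f}$ (equivalently, that $\tfrac{1}{2\sqrt{-1}}\theta=\theta_3$ is a real vector field) is exactly the point, and your treatment of the conjugation bookkeeping in the quaternion product is accurate.
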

The second assertion follows from the fact 
\[\int_{S^3}\,\theta f\,d\sigma\,=\,0\,,\]
for any function \(f\) on \(S^3\).

\begin{proposition}~~
\begin{eqnarray*}
2\sqrt{-1}\,\Theta\,\phi^{+(m,l,k)}\,&=&\,\,\frac{m(m+1-k)}{m+1}\,\,\phi^{+(m,l,k)}\,+\,2(-1)^l\,\frac{\sqrt{k(m+1-k)}}{m+1}\,\phi^{-(m-1,k-1,l)}\,,\\[0.3cm]
2\sqrt{-1}\,\Theta\,\phi^{-(m,l,k)}\,&=&\,(m-2l)\frac{m+3}{m+2}\,\phi^{-(m,l,k)}\,+\,2(-1)^k\,
\frac{\sqrt{(l+1)(m+1-l)}}{m+2}\,\phi^{+(m+1,k,l+1)}\,,
\end{eqnarray*}
 on \(S^3\).
\end{proposition}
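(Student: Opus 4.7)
The plan is to compute $2\sqrt{-1}\,\Theta\phi^{\pm(m,l,k)}$ componentwise by letting $\theta$ act on the explicit expressions in~(\ref{basespinor}), and then to re-express the result in the orthogonal basis $\{\phi^{\pm(m,l,k)}\}$. The key observation is that $\theta=z_1\partial_{z_1}+z_2\partial_{z_2}-\bar z_1\partial_{\bar z_1}-\bar z_2\partial_{\bar z_2}$ is the grading operator that measures holomorphic minus antiholomorphic degree; in particular $\theta|z|^2=0$, so the prefactor $|z|^{-2m-4}$ in $\phi^{-(m,l,k)}$ is annihilated. By Proposition~\ref{a} we have $\theta\,v^k_{(l,m-l)}=(m-2k)\,v^k_{(l,m-l)}$, and a direct check shows $[\theta,\hat e_-]=0$, whence $\theta\,w^k_{(l,m-l)}=(2l-m)\,w^k_{(l,m-l)}$ by the same grading argument applied to $z_2^l\bar z_1^{m-l}$.

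For the first formula I would plug in the definition of $\phi^{+(m,l,k)}$. Since the two components $k\,v^{k-1}_{(l,m-l)}$ and $-v^k_{(l,m-l)}$ carry different $\theta$-eigenvalues $m-2k+2$ and $m-2k$, applying $\theta$ does not give a scalar multiple of $\phi^{+(m,l,k)}$. Subtracting $(m-2k+\frac{2k}{m+1})\phi^{+(m,l,k)}$ is arranged precisely so that the residue is proportional to the column vector $\bigl(\,(m-k+1)v^{k-1}_{(l,m-l)},\; v^k_{(l,m-l)}\,\bigr)^{T}$. Now I would use the conversion formula $w^k_{(l,m-l)}=(-1)^k\tfrac{l!}{(m-k)!}v^{m-l}_{(k,m-k)}$ from Section~2.1 applied to $w^l_{(m-k+1,k-1)}$ and $w^l_{(m-k,k)}$, which are exactly the components entering $\phi^{-(m-1,k-1,l)}$ on $S^3$. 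A short calculation with the factorial prefactors reduces the residue to $(-1)^l\frac{2\sqrt{k(m+1-k)}}{m+1}\phi^{-(m-1,k-1,l)}$.

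The second formula is handled in parallel. Since $\theta$ annihilates $|z|^{-2m-4}$, it acts on $\phi^{-(m,l,k)}$ only through its action on $w^k_{(m+1-l,l)}$ and $w^k_{(m-l,l+1)}$, yielding eigenvalues $m+1-2l$ and $m-2l-1$ respectively. Again one subtracts the diagonal part, chosen here to be $(m-2l)\frac{m+3}{m+2}\phi^{-(m,l,k)}$, so that the residue is proportional to $\bigl(\,(l+1)w^k_{(m+1-l,l)},\;-(m-l+1)w^k_{(m-l,l+1)}\bigr)^{T}$. Applying the $w$--to--$v$ conversion (now with total degree $m+1$) rewrites both entries in terms of $v^l_{(k,m+1-k)}$ and $v^{l+1}_{(k,m+1-k)}$, which are precisely the building blocks of $\phi^{+(m+1,k,l+1)}$. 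The factorial arithmetic collapses to the factor $(-1)^k\frac{2\sqrt{(l+1)(m+1-l)}}{m+2}$.

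The main obstacle is not conceptual but bookkeeping: correctly tracking the three index permutations when going from $\phi^+$ to $\phi^-$ (or vice versa) — note how the triple $(m,l,k)$ becomes $(m-1,k-1,l)$ in the first identity and $(m+1,k,l+1)$ in the second — and verifying that the factorial square roots combine to give precisely the stated coefficients $\tfrac{2\sqrt{k(m+1-k)}}{m+1}$ and $\tfrac{2\sqrt{(l+1)(m+1-l)}}{m+2}$, with the correct signs $(-1)^l$ and $(-1)^k$ arising from the $w$--to--$v$ conversion. The two diagonal coefficients $m-2k+\tfrac{2k}{m+1}$ and $(m-2l)\tfrac{m+3}{m+2}$ are in fact forced: they are the unique scalars that make the residue proportional to a single basis spinor of opposite chirality.
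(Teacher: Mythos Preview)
The paper states this proposition without proof, so there is nothing to compare against directly. Your approach is correct and is essentially the natural one: apply $\theta$ componentwise using the eigenvalue relations from Proposition~\ref{a}, then rewrite the off-diagonal residue in the opposite family via the conversion $w^{k}_{(l,m-l)}=(-1)^{k}\tfrac{l!}{(m-k)!}v^{m-l}_{(k,m-k)}$. Your observation that $[\theta,\hat e_{-}]=0$ (left and right $\mathfrak{su}(2)$-actions commute), giving $\theta\,w^{k}_{(l,m-l)}=(2l-m)\,w^{k}_{(l,m-l)}$, is exactly what is needed for the second identity, and your remark that $\theta\,|z|^{2}=0$ correctly disposes of the prefactor $|z|^{-2m-4}$. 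I checked the factorial bookkeeping in both cases and the coefficients $\tfrac{2\sqrt{k(m+1-k)}}{m+1}$, $\tfrac{2\sqrt{(l+1)(m+1-l)}}{m+2}$ and the signs $(-1)^{l}$, $(-1)^{k}$ fall out as you describe; the ``diagonal'' constants $m-2k+\tfrac{2k}{m+1}$ and $(m-2l)\tfrac{m+3}{m+2}$ are indeed the unique choices that make the remainder proportional to a single basis spinor of the other type.
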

\par\medskip
Now we shall introduce a non-trivial 2-cocycle on \(S^3\mathbf{H}\,\) .
\begin{definition}
For \(\phi_1\) and \(\phi_2\in S^3\mathbf{H}\,\), we put 
\begin{equation}
c(\phi_1,\phi_2)\,=\,
\,\frac{1}{2\pi^2}\int_{S^3}\,\,tr\,(\,\Theta \phi_1\cdot \phi_2\,)\, d\sigma. 
  \end{equation}
\end{definition}

  {\bf Example.}\\
\begin{equation}
c(\,\frac{1}{\sqrt{2}}\phi^{+(1,1,2)}\,,\,\frac{\sqrt{-1}}{2}(\phi^{+(1,0,1)}+\phi^{-(0,0,0)})\,)
\,=\frac12.
\end{equation}

\begin{proposition}\label{2cycle}
\(c\) defines a non-trivial 2-cocycle on the algebra \(S^3\mathbf{H}\,\).    That is, \(c\) satisfies the following equations:
\begin{eqnarray}
&&c(\phi_1\,\phi_2)\,=\,-\, c(\phi_2,\,\phi_1)\,,\label{asym}\\[0.3cm]
&&c(\phi_1\cdot\phi_2\,,\,\phi_3)+c(\phi_2\cdot\phi_3\,,\,\phi_1\,)+c(\phi_3\cdot\phi_1\,,\,\phi_2\,)=0 .\label{cycle}
\end{eqnarray}
And there is no 1-cochain \(b\) such that \(c(\phi_1.\phi_2)=b(\,[\phi_1,\phi_2]\,)\).
\end{proposition}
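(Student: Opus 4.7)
The proof splits into three independent claims: the antisymmetry (\ref{asym}), the cocycle identity (\ref{cycle}), and non-triviality. I would treat antisymmetry first by applying the Leibnitz rule (\ref{leibnitz}) to obtain $\Theta(\phi_1\phi_2)=(\Theta\phi_1)\phi_2+\phi_1(\Theta\phi_2)$, taking the matrix trace, and integrating over $S^3$. The left-hand side vanishes by (\ref{tracetheta}) applied to the spinor $\phi_1\phi_2$, while the cyclicity of the trace (which is immediate from Proposition~1.2, since $tr$ is the matrix trace on $\mathfrak{mj}(2,\mathbf{C})$) rewrites $tr[\phi_1\cdot\Theta\phi_2]=tr[\Theta\phi_2\cdot\phi_1]$. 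Together this yields $c(\phi_1,\phi_2)+c(\phi_2,\phi_1)=0$.

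For the cocycle identity I would expand each of the three terms $c(\phi_i\phi_{i+1},\phi_{i+2})$ using (\ref{leibnitz}), producing six integrands; trace cyclicity brings each factor $\Theta\phi_i$ to the front, so the six terms collect into three pairs, giving
\[
c(\phi_1\phi_2,\phi_3)+c(\phi_2\phi_3,\phi_1)+c(\phi_3\phi_1,\phi_2)=\frac{2}{2\pi^2}\int_{S^3}tr\bigl[(\Theta\phi_1)\phi_2\phi_3+(\Theta\phi_2)\phi_3\phi_1+(\Theta\phi_3)\phi_1\phi_2\bigr]\,d\sigma.
\]
Iterating (\ref{leibnitz}) identifies the integrand with $tr[\Theta(\phi_1\phi_2\phi_3)]$, and (\ref{tracetheta}) gives zero once more.

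For non-triviality I would exploit the example (\ref{ex}) to exhibit a commuting pair on which $c$ does not vanish. Setting $\phi_1:=\frac{1}{2}(\phi^{-(0,0,0)}-\phi^{+(1,0,1)})-\frac{1}{\sqrt{2}}\phi^{+(1,1,2)}$ and $\phi_2:=\phi^{+(1,0,1)}$, direct substitution of the explicit forms collected in (\ref{generate}) and the subsequent list simplifies $\phi_1$ to $(\bar z_2,\bar z_1)^{T}$. Applying the product law (\ref{productlaw}) to both $\phi_1\phi_2$ and $\phi_2\phi_1$ yields $(|z_1|^2+|z_2|^2,\,0)^{T}$, which restricts to $I$ on $S^3$; hence $[\phi_1,\phi_2]=0$ in $S^3\mathbf{H}$. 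If a 1-cochain $b$ satisfied $c(\cdot,\cdot)=b([\cdot,\cdot])$, then $c(\phi_1,\phi_2)=b(0)=0$, contradicting $c(\phi_1,\phi_2)=\sqrt{-1}$ from (\ref{ex}).

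The most delicate step is the non-triviality, since it depends on the algebraic observation that the particular combination $\phi_1$ in (\ref{ex}) is the quaternionic ``conjugate'' of $\phi_2=\kappa$, so that the example already encodes a pair that commutes in $S^3\mathbf{H}$; the antisymmetry and cocycle identity are otherwise purely formal consequences of Leibnitz (\ref{leibnitz}), cyclicity of the trace, and the vanishing (\ref{tracetheta}).
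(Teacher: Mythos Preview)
Your proposal is correct and follows essentially the same route as the paper: antisymmetry via (\ref{leibnitz}) and (\ref{tracetheta}) together with trace cyclicity, the cocycle identity from the same three ingredients, and non-triviality from the commuting pair $\phi_1=(\bar z_2,\bar z_1)^T$, $\phi_2=\phi^{+(1,0,1)}$ with $c(\phi_1,\phi_2)=\sqrt{-1}$. The only cosmetic difference is that for (\ref{cycle}) the paper expands just the single term $c(\phi_1\phi_2,\phi_3)$ and then invokes the already-proved antisymmetry, whereas you expand all three terms symmetrically and recognise the result as $2\int tr[\Theta(\phi_1\phi_2\phi_3)]$; and for non-triviality the paper unnecessarily writes $b$ in the integral form $b(\phi)=\frac{1}{2\pi^2}\int tr(\beta\cdot\phi)\,d\sigma$, while your argument $c(\phi_1,\phi_2)=b(0)=0$ works for an arbitrary linear $b$.
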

{\it Proof.}~~
By (\ref{tracetheta}) and the Leibnitz rule (\ref{leibnitz}) we have
\begin{eqnarray*}
0&=&\, \int_{S^3}\,\,tr\,(\,\Theta\,(\phi_1\cdot\phi_2)\,)\,d\sigma\,=\,\int_{S^3}\,tr\,\left(\,\Theta\,\phi_1\,\cdot\phi_2\,\right)d\sigma\,+\,\int_{S^3}\,tr\,\left(\,\phi_1\cdot\,\Theta\,\phi_2\,\right) d\sigma
 \end{eqnarray*}
Hence
 \[ c(\phi_1\,,\,\phi_2\,)\,+\,c(\,\phi_2\,,\,\phi_1\,)\,=0\,.
 \]
  The following calculation proves (\ref{cycle}).   
\begin{eqnarray*}
c(\phi_1\cdot\phi_2\,,\,\phi_3)&=&\, \int_{S^3}\,\,tr\,(\,\Theta(\,\phi_1\cdot\phi_2\,)\cdot\,\phi_3\,)\,d\sigma
 \\[0.2cm]
&=&\, \int_{S^3}\,\,tr\,(\,\Theta\phi_1\cdot \phi_2\cdot \phi_3\,)d\sigma \,+\,\, \int_{S^3}\,\,tr\,(\,\Theta\phi_2\cdot\,\phi_3\,\cdot\phi_1\,)d\sigma \\[0.2cm]
&=&\,c(\phi_1\,,\,\phi_2\cdot\phi_3\,)+c(\phi_2\,,\,\phi_3\cdot\phi_1\,)
=\,- c(\phi_2\cdot\phi_3\,,\phi_1\,)\,-\,c(\phi_3\cdot\phi_1\,,\,\phi_2)
.
\end{eqnarray*}  
Suppose now that \(c\) is the coboundary of a 1-cochain 
\(b:\, S^3\mathbf{H}\longrightarrow \mathbf{C}\).    Then 
\begin{equation}
c(\phi_1,\,\phi_2)=(\delta\,b)(\phi_1,\,\phi_2)\,=\,b(\,[\phi_1,\,\phi_2]\,)\nonumber
\end{equation}
for any \(\phi_1,\phi_2\in S^3\mathbf{H}\).   
Take 
\(\phi_1
=\,\frac{1}{\sqrt{2}}\phi^{+(1,1,2)}\,
=\left(\begin{array}{c} -\overline z_2 \\  0\end{array}\right)\)
and \(\phi_2=\frac{\sqrt{-1}}{2}( \phi^{+(1,0,1)}+\phi^{-(0,0,0)})=\left(\begin{array}{c}\sqrt{-1}z_2\\ 0\end{array}\right)\) .   Then \([\,\phi_1,\,\phi_2\,]=0\),   
so 
\((\delta b)(\phi_1,\phi_2)=0\).
But \(c(\phi_1,\phi_2)=\frac12\,\).   Therefore \(c\) can not be a coboundary.
  \hfill\qed

\subsection{
Calculations of the 2-cocycle on the basis}

We shall calculate the values  of 2-cocycles \(c \)  for the basis \(\{  \phi ^{\pm(m,l,k)} \}\) of \( \mathbf{C}[\phi ^{\pm}] \).   First we have a lemma that is useful  for the following calculations.

\begin{lemma}~~
\begin{enumerate}
\item
\begin{equation}
\int_{S^3}\,v^{k}_{(a,b)}\,\overline v^{l}_{(c,d)}\,d\sigma\,=\,
2\pi^2\frac{a!b!} {(a+b+1)}\, \frac{k!}{ (a+b-k)!}\,\delta_{a, c}\,\delta_{b, d}\,\delta_{k, l}\,.
\end{equation}

\item
\begin{equation}
\int_{S^3}\,v^{k}_{(a,b)}\, v^{l}_{(c,d)}\,d\sigma\,=(-1)^{b-k}\,2\pi^2
\frac{a!b!} {(a+b+1)}\, \,\delta_{a, d}\,\delta_{b, c}\,\delta_{(a+b-k), l }\,.
\end{equation}

\item
\begin{equation}
\int_{S^3}\,w^{k}_{(a,b)}\,\overline w^{l}_{(c,d)}\,d\sigma\,=\,2\pi^2
\frac{a!b!} {(a+b+1)}\, \frac{k!}{ (a+b-k)!}\,\delta_{a, c}\,\delta_{b, d}\,\delta_{k, l}\,.
\end{equation}
\item
\begin{equation}
\int_{S^3}\,w^{k}_{(a,b)}\, w^{l}_{(c,d)}\,d\sigma\,=(-1)^{b-k}\,2\pi^2
\frac{a!b!} {(a+b+1)}\, \delta_{a, d}\,\delta_{b, c}\,\delta_{(a+b-k), l }\,.
\end{equation}
\end{enumerate}
\end{lemma}

\par\medskip

\begin{lemma}~~
\begin{enumerate}
\item
\begin{equation*}
c(\phi^{\pm(m,l,k)} , \phi^{\pm(p,q,r)}) = 0 \, .
\end{equation*}
\item
\begin{eqnarray*}
c (\phi^{+(m,l,k)}, \sqrt{-1}\phi^{+(p,q,r)})
&= &(-1)^{m-l-k+1} \frac{(m-2k+2)\sqrt{k(m + 2 - k) }}{m+1}\,\delta_{m,p}\,\delta_{l,p-q}\,\delta_{k,\,p-r+2}\\[0.3cm]
&& \quad -\,\frac{(m-2k)(m-k+1)}{m+1}\,\delta_{m,p}\,\delta_{l,q}\,\delta_{k,r}
.
\end{eqnarray*}
\item
\begin{eqnarray*}
c(\phi^{+(m,l,k)}, \sqrt{-1}\phi^{-(p,q,r)})
&=&
(-1)^{k-1} \frac{(m-2k+2)\sqrt{(k-1)k}}{m+1} \, \delta_{m,p+1}\,\delta_{l,\,p-r+1}\,\delta_{k,p-q+2} \\[0.3cm]
&&\quad +\,(-1)^l \,\frac{(m-2k)\sqrt{k(m+1-k)}}{m+1} \,\delta_{m,p+1}\delta_{l,r}\delta_{k,q+1}\,.
\end{eqnarray*}
\item
\begin{eqnarray*}
c(\phi^{-(m,l,k)}, \sqrt{-1}\phi^{+(p,q,r)})
&=&
(-1)^{m+1-l} \frac{(m-2l+1)\sqrt{(m-l+1)(m-l+2)}}{m+2} \, \delta_{m,p-1}\delta_{l,\,p-r+1}\delta_{k,p-q} \\[0.3cm]
&&+(-1)^k \frac{(m-2l+1)\sqrt{(l+1)(m-l+1)}}{m+2} \,\delta_{m,p-1}\delta_{l,r-1}\delta_{k,q}\,.
\end{eqnarray*}
\item
\begin{eqnarray*}
 c(\phi^{-(m,l,k)}, \sqrt{-1}\phi^{-(p,q,r)})
&=&
(-1)^{l-k} \frac{(m-2l+1)\sqrt{l(m-l+1)}}{m+2} \,\delta_{m,p}\delta_{l,p-q+1}\delta_{k,p-r+1} \\[0.3cm]
&&-\, \frac{(m-2l-1)(l+1)}{m+2} \,\delta_{m,p}\delta_{l,q}\delta_{k,r}
\end{eqnarray*}

\end{enumerate}

\end{lemma}

{\it Proof}

Since \(\theta\,v^k_{(a,b)}=(a+b-2k)v^k_{(a,b}\) , we have 
\begin{align*}
c(\phi^{+(m,l,k)} , \sqrt{-1}\phi^{+(p,q,r)}) 
&= \frac{1}{2 \pi^2} \int_{S^3} tr\, [\Theta \phi^{+(m,l,k)} \cdot \sqrt{-1}\phi^{+(p, q, r)}] d \sigma\\[0.3cm]
=\frac
{1}{4 \pi^2} \sqrt{\frac{(m + 1 - k)!}{k! l ! (m - l)!}}&  \sqrt{\frac{(p + 1 - r)!}{r! q ! (p - q)!}}\,\int_{S^3} tr\, \left[
\begin{pmatrix}
k (m-2k+2) v^{k-1}_{(l, m-l)}\\[0.2cm]
-(m-2k) v^{k}_{(l, m-l)}
\end{pmatrix}\cdot
\begin{pmatrix}
r v^{r-1}_{(q, p-q)} \\[0.2cm]
-v^{r}_{(q, p-q)}
\end{pmatrix}
\right] d\sigma\,.
\end{align*}
By the above lemma we obtain the value of \(c(\phi^{+(m,l,k)}, \sqrt{-1}\phi^{+(p,q,r)})\).   The others follow similarly.
 \hfill\qed

  \subsection{Radial derivative on \(S^3\mathbf{H}\)}
 
 We define the following operator \(d_0\) on \(C^{\infty}(S^3)\):
\begin{equation}
d_0\,f (z)= |z|\frac{\partial}{\partial n}f(z)\,=\,\frac{1}{2}(\nu+\Bar{\nu})f(z).\label{d_0}
\end{equation}
For an even spinor \(\varphi=\left(\begin{array}{c}u\\ v\end{array}\right)\) 
we put 
\[
d_0\,\varphi=\left(\begin{array}{c}d_0\,u\\[0.2cm] d_0\,v\end{array}\right).\]
Note that if \(\varphi\in \mathbf{C}[\phi^{\pm}]\) then \(d_0\varphi\in \mathbf{C}[\phi^{\pm}]\).

\begin{proposition}~~
\begin{enumerate}
\item
\begin{equation}
d_0( \phi_1 \cdot \phi_2 ) =(d_0 \phi _1) \cdot \phi_2 + \phi _1 \cdot (d_0 \phi _2)\,.
\label{leibnitz1}
\end{equation}
\item
\begin{equation}
d_0 \phi^{+(m,l,k)} = \frac{m}{2}\, \phi^{+(m,l,k)},\quad 
d_0 \phi^{-(m,l,k)}=\,- \frac{m+3}{2}\, \phi^{-(m,l,k)}.\label{normalder}
\end{equation}
\item
Let \(\varphi=\phi_1\cdots\phi_n\) such that \(\phi_i=\phi^{+(m_i,l_i,k_i)}\) or  \(\,\phi_i=\phi^{-(m_i,l_i,k_i)}\), \(i=1,\cdots,n\).     We put 
\[N=\sum_{i:\,\phi_i=\phi^{+(m_i,l_i,k_i)}}\,m_i\,\,-\,\sum_{i:\,\phi_i=\phi^{-(m_i,l_i,k_i)}}\,(m_i+3).\]
Then 
\begin{equation}
d_0(\varphi)=\,\frac{N}{2}\,\varphi.\label{d0value}
\end{equation}
\item
Let \(\varphi\) be a spinor of Laurent polynomial type:
\begin{equation}
 \varphi(z)=\sum_{m,l,k}\,C_{+(m,l,k)}\phi^{+(m,l,k)}(z)+\sum_{m,l,k}\,C_{-(m,l,k)}\phi^{-(m,l,k)}(z).
 \end{equation}
 Then
 \begin{equation}
 \int_{S^3}\,tr\,(\,d_0\,\varphi\,)\,d\sigma\,=\,0\,.
 \end{equation}
\end{enumerate}
\end{proposition}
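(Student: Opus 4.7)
The plan is to exploit the observation that $d_0=\tfrac{1}{2}(\nu+\bar\nu)$ is, on scalar monomials in $(z,\bar z)$, the total-degree operator: a monomial of bidegree $(\alpha,\beta)$ (meaning degree $\alpha$ in $z$ and $\beta$ in $\bar z$) is an eigenvector for $d_0$ with eigenvalue $\tfrac{1}{2}(|\alpha|+|\beta|)$. In particular $d_0|z|^2=|z|^2$ since $|z|^2$ has bidegree $(1,1)$. All four parts of the proposition follow cleanly from this single remark.

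For item (1), $\nu$ and $\bar\nu$ are first-order derivations on $C^\infty(\mathbf{C}^2\setminus\{0\})$, hence so is $d_0$; and $d_0$ acts componentwise on spinors under the identification $\mathbf{H}\simeq\mathbf{C}^2$. Since the quaternion product (\ref{productlaw}) is bilinear in the component functions of its two factors, the scalar Leibniz rule lifts in a line to (\ref{leibnitz1}) once both sides are written out in components.

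For item (2), I would compute the bidegrees of the entries of $\phi^{\pm(m,l,k)}$ in (\ref{basespinor}). The operator $e_-=-\bar z_2\partial_{z_1}+\bar z_1\partial_{z_2}$ sends bidegree $(\alpha,\beta)$ to $(\alpha-1,\beta+1)$, so $v^k_{(l,m-l)}=(e_-)^k z_1^l z_2^{m-l}$ has bidegree $(m-k,k)$ and $v^{k-1}_{(l,m-l)}$ has bidegree $(m-k+1,k-1)$; both components of $\phi^{+(m,l,k)}$ have total degree $m$, so $d_0\phi^{+(m,l,k)}=\tfrac{m}{2}\phi^{+(m,l,k)}$. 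Dually, $\hat e_-=\bar z_2\partial_{\bar z_1}-z_1\partial_{z_2}$ preserves the bidegree, so $w^k_{(a,b)}$ has bidegree $(a,b)$; the numerators $w^k_{(m+1-l,l)}$ and $w^k_{(m-l,l+1)}$ in (\ref{basespinor}) both have total degree $m+1$, and the prefactor $|z|^{-2(m+2)}$ contributes $-(m+2)$ to the weight. Adding, $d_0\phi^{-(m,l,k)}=\bigl(\tfrac{m+1}{2}-(m+2)\bigr)\phi^{-(m,l,k)}=-\tfrac{m+3}{2}\phi^{-(m,l,k)}$.

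For item (3), a short induction on $n$ using the Leibniz rule (1) and the eigenvalue formula (2) yields $d_0(\phi_1\cdots\phi_n)=\tfrac{N}{2}\phi_1\cdots\phi_n$ with the stated $N$. For item (4), linearity of $d_0$ and (2) give
\[
d_0\varphi=\sum_{m,l,k}\tfrac{m}{2}C_{+(m,l,k)}\phi^{+(m,l,k)}-\sum_{m,l,k}\tfrac{m+3}{2}C_{-(m,l,k)}\phi^{-(m,l,k)},
\]
whose expansion has zero coefficient in front of $\phi^{+(0,0,1)}$ (the factor $\tfrac{m}{2}$ vanishes at $m=0$). Applying Lemma \ref{trace} then gives $\int_{S^3}tr(d_0\varphi)\,d\sigma=4\pi^2\,\mathrm{Re}(0)=0$. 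I anticipate no genuine obstacle; the only care required is the bookkeeping of the bidegrees of $e_-$ and $\hat e_-$ in step (2), and keeping track of the sign and the shift $-(m+2)$ coming from the $|z|^{-2(m+2)}$ prefactor in $\phi^{-(m,l,k)}$.
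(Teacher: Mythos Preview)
Your proof is correct and follows essentially the same approach as the paper's (very terse) proof: homogeneity of the basis spinors $\phi^{\pm(m,l,k)}$ gives item (2), the Leibniz rule gives (1) and (3), and the vanishing of the $m=0$ coefficient in $d_0\varphi$ combined with Lemma~\ref{trace} gives (4). One small remark: the paper's proof cites the coefficient of $\phi^{+(0,0,0)}$ for item (4), whereas you (correctly, in view of equation~(\ref{coefficient1})) cite $\phi^{+(0,0,1)}$; both coefficients vanish since the factor $\tfrac{m}{2}$ is zero at $m=0$, so either works, but your reference is the one that matches Lemma~\ref{trace} directly.
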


{\it Proof.}~~
The formula (\ref{normalder}) follows from the definition (\ref{basespinor}).     The last assertion follows from the fact that the coefficient of \(\phi^{+(0,0,1)} \) in  the Laurent expansion of \(d_0\varphi\)   vanishes.
 \hfill\qed

\begin{definition}\label{homog}
Let  \(\mathbf{C}[\phi^{\pm};\,N]\) be the subspace of \(\mathbf{C}[\phi^{\pm}]\) consisting of those elements that are of homogeneous order \(N\): \(\varphi(z)=|z|^N\varphi(\frac{z}{|z|})\).
\end{definition}
\(\mathbf{C}[\phi^{\pm};\,N]\) is spanned by the spinors  
\(\varphi=\phi_1\cdots\phi_n \) such that each \(\phi_i\) is equal to  \(\phi_i=\phi^{+(m_i,l_i,k_i)}\) or  \(\,\phi_i=\phi^{-(m_i,l_i,k_i)}\) , where \(m_i\geq 0\) and \(0\leq l_i\leq m_i+1, \,0\leq k_i\leq m_i+2\) as before, and such that 
\[N=\sum_{i:\,\phi_i=\phi^{+(m_i,l_i,k_i)}}\,m_i\,\,-\,\sum_{i:\,\phi_i=\phi^{-(m_i,l_i,k_i)}}\,(m_i+3).\]

\(\mathbf{C}[\phi^{\pm}]\) is decomposed into the direct sum of \(\mathbf{C}[\phi^{\pm};\,N]\):
\[
\mathbf{C}[\phi^{\pm}]\,=\, \bigoplus_{N\in \mathbf{Z}}\,\mathbf{C}[\phi^{\pm};\,N]\,.\]
(\ref{d0value}) implies that the eigenvalues of \(d_0\) on \(\mathbf{C}[\phi^{\pm}]\) are \(\left\{\frac{N}{2};\,N\in\mathbf{Z}\,\right\}\) and 
\(\mathbf{C}[\phi^{\pm};\,N]\) is the space of eigenspinors for the eigenvalue \(\frac{N}{2}\).

{\bf Example}
\begin{eqnarray*}
&&\phi^{+(1,0,1)}\cdot\phi^{-(0,0,0)}\,\in \,\mathbf{C}[\,\phi^{\pm};\,-2\,],\\[0.2cm]
&& d_0(\phi^{+(1,0,1)}\cdot\phi^{-(0,0,0)})=-\frac{2}{2}(\phi^{+(1,0,1)}\cdot\phi^{-(0,0,0)}).
\end{eqnarray*}

\begin{proposition}\label{deriv}
\begin{equation}
c(\,d_0\phi_1\,,\phi_2\,)\,+\,c(\,\phi_1\,, d_0\phi_2\,)\,=\,0\,.
\end{equation}
\end{proposition}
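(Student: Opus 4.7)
The plan is to write the sum $c(d_0\phi_1,\phi_2)+c(\phi_1,d_0\phi_2)$ as the integral over $S^3$ of the trace of a total $d_0$-derivative, and then invoke the vanishing statement $\int_{S^3}tr[d_0\varphi]\,d\sigma=0$ already proved for spinors of Laurent polynomial type. The two ingredients that make this work are the Leibniz rule for $d_0$ (formula (\ref{leibnitz1})) and the commutativity of $d_0$ with $\Theta$.

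First I would establish that $[d_0,\Theta]=0$. Since $\Theta$ is just $\tfrac{1}{2\sqrt{-1}}\theta$ applied componentwise, it suffices to check $[d_0,\theta]=0$ as vector fields on $\mathbf{C}^2$. Both are diagonal on monomials $z_1^az_2^b\bar z_1^c\bar z_2^d$: $\theta$ acts by the scalar $a+b-c-d$ and $2d_0$ by $a+b+c+d$, so they commute. Consequently $\Theta d_0\phi = d_0\Theta\phi$ for any smooth spinor $\phi$.

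Next, using (\ref{leibnitz1}) I write
\begin{equation*}
d_0\bigl(\Theta\phi_1\cdot\phi_2\bigr)\,=\,(d_0\Theta\phi_1)\cdot\phi_2\,+\,\Theta\phi_1\cdot(d_0\phi_2)\,=\,\Theta(d_0\phi_1)\cdot\phi_2\,+\,\Theta\phi_1\cdot(d_0\phi_2).
\end{equation*}
Taking the trace, integrating over $S^3$, and dividing by $2\pi^2$ yields
\begin{equation*}
\frac{1}{2\pi^2}\int_{S^3}tr\bigl[d_0(\Theta\phi_1\cdot\phi_2)\bigr]d\sigma\,=\,c(d_0\phi_1,\phi_2)+c(\phi_1,d_0\phi_2).
\end{equation*}

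Finally, because $\Theta\phi_1\cdot\phi_2$ is a product of spinors of Laurent polynomial type, it again lies in $\mathbf{C}[\phi^{\pm}]$, so the left-hand side vanishes by the last assertion of the preceding proposition (equivalently, the coefficient of $\phi^{+(0,0,1)}$ in the Laurent expansion of $d_0(\Theta\phi_1\cdot\phi_2)$ is $0$, since $d_0$ preserves each eigenspace $\mathbf{C}[\phi^{\pm};N]$ and multiplies it by $N/2$, which is zero only on $\mathbf{C}[\phi^{\pm};0]$, while Lemma \ref{trace} picks out precisely this component via the trace integral). I do not expect any real obstacle here: the only non-routine point is verifying $[d_0,\Theta]=0$, which reduces to the trivial computation above; both the Leibniz rule and the trace-vanishing have already been recorded.
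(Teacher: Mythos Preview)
Your proof is correct and follows essentially the same route as the paper: establish $[d_0,\Theta]=0$, apply the Leibniz rule for $d_0$ to $\Theta\phi_1\cdot\phi_2$, and then use the vanishing $\int_{S^3}tr[d_0\varphi]\,d\sigma=0$ for Laurent polynomial spinors. The only cosmetic difference is that the paper verifies $\theta d_0=d_0\theta$ via the factorisation $\theta=\nu-\bar\nu$, $2d_0=\nu+\bar\nu$ (so $(\nu-\bar\nu)(\nu+\bar\nu)=\nu^2-\bar\nu^2=(\nu+\bar\nu)(\nu-\bar\nu)$), whereas you check it on monomials; these are equivalent.
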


In fact,
since \(\,\theta\,d_0\,=\,(\nu-\Bar{\nu})(\nu+\Bar{\nu})=\nu^2-\Bar{\nu}^2=\,d_0\,\theta\,\), we have
\begin{eqnarray*}
0&=&\int_{S^3}\,tr\,(\,d_0(\Theta\phi_1\cdot\phi_2)\,)\,d\sigma=\int_{S^3}\,tr\,(\,(d_0\Theta\phi_1)\cdot \phi_2+\Theta\phi_1\cdot d_0\phi_2\,)\,d\sigma\\[0.2cm]
&=&\int_{S^3}\,tr\,((\Theta\, d_0\phi_1)\cdot\phi_2\,)\,d\sigma+\int_{S^3}\,tr\,(\Theta\phi_1\cdot d_0\phi_2\,)\,d\sigma.\\[0.2cm]
&=& c(d_0\phi_1,\phi_2)+c(\phi_1,d_0\phi_2)
\end{eqnarray*}
   \hfill\qed

\section{Extensions of the Lie algebra   \(\mathbf{C}[\phi ^{\pm}] \otimes U(\mathfrak{g})\)}

  In this section we shall construct a central extention for the 3-dimensional loop algebra \(
Map(S^3,\mathfrak{g}^{\mathbf{H}})=S^3\mathbf{H}\otimes U(\mathfrak{g})\) associated to the above 2-cocycle \(c\), and the central extension of \(\mathbf{C}[\phi^{\pm}]\otimes U(\mathfrak{g})\) induced from it.   Then we shall give the second central extension by adding a derivative to the first extension that acts as the radial derivation.

\subsection{Extension of  \(S^3{\mathfrak{g}}^{\mathbf{H}}\,=S^3\mathbf{H}\otimes U(\mathfrak{g})\,\) }

 From Proposition \ref{qtf} we see that 
   \(S^3{\mathfrak{g}}^{\mathbf{H}}\,=S^3\mathbf{H}\otimes U(\mathfrak{g})\,\) endowed with the following bracket  \([\,,\,]_{S^3{\mathfrak{g}}^{\mathbf{H}}}\) becomes a Lie algebra.
 \begin{equation}
 [\,\phi \otimes X\, , \,\psi  \otimes Y\,]_{ S^3{\mathfrak{g}}^{\mathbf{H}}} =  ( \phi\cdot\psi ) \otimes (XY) \,
- (\psi \cdot \phi ) \otimes (YX)
 , 
\end{equation}
for 
  \(X,Y\in U(\mathfrak{g})\) and \(\phi,\,\psi\,\in S^3\mathbf{H}\, \) .   
 
 We take the non-degenerate invariant symmetric bilinear \(\mathbf{C}\)-valued form \((\,\cdot\,\vert\,\cdot\,)\) on \(\mathfrak{g}\) and extend it to \(U(\mathfrak{g})\).    
 For 
$X = X_1^{l_1} \cdots X_m^{l_m}$ and $Y = Y_1^{k_1} \cdots Y_m^{k_m}$  written by the basis  \(\,X_1,\cdots,X_m,\,Y_1,\cdots,Y_m\,\) 
of \(\mathfrak{g}\), \((X \vert Y)\) is defined by 
 \[(X \vert Y)=tr(ad(X_1^{l_1}) \cdots ad(X_m^{l_m}) ad(Y_1^{k_1}) \cdots ad(Y_m^{k_m})).\]  
Then we define a \(\mathbf{C}\)-valued 2-cocycle on the Lie algebra   \( S^3\mathfrak{g}^{\mathbf{H}}\,\) by 
\begin{equation}
c(\,\phi_1\otimes X\,,\,\phi_2\otimes Y\,)=\,(X \vert Y)\,c(\phi_1,\phi_2).
\end{equation}
The 2-cocycle property follows from the fact \((XY\vert Z)=(YZ\vert X)\) and Proposition  
\ref{2cycle}.
  
 Let \(a\) be an indefinite  number.    There is an extension of the Lie algebra   \( S^3\mathfrak{g}^{\mathbf{H}}\,\) by the 1-dimensional center \(\mathbf{C}a\) associated to the cocycle \(c\).   Explicitly we have the following theorem.
 \begin{theorem}~~ The \(\mathbf{C}\)-vector space 
 \begin{equation}
S^3\mathfrak{g}^{\mathbf{H}}(a)\,=\, (\, S^3\mathbf{H} \otimes U(\mathfrak{g})\,)\oplus (\mathbf{C}a),
\end{equation}
endowed with the following bracket becomes a Lie algebra.    
 \begin{eqnarray}
 [\,\phi \otimes X\, , \,\psi  \otimes Y\,]^{\,\widehat{\,}}
  &=&  (\phi\cdot\psi)\otimes\,(XY) - (\psi \cdot \phi) \otimes (YX)+ (X|Y)\, c(\phi , \psi )\,a \,,
\\[0,3cm] 
 [\,a\,, \,\phi\otimes X\,] ^{\,\widehat{\,}}&=&0\, ,
  \end{eqnarray}
for \(X,Y\in U(\mathfrak{g})\) and \(\phi,\,\psi\,\in S^3\mathbf{H} \, \). 
\end{theorem}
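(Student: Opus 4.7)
The plan is to verify the two Lie algebra axioms for $[\,\cdot\,,\,\cdot\,]^{\widehat{\,}}$ directly. Centrality of $a$ (the second defining relation $[a,\phi\otimes X]^{\widehat{\,}}=0$) removes $a$ from all nontrivial checks, so it suffices to work on pure tensors $\xi_i=\phi_i\otimes X_i$ in $S^3\mathbf{H}\otimes U(\mathfrak{g})$ and extend by bilinearity. The verification therefore reduces to antisymmetry and the Jacobi identity on such triples.

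Antisymmetry is immediate: the uncentralized part $(\phi_1\phi_2)\otimes(X_1X_2)-(\phi_2\phi_1)\otimes(X_2X_1)$ visibly changes sign under $(\phi_1,X_1)\leftrightarrow(\phi_2,X_2)$, and the central contribution $(X_1|X_2)\,c(\phi_1,\phi_2)\,a$ is antisymmetric because $(\cdot|\cdot)$ is symmetric while $c$ is antisymmetric by Proposition~\ref{2cycle}.

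For the Jacobi identity I would expand
\[
\mathfrak{J}=\sum_{\mathrm{cyc}(1,2,3)}\bigl[\xi_1,[\xi_2,\xi_3]^{\widehat{\,}}\bigr]^{\widehat{\,}}.
\]
Since $a$ is central, the central summand $c(\phi_2,\phi_3)(X_2|X_3)\,a$ of $[\xi_2,\xi_3]^{\widehat{\,}}$ is killed by $[\xi_1,\cdot]^{\widehat{\,}}$, so $\mathfrak{J}$ splits into two pieces. The first,
$\sum_{\mathrm{cyc}}[\xi_1,[\xi_2,\xi_3]_{S^3\mathfrak{g}^{\mathbf{H}}}]_{S^3\mathfrak{g}^{\mathbf{H}}}$,
vanishes by the Jacobi identity on $S^3\mathfrak{g}^{\mathbf{H}}$, which is inherited pointwise from Proposition~\ref{qtf}. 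The second is an $a$-multiple; expanding $[\xi_2,\xi_3]_{S^3\mathfrak{g}^{\mathbf{H}}}=(\phi_2\phi_3)\otimes(X_2X_3)-(\phi_3\phi_2)\otimes(X_3X_2)$ and applying the definition of $[\,\cdot\,,\cdot\,]^{\widehat{\,}}$ once, it reads
\[
\Bigl(\sum_{\mathrm{cyc}}\bigl[c(\phi_1,\phi_2\phi_3)(X_1|X_2X_3)-c(\phi_1,\phi_3\phi_2)(X_1|X_3X_2)\bigr]\Bigr)a.
\]

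The two inner sums vanish independently. Invariance of $(\cdot|\cdot)$ on $U(\mathfrak{g})$ takes the cyclic form $(XY|Z)=(YZ|X)$, which the trace formula used to extend the form makes transparent; consequently $(X_1|X_2X_3)$ is invariant under cyclic permutation of $(X_1,X_2,X_3)$, and $(X_1|X_3X_2)$ under cyclic permutation of $(X_1,X_3,X_2)$. Pulling these common scalars outside the sums collapses each one to a cyclic sum of $c$'s that, after a single use of antisymmetry of $c$, is precisely the cocycle identity of Proposition~\ref{2cycle} and hence zero. The only real obstacle is bookkeeping: aligning the three cyclic sums so that the cyclic invariance of $(\cdot|\cdot)$ and the 2-cocycle identity of Proposition~\ref{2cycle} are applied with matching orderings. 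No new ingredient beyond Propositions~\ref{qtf} and~\ref{2cycle} and the cyclic invariance of the form is needed.
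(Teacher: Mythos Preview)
Your proposal is correct and follows essentially the same approach as the paper: the paper does not spell out a proof of this theorem, but immediately before stating it remarks that the extended $c$ is a 2-cocycle on $S^3\mathfrak{g}^{\mathbf{H}}$ ``from the fact $(XY\vert Z)=(YZ\vert X)$ and Proposition~\ref{2cycle}'', and then appeals implicitly to the standard central extension construction. Your argument simply makes this explicit, using exactly the same ingredients (Proposition~\ref{qtf} for the Jacobi identity of $S^3\mathfrak{g}^{\mathbf{H}}$, antisymmetry and the cocycle identity of $c$ from Proposition~\ref{2cycle}, and the cyclic invariance $(XY\vert Z)=(YZ\vert X)$ of the form).
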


\medskip

As a Lie subalgebra of \(S^3\mathfrak{g}^{\mathbf{H}}\) we have \(\mathbf{C}[\phi^{\pm}]\otimes U(\mathfrak{g})\).   
\begin{definition}~~
We denote by \(\widehat{\mathfrak{g}}(a)\) the extension of the Lie algebra \(\mathbf{C}[\phi^{\pm}]\otimes U(\mathfrak{g})\) by the 1-dimensional center \(\mathbf{C}a\) associated to the cocycle \(c\):
\begin{equation}
\widehat{\mathfrak{g}}(a)=\mathbf{C}[\phi^{\pm}]\otimes U(\mathfrak{g})\,\oplus (\mathbf{C}a).
\end{equation}
The Lie bracket is given by
 \begin{eqnarray}
 [\,\phi \otimes X\, , \,\psi  \otimes Y\,]^{\,\widehat{\,}}
  &=&  (\phi\cdot\psi)\otimes\,(XY) - (\psi \cdot \phi) \otimes (YX)+ (X|Y)\, c(\phi , \psi )\,a \,,
\\[0,3cm] 
 [\,a\,, \,\phi\otimes X\,] ^{\,\widehat{\,}}&=&0\, ,
  \end{eqnarray}
for \(X,Y\in U(\mathfrak{g})\) and \(\phi,\,\psi\,\in \mathbf{C}[\phi^{\pm}] \, \). 
\end{definition}

\subsection{Extension of \(\widehat{\mathfrak{g}}(a)\)  by the derivation}
We introduced the radial derivative \(d_0\) acting on \(S^3\mathbf{H}\).   \(d_0\) preserves the space of spinors of Laurent polynomial type \(\mathbf{C}[\phi^{\pm}]\).   
The derivation \(d_0\) on \(\mathbf{C}[\phi^{\pm}]\) is extended to a derivation of the Lie algebra \(\,\mathbf{C}[\phi^{\pm}]\otimes U(\mathfrak{g})\,\) by 
\begin{equation}
   \,d_0\,(\phi \otimes X\,)\,=\,(d_0\phi \,)\otimes X\,.
\end{equation}
In fact we have from (\ref{leibnitz1})
  \begin{eqnarray*}
 && d_0\left(\,[\,\phi_1\otimes X_1\,,\,\phi_2\otimes X_2\,]^{\,\widehat{\,}}\,\,\right)=
  d_0\left(\,(\phi_1\phi_2)\,\otimes(X_1X_2)\,-\,(\phi_2\phi_1)\otimes (X_2X_1)\,\right)\\[0.2cm]
  &&\,=\,(d_0\phi_1 \cdot\phi_2)\otimes(X_1X_2)\,-\,(\phi_2\cdot d_0\phi_1)\otimes (X_2X_1)
 +\,(\phi_1 \cdot d_0\phi_2)\otimes (X_1X_2)\,  \\[0.2cm]
  && \qquad -\,(d_0\phi_2\cdot \phi_1)\otimes (X_2X_1)\,.\end{eqnarray*}
   On the other hand 
\begin{eqnarray*}
&&\,[\,d_0(\phi_1\otimes X_1)\,,\,\phi_2\otimes X_2\,]^{\,\widehat{\,}}\,+\,
[\,\phi_1\otimes X_1\,,\,d_0(\phi_2\otimes X_2)\,]^{\,\widehat{\,}}
  \\[0.2cm]
 && \,=\,(d_0\phi_1 \cdot\phi_2)\otimes(X_1X_2)\,-\,(\phi_2\cdot d_0\phi_1)\otimes (X_2X_1)
\,
 +\,(\phi_1 \cdot d_0\phi_2)\otimes (X_1X_2)\,\\[0.2cm]
&&
 \qquad -\,(d_0\phi_2\cdot \phi_1)\otimes (X_2X_1)\,   +\,(X_1\vert X_2)\left(\,c(d_0\phi_1,\,\phi_2)\,+\,c(\phi_1,\,d_0\phi_2)\right)a\,.
\end{eqnarray*}  
 Since \(c(d_0\phi_1,\,\phi_2)\,+\,c(\phi_1,\,d_0\phi_2)=0\) from Proposition \ref{deriv} we have 
 \begin{eqnarray*}
 && d_0\left(\,[\,\phi_1\otimes X_1\,,\,\phi_2\otimes X_2\,]^{\,\widehat{\,}}\,\,\right)\,\\[0.2cm]
\qquad && =\,
 [\,d_0(\phi_1\otimes X_1)\,,\,\phi_2\otimes X_2\,]^{\,\widehat{\,}}\,+\,
[\,\phi_1\otimes X_1\,,\,d_0(\phi_2\otimes X_2)\,]^{\,\widehat{\,}}\,\,.
\end{eqnarray*}
Thus \(d_0\) is a derivation that acts on the Lie algebra \(\mathbf{C}[\phi^{\pm}]\otimes U(\mathfrak{g})\,\).  

  We denote by \(\widehat{\mathfrak{g}}\,\) the Lie algebra that is obtained by adjoining a derivation \(d\) to  \(\widehat{\mathfrak{g}}(a)\) which acts on  \(\mathbf{C}[\phi^{\pm}]\otimes U(\mathfrak{g})\,\) as \(d_0\) and which kills \(a\).    More explicitly we have the following
\begin{theorem}
 Let \(a\) and  \(\,d\) be indefinite elements. We consider the \(\mathbf{C}\) vector space:
\begin{equation}
\widehat{\mathfrak{g}}\,=\, \left(\mathbf{C}[\phi^{\pm}]\otimes U(\mathfrak{g})\right) \oplus( \mathbf{C}\, a )\oplus (\mathbf{C}d)\,,
\end{equation}
 and define the following bracket on $\widehat{\mathfrak{g}}\).   
  For \(X,Y\in U(\mathfrak{g})\) and \(\phi,\,\psi\,\in \,\mathbf{C}[\phi^{\pm}]\, \), we put 
 \begin{eqnarray}
 [\,\phi \otimes X\, , \,\psi  \otimes Y\,]_{\widehat{\mathfrak{g}}} &=&
  [\,\phi \otimes X\, , \,\psi  \otimes Y\,]^{\,\widehat{\,}}    \label{brac2}  \\[0.2cm]
  &=&  (\phi\cdot\psi)\otimes\,(XY) - (\psi \cdot \phi) \otimes (YX)+ (X|Y)\, c(\phi , \psi )\,a \, , \nonumber
\\[0,3cm] 
 [\,a\,, \,\phi\otimes X\,] _{\widehat{\mathfrak{g}}}&=&0\,, \qquad
  [\,d, \phi \otimes X\,] _{\widehat{\mathfrak{g}}}=\,d_0\phi \otimes X\, , \\[0,3cm] 
  [\,d\,,\,a\,]_{\widehat{\mathfrak{g}}}\,&=&0\,.
  \end{eqnarray}
      Then \( \left(\, \widehat{\mathfrak{g}} \, , \, [\,\cdot,\cdot\,]_{ \widehat{\mathfrak{g} } } \,\right) \) becomes a Lie algebra.
\end{theorem}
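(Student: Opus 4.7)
The plan is to verify antisymmetry and the Jacobi identity, reducing everything to results already in hand: the Lie algebra structure on $\widehat{\mathfrak{g}}(a)$ and the fact---verified in the discussion just before the theorem---that $d_0$ extends to a derivation of $\widehat{\mathfrak{g}}(a)$.

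Antisymmetry is essentially automatic. On $\mathbf{C}[\phi^{\pm}]\otimes U(\mathfrak{g})\oplus\mathbf{C}a$ the bracket agrees with $[\cdot,\cdot]^{\wedge}$, which is antisymmetric. The new mixed brackets $[d,\phi\otimes X]_{\widehat{\mathfrak{g}}}=d_0\phi\otimes X$, $[d,a]_{\widehat{\mathfrak{g}}}=0$, and $[d,d]_{\widehat{\mathfrak{g}}}=0$ are declared together with their negatives on the opposite side, so antisymmetry holds by construction.

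For the Jacobi identity I would write a general element as $u_i=\phi_i\otimes X_i+\lambda_i a+\mu_i d$ and, using trilinearity of the Jacobiator $J(u_1,u_2,u_3)$, split into cases according to how many of the three arguments have a nonzero $d$-component. If none do, we are inside $\widehat{\mathfrak{g}}(a)$ and Jacobi is already known. If any $u_i$ is a multiple of $a$, every bracket in the Jacobiator involving $a$ is zero so there is nothing to check. If two or three of the $u_i$ have $d$-components, the relations $[d,d]_{\widehat{\mathfrak{g}}}=0$ and $[d,a]_{\widehat{\mathfrak{g}}}=0$ collapse the verification to trivial cancellations: the only surviving subcase is $u_2=u_3=d$, $u_1=\phi_1\otimes X_1$, where the two non-vanishing summands contribute $d_0^2\phi_1\otimes X_1$ and its negative by antisymmetry.

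The substantive case is when exactly one of the $u_i$ equals $d$. By antisymmetry I may take $u_3=d$, $u_1=\phi_1\otimes X_1$, $u_2=\phi_2\otimes X_2$. Substituting the bracket formulas, the Jacobi sum collapses to exactly the derivation identity
\[
d_0\bigl([\phi_1\otimes X_1,\phi_2\otimes X_2]^{\wedge}\bigr)\,=\,[d_0\phi_1\otimes X_1,\,\phi_2\otimes X_2]^{\wedge}+[\phi_1\otimes X_1,\,d_0\phi_2\otimes X_2]^{\wedge}.
\]
The $\mathbf{C}[\phi^{\pm}]\otimes U(\mathfrak{g})$-component of this identity follows from the Leibniz rule (\ref{leibnitz1}) for $d_0$, while the central contribution equals $(X_1\vert X_2)\bigl(c(d_0\phi_1,\phi_2)+c(\phi_1,d_0\phi_2)\bigr)a$, which vanishes by Proposition \ref{deriv}. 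This last step, where the cocycle $c$ is killed by the "integration by parts" identity, is the main obstacle; but it is precisely the content of the derivation computation already performed in the paragraph preceding the theorem. Thus, assembling the four cases completes the proof.
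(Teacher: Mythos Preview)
Your proof is correct and follows essentially the same route as the paper: the paper simply states that it is enough to verify the Jacobi identity in the case where one argument is $d$ and the other two lie in $\mathbf{C}[\phi^{\pm}]\otimes U(\mathfrak{g})$, and then reduces this to the Leibniz rule (\ref{leibnitz1}) together with Proposition~\ref{deriv}, exactly as you do. Your treatment is slightly more explicit in enumerating and disposing of the remaining trivial cases, but the substantive step is identical.
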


{\it Proof}

It is enough to prove the following Jacobi identity:
 \begin{equation*}
 [\,[\,d\,, \,\phi_1   \otimes X_1 \,]_{\widehat{\mathfrak{g}}}\,,\, \phi_2 \otimes X_2\,]_{\widehat{\mathfrak{g}}}
+[\,[\phi_1  \otimes X_1 , \phi_2 \otimes X_2\,]_{\widehat{\mathfrak{g}}}\,,\,d\,]_{\widehat{\mathfrak{g}}}
\,+\,[\,[\phi_2 \otimes X_2 , \,d\,]_{\widehat{\mathfrak{g}}} ,\, \phi _1   \otimes X_1\,]_{\widehat{\mathfrak{g}}}=0.
\end{equation*}
In the following we shall abbreviate the bracket \([\,,\,]_{\widehat{\mathfrak{g}}}\,\) simply to \([\,\,,\,\,]\).    
  We have
\begin{align*}
[\,[\,d\,, \,\phi_1  \otimes X_1 \,]\,, \phi_2 \otimes X_2\,] =&
[\,d_0\phi_1\otimes X_1,\,\phi_2 \otimes X_2\,] \\[0.2cm]
=& \,( \,d_0\phi_1\cdot \phi_2 )\otimes\,\,(X_1X_2 )
-\,(\,\phi_2\cdot d_0\phi_1\,)\otimes (X_2X_1)\\[0.2cm]
&+
(X_1\vert X_2)c(\,d_0\phi_1\,,\,\phi_2)\, a\,.
\end{align*}
Similarly
\begin{align*}
[\,[ \,\phi_2  \otimes X_2, \,d \,] , \phi_1 \otimes X_1\,]=&
(\phi_1\cdot d_0\phi_2) \otimes (X_1X_2 ) 
-\,(\,d_0\phi_2 \cdot\phi_1)\otimes (X_2X_1)\\[0.2cm]
&+
(X_1\vert X_2)\,c(\phi_1,\,d_0\phi_2\,)\, a\,.  \\[0.2cm]
[\,[\phi_1  \otimes X_1 , \phi_2  \otimes X_2\,]\,,\,d\,]=&
-\bigl[\,d\,,\,
(\phi_1\cdot \phi_2 )\otimes (X_1X_2 )-(\phi_2\cdot\phi_1)\otimes (X_2X_1)+
(X_1\vert X_2)c (\phi_1\,, \phi_2)\,a\,\bigr]\\[0.2cm]
=&-\,d_0(\phi_1\cdot\phi_2) \otimes\,(X_1X_2 )
+ d_0(\phi_2\cdot\phi_1)\otimes (X_2X_1) \,.
\end{align*}
  The sum of three equations  vanishes by virtue of (\ref{leibnitz1}) and Proposition \ref{deriv}.   
 \hfill\qed
 \par\medskip
 
 Remember from Definition \ref{homog} that  \(\mathbf{C}[\phi^{\pm};\,N]\) denotes the subspace in \(\mathbf{C}[\phi^{\pm}]\) generated by the products 
  \(\phi_1\cdots\phi_n\,\) with each \(\phi_i\) being \(\phi_i=\phi^{+(m_i,l_i,k_i)}\) or  \(\,\phi_i=\phi^{-(m_i,l_i,k_i)}\), \(i=1,\cdots,n\), such that 
  \[\sum_{i;\,\phi_i=\phi^{+(m_i,l_i,k_i)}}m_i\,-\,\sum_{i;\,\phi_i=\phi^{-(m_i,l_i,k_i)}}(m_i+3)\,=\,N\,.\]
   \begin{proposition}
  The centralizer of \(\,d\) in \(\,\widehat{\mathfrak{g}}\,\) is given by
\begin{equation}
(\,\mathbf{C}[\phi^{\pm};\,0]\,\otimes U(\mathfrak{g})\,)\,\oplus \, \mathbf{C}a\,\oplus\mathbf{C}d\,.
 \end{equation}
   \end{proposition}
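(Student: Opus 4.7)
The plan is to write an arbitrary element of $\widehat{\mathfrak{g}}$, apply $\mathrm{ad}(d)$ to it, and read off the condition for the result to vanish using the eigenspace decomposition of $d_0$ on $\mathbf{C}[\phi^{\pm}]$ already established in the preceding subsection.

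First I would fix a $\mathbf{C}$-basis $\{e_{\alpha}\}_{\alpha}$ of $U(\mathfrak{g})$ and write a general $\xi\in\widehat{\mathfrak{g}}$ as
\[
\xi\,=\,\sum_{\alpha}\Phi_{\alpha}\otimes e_{\alpha}\,+\,\lambda a\,+\,\mu d,
\]
where only finitely many $\Phi_{\alpha}\in \mathbf{C}[\phi^{\pm}]$ are nonzero. Since $[d,a]_{\widehat{\mathfrak{g}}}=0$, $[d,d]=0$, and $[d,\phi\otimes X]_{\widehat{\mathfrak{g}}}=d_0\phi\otimes X$, one has
\[
[d,\xi]_{\widehat{\mathfrak{g}}}\,=\,\sum_{\alpha}(d_0\Phi_{\alpha})\otimes e_{\alpha}.
\]
Because the tensor product $\mathbf{C}[\phi^{\pm}]\otimes U(\mathfrak{g})$ is free over the basis $\{e_{\alpha}\}$, this sum vanishes if and only if $d_0\Phi_{\alpha}=0$ for every $\alpha$ appearing in the decomposition of $\xi$.

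Next I would invoke the eigenspace decomposition $\mathbf{C}[\phi^{\pm}]=\bigoplus_{N\in\mathbf{Z}}\mathbf{C}[\phi^{\pm};N]$ established through equation (\ref{d0value}): on $\mathbf{C}[\phi^{\pm};N]$ the derivation $d_0$ acts as multiplication by $N/2$. Hence $\ker(d_0\mid_{\mathbf{C}[\phi^{\pm}]})=\mathbf{C}[\phi^{\pm};0]$, and the condition $d_0\Phi_{\alpha}=0$ amounts to $\Phi_{\alpha}\in\mathbf{C}[\phi^{\pm};0]$. This gives
\[
\{\xi\in\widehat{\mathfrak{g}}\,:\,[d,\xi]_{\widehat{\mathfrak{g}}}=0\}\,=\,\bigl(\mathbf{C}[\phi^{\pm};0]\otimes U(\mathfrak{g})\bigr)\oplus\mathbf{C}a\oplus\mathbf{C}d,
\]
which is the claimed centralizer.

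There is essentially no serious obstacle here; the only point requiring mild care is the justification that $\sum_{\alpha}(d_0\Phi_{\alpha})\otimes e_{\alpha}=0$ forces $d_0\Phi_{\alpha}=0$ for each $\alpha$, which is just the fact that $\mathbf{C}[\phi^{\pm}]\otimes U(\mathfrak{g})$ is a free $\mathbf{C}[\phi^{\pm}]$-module on the basis $\{e_{\alpha}\}$. The scalar components in $\mathbf{C}a$ and $\mathbf{C}d$ contribute nothing to $[d,\xi]_{\widehat{\mathfrak{g}}}$, so both remain unconstrained and therefore appear as entire summands in the centralizer.
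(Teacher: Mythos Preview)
Your proof is correct and follows the same approach as the paper, which simply says the proposition follows from the formula $d_0\varphi=\tfrac{N}{2}\varphi$ on $\mathbf{C}[\phi^{\pm};N]$. You have merely written out in detail what the paper leaves implicit.
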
  
The proposition follows from (\ref{d0value}) .

\section{Structure of  \(\widehat{\mathfrak{g}}\)}

\subsection{The weight space decomposition of \(U(\mathfrak{g})\)}

Let \((\,\mathfrak{g}\,,\,[\,,\,]_{\mathfrak{g}}\,) \) be a simple Lie algebra.      Let 
\(\mathfrak{h}\) be a Cartan subalgebra of  \(\mathfrak{g}\) and \(\mathfrak{g}= \mathfrak{h}\oplus \sum_{\alpha\in \Delta}\,\mathfrak{g}_{\alpha}\) be the root space decomposition with the root space  
\(\,\mathfrak{g}_{\alpha}=\{X\in\mathfrak{g};\,ad(h)X\,=\,<\alpha,h>X, \quad\forall h\in \mathfrak{h}\}\) .      Here \(\Delta=\Delta(\mathfrak{g},\mathfrak{h})\) is the set of roots and \(\dim\,\mathfrak{g}_{\alpha}=1\).    
     Let \(\Pi=\{\alpha_i;\,i=1,\cdots,r={\rm rank}\,\mathfrak{g}\}\subset \mathfrak{h}^{\ast}\) be the set of simple roots and  \(\{\alpha_i^{\vee}\,;\,i=1,\cdots,r\,\}\subset \mathfrak{h}\) be the set of simple coroots.   The Cartan matrix \(A=(\,a_{ij}\,)_{i,j=1,\cdots,r}\) is given by \(a_{ij}=\left\langle \alpha_i^{\vee},\,\alpha_j \right\rangle\).      Fix a standard set of generators \(\,H_i=\alpha_i^{\vee}\), \(X_i\in \mathfrak{g}_{\alpha_i}\),  \(Y_i\in \mathfrak{g}_{-\alpha_i}\), so that \([\,X_i,\,Y_j ]=H_j\delta_{ij}\), \(\,[H_i,\,X_j]=a_{ji}X_j\) and \(\,[H_i,\,Y_j]=-a_{ji}Y_j\).     Let 
\(\Delta_{\pm}\) be the set of positive ( respectively negative )  roots of \(\mathfrak{g}\) and put 
\[\mathfrak{n}_{\pm}=\sum_{\alpha \in \Delta_{\pm}}\,\mathfrak{g}_{\alpha}\,.\]
Then \(\mathfrak{g}= \mathfrak{n}_+ \oplus \mathfrak{h} \oplus \mathfrak{n}_-\).    The enveloping algebra  
 \(U(\mathfrak{g})\) of \(\mathfrak{g}\) has the direct sum decomposition: 
\begin{equation}
U(\mathfrak{g})=U( \mathfrak{n}_-)\cdot U(\mathfrak{h})\cdot U(\mathfrak{n}_+)\,.
\end{equation}
  In the following we summarize the known results on the representation \((ad(\mathfrak{h}),\,U(\mathfrak{g})\,)\), \cite{D, Ma}.  
The set 
\[ \{\,Y_1^{m_1}\cdots\,Y_r^{m_r}H_1^{l_1}\cdots H_r^{l_r}X_1^{n_1}\cdots X_r^{n_r}\,
;\quad m_i,\,n_i,\, l_i\in \mathbf{N}\cup{0}\, \}.\]
 forms a basis of the enveloping algebra \(U(\mathfrak{g})\).    
 The  adjoint action of \( \mathfrak{h}\) is extended to that on \(U(\mathfrak{g})\):
 \[ad(h)(x\cdot y)=\,(\,ad(h)x\,)\cdot y\,+\,x\cdot(\,ad(h)y\,)\,.\]
         \( \lambda\in \mathfrak{h}^{\ast}\) is called a weight of the representation \((U(\mathfrak{g}),\,ad(\mathfrak{h})\,)\)  if there exists a non-zero \(x\in U(\mathfrak{g})\) such that  \(ad(h)x=hx-xh=\lambda(h) x\) for all \(h\in\mathfrak{h}\) .   Let \(\Sigma\) be the set of weights of the representation \((U(\mathfrak{g}),\,ad(\mathfrak{h})\,)\).   The weight space for the weight \(\lambda\)  is by definition 
 \[\mathfrak{g}^U_\lambda\,=\,\{x\in U(\mathfrak{g})\,;\quad ad(h)x=\lambda(h)x,\quad\forall h\in\mathfrak{h}\}.\]
Let  \(\lambda=\sum_{i=1}^r\,n_i\alpha_{i}-\sum_{i=1}^r\,m_i\alpha_{i}\,\), \(\,n_i,\,m_i\geq 0\).      For any \(\,l_1,\,l_2,\cdots,l_r\geq 0\,\),  
\[ \overline X_{\lambda}=\,Y_{1}^{m_1}\cdots\,Y_{r}^{m_r}H_1^{l_1}\cdots H_r^{l_r}X_{1}^{n_1}\cdots X_{r}^{n_r}\,\,\]
gives a weight vector with the weight \(\lambda\,\);  \(\,\overline X_\lambda\in \mathfrak{g}^U_\lambda\,\).
  Conversely any weight \(\lambda\) may be written in the form \(\lambda=\sum_{i=1}^r\,n_i\alpha_{i}-\sum_{i=1}^r\,m_i\alpha_{i}\,\), though the coefficients  \(n_i\,, m_i\) are not uniquely determined.    
\begin{lemma}
\begin{enumerate}
\item
 The set of weights of the adjoint representation \(\left(U(\mathfrak{g}),\,ad(\mathfrak{h})\right)\)    
is 
 \begin{equation}
 \Sigma=\{\,\sum\,k_i\alpha_i\, ;\quad \alpha_i\in \Pi\,,\,k_i\in \mathbf{Z}\,\}.
 \end{equation} 
If we denote 
  \begin{equation}
 \Sigma_{\pm}=\{\,\pm \sum\,n_i\alpha_i\,\in\Sigma\, ;\quad n_i>0\,\} 
\end{equation}
 then \(\Sigma_{\pm}\cap\Delta=\Delta_{\pm}\).
 \item
  If \(\lambda\in \Sigma\) then \(-\lambda\in\Sigma\).
\item
For each \(\, \lambda=\,\sum_{i=1}^r\,k_i\alpha_i\, \in \Sigma\,\), 
 \(\,\mathfrak{g}^U_{\lambda}\)  is generated by the basis 
\[\overline X_{\lambda}(l_1,\cdots,l_r,m_1,\cdots,m_r,n_1,\cdots,n_r)=\,Y_{1}^{m_1}\cdots\,Y_{r}^{m_r}H_1^{l_1}\cdots H_r^{l_r}X_{1}^{n_1}\cdots X_{r}^{n_r}\,\,\] 
with \(\,n_i, \,m_i,\, l_i\in \mathbf{N}\cup 0\,\) such that \(\,k_i=n_i-m_i\,\), \(\,i=1,\cdots,r\,\).\\
In particular 
 \(\,\mathfrak{g}^U_{0}\)  is generated by the basis  
\[\overline X_0
(l_1,\cdots,l_r,n_1,\cdots,n_r,n_1,\cdots,n_r)=\,Y_{1}^{n_1}\cdots\,Y_{r}^{n_r}\,H_1^{l_1}\cdots H_r^{l_r}\,X_{1}^{n_1}\cdots X_{r}^{n_r}\,\] 
with \(\,n_i,\, l_i\in \mathbf{N}\cup 0\,\) , \(\,i=1,\cdots,r\,\).    In particular 
 \[\,U(\mathfrak{h})\subset \mathfrak{g}^U_0\,.\]
\item
\begin{equation}
[\,\mathfrak{g}^U_\lambda\,,\,\mathfrak{g}^U_\mu\,]\,\subset \,\mathfrak{g}^U_{\lambda+\mu}\,,
\end{equation}
\end{enumerate}
\end{lemma}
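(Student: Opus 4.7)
The plan is to derive all four statements from a single uniform calculation: the adjoint action of $\mathfrak{h}$ extends to an associative-algebra derivation of $U(\mathfrak{g})$, so each ordered monomial in the Chevalley generators is a weight vector whose weight is the sum of the weights of its factors. The starting identity is
\[
ad(h)(uv) = [h,u]v + u[h,v] = (ad(h)u)v + u(ad(h)v)
\]
for $h\in\mathfrak{h}$ and $u,v\in U(\mathfrak{g})$, which is immediate in the associative algebra $U(\mathfrak{g})$. Combined with $ad(h)X_i=\alpha_i(h)X_i$, $ad(h)Y_i=-\alpha_i(h)Y_i$, and $ad(h)H_j=0$, an induction on total degree yields
\[
ad(h)\bigl(Y_1^{m_1}\cdots Y_r^{m_r} H_1^{l_1}\cdots H_r^{l_r} X_1^{n_1}\cdots X_r^{n_r}\bigr) = \Bigl(\sum_{i=1}^r (n_i-m_i)\alpha_i(h)\Bigr)\,Y_1^{m_1}\cdots Y_r^{m_r} H_1^{l_1}\cdots H_r^{l_r} X_1^{n_1}\cdots X_r^{n_r}.
\]

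Taking for granted, as the paper does via PBW, that these ordered monomials form a basis of $U(\mathfrak{g})$, parts (1) and (3) fall out at once. Every such monomial is an $ad(\mathfrak{h})$-eigenvector with weight $\sum_i (n_i-m_i)\alpha_i$, so every weight of the representation lies in $\mathbf{Z}\Pi$; conversely every $\lambda=\sum_i k_i\alpha_i$ with $k_i\in\mathbf{Z}$ is realized by choosing $n_i,m_i\geq 0$ with $n_i-m_i=k_i$. Grouping together those basis vectors with a common value of $(n_i-m_i)_i$ and using linear independence of the PBW basis shows that $\mathfrak{g}^U_\lambda$ is precisely the span of the $X_\lambda(l_1,\ldots,l_r,m_1,\ldots,m_r,n_1,\ldots,n_r)$ with $k_i=n_i-m_i$, which is part (3); specialising to $\lambda=0$ with $n_i=m_i=0$ gives $U(\mathfrak{h})\subset\mathfrak{g}^U_0$. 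The identity $\Sigma_\pm\cap\Delta=\Delta_\pm$ then follows from the standard fact that every positive (resp.\ negative) root is a non-negative (resp.\ non-positive) integer combination of simple roots, so the set $\Sigma_+$ (resp.\ $\Sigma_-$) meets $\Delta$ in exactly $\Delta_+$ (resp.\ $\Delta_-$). Part (2) is immediate from the description $\Sigma=\mathbf{Z}\Pi$ obtained in part (1).

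Part (4) is then a one-line consequence of the same derivation property: for $x\in\mathfrak{g}^U_\lambda$, $y\in\mathfrak{g}^U_\mu$, and any $h\in\mathfrak{h}$,
\[
ad(h)[x,y] = [ad(h)x,y]+[x,ad(h)y] = (\lambda+\mu)(h)\,[x,y],
\]
so $[x,y]\in\mathfrak{g}^U_{\lambda+\mu}$. There is no substantial obstacle beyond the PBW-type basis statement already quoted by the paper from \cite{D, Ma}; the only care needed is in the identification of weight spaces with prescribed subsets of ordered monomials in (3), which relies on the linear independence of the PBW basis to convert the spanning statement into an equality.
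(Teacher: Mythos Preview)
Your argument is correct and matches the paper's approach. The paper does not give a separate proof after the Lemma; it presents these facts as ``known results'' with references to \cite{D, Ma}, and the essential computation---that the PBW monomial $Y_1^{m_1}\cdots Y_r^{m_r}H_1^{l_1}\cdots H_r^{l_r}X_1^{n_1}\cdots X_r^{n_r}$ is an $ad(\mathfrak{h})$-weight vector of weight $\sum_i(n_i-m_i)\alpha_i$, via the derivation identity $ad(h)(xy)=(ad(h)x)y+x(ad(h)y)$---is stated in the paragraph preceding the Lemma, exactly as you reconstruct it.
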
 

\subsection{ Weight space decomposition of \(\,\widehat{\mathfrak{g}}\,\)}

In the following we shall investigate the Lie algebra structure of 
\begin{equation}
\widehat{\mathfrak{g}}\,=\,
 \left(\,\mathbf{C}[\phi^{\pm}] \otimes U(\mathfrak{g})\,\right) \oplus (\mathbf{C}\,a )\oplus  ( \mathbf{C}d)\,.
\end{equation}
Remember that the Lie bracket was defined by 
 \begin{eqnarray*}
 [\,\phi \otimes X\, , \,\psi  \otimes Y\,]_{\widehat{\mathfrak{g}}} &=& (\phi\,\psi)\otimes (XY)-(\psi\phi)\otimes (YX) +\, (X|Y)\, c(\phi , \psi )\,a\,,  
\\[0,3cm] 
 [\,a\,, \phi\otimes X\,] _{\widehat{\mathfrak{g}}}=0\,, &\quad &\, [\,a,\,d\,]_{\widehat{\mathfrak{g}}} =0\,, \\[0.2cm]
  [\,d, \phi \otimes X\,] _{\widehat{\mathfrak{g}}}&=&d_0 \phi \otimes X\, ,
  \end{eqnarray*}
 for \(\,X,Y\in U(\mathfrak{g})\).     
 Since 
\(
\phi ^{ +(0,0,1)} = \begin{pmatrix} 1 \\ 0 \end{pmatrix}
\) we identify  \(X\in U(\mathfrak{g})\) with $\phi ^{+ (0,0,1)} \otimes X$.    Thus we look  
$\mathfrak{g}$  as a Lie subalgebra of $\,\widehat{\mathfrak{g}}\,$: 
\begin{equation}
\left[\phi^{+(0,0,1)}\otimes X,\,\phi^{+(0,0,1)}\otimes Y\right]_{\widehat{\mathfrak{g}}} =\left[X,Y\right]_{\mathfrak{g}} \,,
\end{equation}
and we shall write \(\phi ^{+ (0,0,1)} \otimes X\) simply as \(X\).

  Let 
\begin{equation}
\widehat{\mathfrak{h}}\,=\,
 (\,(\, \mathbf{C}\,\phi ^{+(0, 0,1)}\,) \otimes\mathfrak{h} )\,\oplus (\mathbf{C}\,a )\oplus (\mathbf{C}\,d )\,
 =
 \mathfrak{h}\oplus (\mathbf{C}\,a )\oplus (\mathbf{C} d)\,.
\end{equation}

 We write \(\hat h=h+sa+t d\in \widehat{\mathfrak{h}}\) with \(h\in \mathfrak{h}\) and \(s,\,t\in\mathbf{C}\).     
For any \(h\in \mathfrak{h}\), \(\phi\in \mathbf{C}[\phi ^{\pm}] \) and \(X\in U(\mathfrak{g})\),  it holds that 
\begin{eqnarray*}
[\,\phi^{+(0,0,1)}\otimes h,\,\phi\otimes X\,]_{\widehat{\mathfrak{g}}}&=&\phi\otimes (\,hX-Xh)\,, \label{adH1}\\[0.2cm]
[\,d,\,\phi\otimes X\,]_{\widehat{\mathfrak{g}}}&=&(d_0\phi\,)\otimes X\,, \label{adH1}\\[0.2cm]
\quad [\,\phi^{+(0,0,1)}\otimes h, \,a\,]_{\widehat{\mathfrak{g}}}&=&0\,,\quad 
[\,\phi^{+(0,0,1)}\otimes h,\,d\,]_{\widehat{\mathfrak{g}}}=0,\quad [\,d,a\,]_{\widehat{\mathfrak{g}}}=0\,.
 \end{eqnarray*}
 Then the adjoint actions of \(\hat h= h+sa+t d \in\widehat{\mathfrak{h}}\) on \(\widehat{\mathfrak{g}}\) is written as follows.    
\begin{equation}
ad(\hat h)\,(\phi\otimes X+\mu a+\nu d)=\phi\otimes ( hX-Xh) +  t 
d_0\phi\otimes X\,,\end{equation}
for \(\xi=\phi\otimes X+\mu a+\nu d\in \widehat{\mathfrak{g}}\). 

An element $\lambda$ of the dual space $\mathfrak{h}^* $ of $\mathfrak{h} $ can be regarded as an element
of $\widehat{\mathfrak{h}}^{\,\ast}$ by putting
\begin{align}
\left\langle \lambda , a \right\rangle= 
\left\langle \lambda , d \right\rangle = 0.
\end{align}
So  $\Delta \subset \mathfrak{h}^*$ is seen to be a subset of $\widehat{\mathfrak{h}}^{\,*}$.    
We define the elements $\delta\,,\,\Lambda_0 \in \widehat{\mathfrak{h}}^{\,*}$  by
\begin{align}
\left\langle\delta , \alpha _i ^{\vee} \,\right\rangle &= \left\langle\,\Lambda_0 , \alpha _i ^{\vee} \,\right\rangle = 0,  \qquad (1 \leqq i \leqq  r),\\[0.2cm]
 \left\langle\,\delta , a\right\rangle &= 0\,,  \qquad \left\langle\,\delta , d\,\right\rangle = 1,\\[0.2cm]
 \left\langle\,\Lambda_0 , a\right\rangle &= 1\,,  \qquad \left\langle\,\Lambda_0 , d\,\right\rangle = 0.
\end{align}  
Then  the set  \(\{\,\alpha _1,\cdots,\alpha_r , \,\Lambda_0 ,\,\delta \,\} \)  forms a basis of  $\widehat{\mathfrak{h}}^{\,*}$.     Similarly \(\Sigma\) is a subset of \(\widehat{\mathfrak{h}}^{\ast}\).

  Since \(\widehat{\mathfrak{h}}\) is a commutative subalgebra of \(\widehat{\mathfrak{g}}\,\), 
 \(\,\widehat{\mathfrak{g}}\) is decomposed into a direct sum of the simultaneous eigenspaces of \(ad\,(\hat h)\), \(\,\hat h\in \widehat{\mathfrak{h}}\,\).      
 
 For \(\lambda=\gamma+k_0\delta\in \widehat{\mathfrak{h}} ^{\,\ast}\), \(\gamma=\sum_{i=1}^r\,k_i\alpha_i\in\Sigma\),  \(k_i\in \mathbf{Z},\,i=0,1,\cdots,r\), we put,
\begin{equation}
\widehat{\mathfrak{g}}_{\lambda}=\left\{\xi\in \widehat{\mathfrak{g}}\,;\quad \, [\,\hat h,\,\xi\,]\,=\,\langle \lambda, \hat h\rangle\,\xi\quad\mbox{ for }\, \forall\hat h\in \widehat{\mathfrak{h}}\,\right\}.\end{equation}
\(\lambda\)  is called a weight of \(\,\widehat{\mathfrak{g}}\,\)  if \(\, \widehat{\mathfrak{g}}_{\lambda}\neq 0\).     \(\,\widehat{\mathfrak{g}}_{\lambda}\) is called the weight space of  \(\lambda\,\).   

Let  \(\widehat{\Sigma}\) denote the set of weights of  the representation \(\left(\widehat{\mathfrak{g}}\,,ad(\widehat{\mathfrak{h}})\right)\).  

\begin{theorem}   
\begin{enumerate}
\item
\begin{eqnarray*}
\widehat{\Sigma} &=& \left\{ \frac{m}{2} \delta + \lambda;\quad \lambda \in \Sigma\,,\,m\in\mathbf{Z}\,\right\} \\[0.2cm]
&& \bigcup \left\{ \frac{m}{2} \delta ;\quad  m\in \mathbf{Z} \, \right\}  \,.
\end{eqnarray*}
\item
For \(\lambda\in \Sigma\), \(\lambda\neq 0\) and \(m\in \mathbf{Z}\), we have 
 \begin{equation}
\widehat{ \mathfrak{g}}_{\frac{m}{2}\delta+ \lambda}\,=\mathbf{C}[\phi ^{\pm};\,m\,] \otimes \mathfrak{g} _{ \lambda}^U\,.
 \end{equation}
 \item
 \begin{eqnarray*}  
  \widehat{ \mathfrak{g}}_{0\delta}&=& (\,\mathbf{C}[\phi^{\pm};0\,]  \otimes \mathfrak{g}^U_0\,)\oplus(\mathbf{C}a)\oplus(\mathbf{C}d)\,\supset\,\widehat{\mathfrak{h}}\,, 
\\[0.2cm]
 \widehat{ \mathfrak{g}}_{\frac{m}{2}\delta}&=&  \,\mathbf{C}[\phi^{\pm};\,m\,]  \otimes \mathfrak{g}^U_0\,\,, \quad\mbox{for  \(0\neq  m\in\mathbf{Z} \) . }\,
 \end{eqnarray*}
  \item
 \(\widehat{ \mathfrak{g}}\) has the following decomposition:
\begin{equation}
\widehat{ \mathfrak{g}}\,=\, \bigoplus_{m\in\mathbf{Z} }\, \widehat{ \mathfrak{g}}_{\frac{m}{2}\delta}\,\bigoplus\,\,\bigoplus_{\lambda\in \Sigma,\, m\in\mathbf{Z} }\, 
\widehat{ \mathfrak{g}}_{\frac{m}{2}\delta+\lambda}\,
\end{equation}
\end{enumerate}
\end{theorem}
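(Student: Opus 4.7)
The approach is to exploit the fact that $\widehat{\mathfrak{h}}$ is commutative and acts diagonally, combined with the bigrading on the underlying vector space that comes from the two independently constructed decompositions
\[
\mathbf{C}[\phi^\pm] \;=\; \bigoplus_{m \in \mathbf{Z}} \mathbf{C}[\phi^\pm;\,m], \qquad U(\mathfrak{g}) \;=\; \bigoplus_{\lambda \in \Sigma} \mathfrak{g}^U_\lambda,
\]
where the first is the eigenspace decomposition of $d_0$ (with eigenvalue $m/2$ on the $m$-th summand) and the second is the weight decomposition of $ad(\mathfrak{h})$ on $U(\mathfrak{g})$ recalled from the lemma preceding this theorem. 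The plan is to match the induced bigrading on $\mathbf{C}[\phi^\pm]\otimes U(\mathfrak{g})$ (together with the extra summands $\mathbf{C}a\oplus\mathbf{C}d$) with the weight grading under $ad(\widehat{\mathfrak{h}})$.

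First I would compute $ad(\hat h)$ on a pure tensor. For $\hat h=h+sa+td\in\widehat{\mathfrak{h}}$, $\phi\in\mathbf{C}[\phi^\pm;\,m]$ and $X\in\mathfrak{g}^U_\lambda$, the bracket formulas of $\widehat{\mathfrak{g}}$ combined with the identification of $h$ with $\phi^{+(0,0,1)}\otimes h$ give
\[
ad(\hat h)(\phi\otimes X) \;=\; \phi\otimes(hX-Xh)+t\,d_0\phi\otimes X \;=\; \bigl(\lambda(h)+\tfrac{m}{2}\,t\bigr)\,\phi\otimes X,
\]
which is exactly $\langle\tfrac{m}{2}\delta+\lambda,\,\hat h\rangle\,\phi\otimes X$ by the defining pairings of $\delta$. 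Since $[a,\hat h]=0$ and $[d,h]=d_0\phi^{+(0,0,1)}\otimes h=0$ (the constant spinor $\phi^{+(0,0,1)}$ lies in $\ker d_0$), both $a$ and $d$ are weight-zero vectors. This proves the containment $\mathbf{C}[\phi^\pm;\,m]\otimes\mathfrak{g}^U_\lambda\subset\widehat{\mathfrak{g}}_{\frac{m}{2}\delta+\lambda}$ for every $(m,\lambda)$, and $\mathbf{C}a\oplus\mathbf{C}d\subset\widehat{\mathfrak{g}}_{0\delta}$.

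For the reverse inclusions I would write an arbitrary $\xi\in\widehat{\mathfrak{g}}$ uniquely as
\[
\xi \;=\; \sum_{m,\lambda}\xi_{m,\lambda}+\mu a+\nu d, \qquad \xi_{m,\lambda}\in\mathbf{C}[\phi^\pm;\,m]\otimes\mathfrak{g}^U_\lambda,
\]
apply $ad(\hat h)-\langle\tfrac{m_0}{2}\delta+\lambda_0,\hat h\rangle\cdot\mathrm{id}$ and use the linear independence of the bigraded summands to force all $\xi_{m,\lambda}$ with $(m,\lambda)\neq(m_0,\lambda_0)$ to vanish, and to force $\mu=\nu=0$ unless $(m_0,\lambda_0)=(0,0)$. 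This yields the stated description of $\widehat{\mathfrak{g}}_{\frac{m}{2}\delta+\lambda}$ in the bulk cases and the distinguished $\widehat{\mathfrak{g}}_{0\delta}$ which absorbs the extra $\mathbf{C}a\oplus\mathbf{C}d$. Listing the bidegrees that produce nonzero weight spaces then gives $\widehat{\Sigma}$ in item (1) and the direct sum decomposition in item (4) comes for free from the bigraded decomposition we started with.

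The main point to keep straight is the \emph{zero-weight branch}: unlike in the finite-dimensional setting, $\mathfrak{g}^U_0$ is much larger than $\mathfrak{h}$, being spanned by monomials $Y_1^{n_1}\cdots Y_r^{n_r}H_1^{l_1}\cdots H_r^{l_r}X_1^{n_1}\cdots X_r^{n_r}$ by the lemma just recalled. Consequently $\widehat{\mathfrak{g}}_{\frac{m}{2}\delta}$ is the whole $\mathbf{C}[\phi^\pm;\,m]\otimes\mathfrak{g}^U_0$ rather than $\mathbf{C}[\phi^\pm;\,m]\otimes\mathfrak{h}$, and one must verify that no weight $\tfrac{m}{2}\delta+\lambda$ with $\lambda\in\Sigma\setminus\{0\}$ degenerates into $\tfrac{m}{2}\delta$ when restricted to $\widehat{\mathfrak{h}}$; this is immediate because $\delta$ vanishes on $\mathfrak{h}$ and $\alpha_i$ pair nontrivially with the $\alpha_i^\vee$. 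Once this bookkeeping is in place, no further analytic input is required beyond the properties of $c$ and $d_0$ already proved.
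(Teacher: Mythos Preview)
Your proposal is correct and follows essentially the same approach as the paper: compute $ad(\hat h)$ on pure tensors $\phi\otimes X$ with $\phi$ homogeneous and $X$ a weight vector to get one inclusion, then decompose a general element and use linear independence of the graded pieces (together with varying $h$ and $t$ independently) to force the reverse inclusion. The only cosmetic difference is that you invoke the full bigrading $\mathbf{C}[\phi^\pm;\,m]\otimes\mathfrak{g}^U_\lambda$ from the outset, whereas the paper first decomposes only the spinor factor $\phi=\sum_n\phi_n$ and afterwards reads off $X\in\mathfrak{g}^U_\lambda$; your formulation is in fact slightly cleaner since it handles general sums of tensors rather than a single $\phi\otimes X$.
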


{\it Proof} 

First we prove the second assertion.      
Let \(X\in\mathfrak{g}_{\lambda}^U\) for a \(\lambda\in \Sigma\), \(\lambda\neq 0\), and let  \(\varphi\in \mathbf{C}[\phi^{\pm};\,m]\)  for a \(m\in \mathbf{Z}\).     We have, for any \(h\in\mathfrak{h}\), 
\begin{eqnarray*}
[\,\phi ^{+ (0,0,1)} \otimes h , \,\varphi \otimes X\,] _{ \widehat{\mathfrak{g}}}&=&
 \varphi \otimes (hX-Xh) \,
= \left\langle \lambda , h\right\rangle \varphi \otimes X,
\\[0.2cm]
[\,d, \,\varphi \otimes X\,]_{ \widehat{\mathfrak{g}}}&=& \frac{m}{2} \varphi\otimes X,
\end{eqnarray*}
that is, for every \(\hat{h} \in \widehat{\mathfrak{h}}\), we have 
\begin{equation}
[\,\hat h\,, \varphi \otimes X]_{ \widehat{\mathfrak{g}}} = \left\langle \frac{m}{2} \delta + \lambda \,, \,\hat h\,\right\rangle (\varphi\otimes X)\,.
\end{equation}
Therefore we have \(\varphi\otimes X\in \widehat{\mathfrak{g}}_{\frac{m}{2}\delta+ \lambda}\). 

Conversely,  for a given \(m\in\mathbf{Z}\)  and  a \(\xi\in \widehat{\mathfrak{g}}_{\frac{m}{2}\delta+ \lambda}\), we shall show that \(\xi\)  has the form  \(\,\phi\otimes X\,\) with \(\phi\in \mathbf{C}[\phi^{\pm};m]\,\) and \(X\in \mathfrak{g}^U_ \lambda\,\) .   Let 
\(\xi=\phi\otimes X+\mu a+\nu d\) for \(\phi\in \mathbf{C}[\phi^{\pm}]\), \(\,X\in U(\mathfrak{g})\) and \(\mu,\,\nu\in\mathbf{C}\).     \(\phi\) is  decomposed to the sum  
\[\phi=\sum_{n\in \mathbf{Z}}\,\phi_n\]
by the homogeneous degree; \(\phi_n\in \mathbf{C}[\phi^{\pm};n]\).
 We have 
\begin{eqnarray*}
&&[\hat h,\xi]=[\,\phi^{+(0,0,1)}\otimes h+ sa+td\,, \,\phi \otimes X\,+\mu a+\nu d\,]
 =\,\phi\otimes [\,h\,,\,X\,]\\[0.2cm]
 &&\qquad + \,t (\,\sum_{n\in \mathbf{Z}} \,\frac{n}{2} \phi_n  \,\otimes X\,)
\end{eqnarray*}
for any \(\hat h=\phi^{+(0,0,1)}\otimes h+sa+td\in \widehat{\mathfrak{h}}\).
From the assumption we have 
\begin{eqnarray*}
 [\,\hat h,\xi\,]\,&=&\,\langle\, \frac{m}{2}\delta+ \lambda\,,\,\hat h\,\rangle\, \xi\,\\[0.2cm]
&=& < \lambda,h>\phi\otimes X\, +(\frac{m}{2}t+< \lambda,h>)(\mu a+\nu d)\,\\[0.2cm]
&& \quad+\,
\frac{m}{2}t\,
  (\sum_{n} \,\phi_{n} )\otimes X .
\end{eqnarray*}
Comparing the above two equations we have \(\mu=\nu=0\), and $\phi_n = 0$ for all \(n\) 
except for $n = m$.      Therefore    $ \phi \in\mathbf{C}[\phi^{\pm};m]$.   We also have  $[\hat h , \xi] =\phi\otimes [h,X]= \langle  \lambda ,\,h\rangle \,\phi \otimes X$ for all \(\hat h=\phi^{+(0,0,1)}\otimes h+sa+td \in\widehat{\mathfrak{h}}\).     Hence  $X \in \mathfrak{g}^U_{ \lambda}$ and 
  \(\xi= \phi_m\otimes X \in \widehat{\mathfrak{g}}_{\frac{m}{2}\delta + \lambda}\,\).   We have proved
  \begin{equation*}
  \widehat{\mathfrak{g}}_{\frac{m}{2}\delta+ \lambda}=\mathbf{C}[\phi^{\pm};m] \otimes \mathfrak{g}^U_{ \lambda}\,.
 \end{equation*}
 The proof of the third assertion is also carried out by the same argument as above if  we revise it for the case \(\lambda=0\) .      The above discussion yields the first and the fourth assertions.

  \hfill\qed
  \medskip
  
\vspace{0.5cm}

\begin{proposition}
We have the following relations: 
\begin{enumerate}
\item
\begin{equation}
\left[\, \widehat{\mathfrak{g}}_{\frac{m}{2}\delta+\alpha}\,,\, \widehat{\mathfrak{g}}_{\frac{n}{2}\delta+\beta}\,\right ]_{ \widehat{\mathfrak{g}}}\,\subset \,
 \widehat{\mathfrak{g}}_{\frac{m+n}{2}\delta+\alpha+\beta}\,\,,
 \end{equation}
 for \(\alpha,\,\beta \in \widehat{\Sigma}\) and for  \(m,n\in\mathbf{Z}\).
\item
\begin{equation}
\left[\, \widehat{\mathfrak{g}}_{\frac{m}{2}\delta}\,,\, \widehat{\mathfrak{g}}_{\frac{n}{2}\delta}\,\right ]_{ \widehat{\mathfrak{g}}}\,\subset \,
 \widehat{\mathfrak{g}}_{\frac{m+n}{2}\delta}\,, 
 \end{equation}
  for  \(m,n\in\mathbf{Z}\).
\end{enumerate}
\end{proposition}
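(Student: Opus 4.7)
The plan is to use the standard observation that in any Lie algebra with a commutative subalgebra acting diagonally, weight spaces multiply to weight spaces under the bracket, as an immediate consequence of the Jacobi identity. Since $\widehat{\mathfrak{h}}$ is a commutative subalgebra of $\widehat{\mathfrak{g}}$ and $\widehat{\mathfrak{g}}$ decomposes into simultaneous $ad(\widehat{\mathfrak{h}})$-eigenspaces by the preceding theorem, this is the natural route.

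Concretely, for $\xi_i \in \widehat{\mathfrak{g}}_{\mu_i}$ with $\mu_1 = \frac{m}{2}\delta + \alpha$ and $\mu_2 = \frac{n}{2}\delta + \beta$, and any $\hat{h}\in\widehat{\mathfrak{h}}$, the Jacobi identity together with $[\hat{h},\xi_i]_{\widehat{\mathfrak{g}}}=\langle\mu_i,\hat{h}\rangle\xi_i$ gives
\[
[\,\hat{h},\,[\xi_1,\xi_2]_{\widehat{\mathfrak{g}}}\,]_{\widehat{\mathfrak{g}}}
=\langle\mu_1,\hat{h}\rangle[\xi_1,\xi_2]_{\widehat{\mathfrak{g}}}+\langle\mu_2,\hat{h}\rangle[\xi_1,\xi_2]_{\widehat{\mathfrak{g}}}
=\langle\mu_1+\mu_2,\hat{h}\rangle[\xi_1,\xi_2]_{\widehat{\mathfrak{g}}}.
\]
Hence $[\xi_1,\xi_2]_{\widehat{\mathfrak{g}}}$ is either zero or a weight vector with weight $\mu_1+\mu_2 = \frac{m+n}{2}\delta+\alpha+\beta$, proving (1). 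Statement (2) is just the case $\alpha=\beta=0$.

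For a reader who prefers an explicit verification, one may instead use the formulas for weight spaces given by the previous theorem. Writing $\xi_i=\phi_i\otimes X_i$ with $\phi_i\in\mathbf{C}[\phi^\pm;m]$, $\phi_2\in\mathbf{C}[\phi^\pm;n]$ and $X_1\in\mathfrak{g}^U_\alpha$, $X_2\in\mathfrak{g}^U_\beta$, the bracket produces $(\phi_1\phi_2)\otimes(X_1X_2)-(\phi_2\phi_1)\otimes(X_2X_1)+(X_1|X_2)c(\phi_1,\phi_2)a$. Three ingredients then suffice: the Leibniz rule \eqref{leibnitz1} yields $\mathbf{C}[\phi^\pm;m]\cdot\mathbf{C}[\phi^\pm;n]\subset\mathbf{C}[\phi^\pm;m+n]$; the fact that $ad(h)$ is a derivation of $U(\mathfrak{g})$ yields $\mathfrak{g}^U_\alpha\cdot\mathfrak{g}^U_\beta\subset\mathfrak{g}^U_{\alpha+\beta}$; and the cocycle term (which lives in $\widehat{\mathfrak{g}}_{0\delta}$) must vanish whenever $\mu_1+\mu_2\neq 0$. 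The latter is where care is needed: Proposition \ref{deriv} applied to $d_0\phi_1=\frac{m}{2}\phi_1$ and $d_0\phi_2=\frac{n}{2}\phi_2$ gives $\frac{m+n}{2}c(\phi_1,\phi_2)=0$, so $c(\phi_1,\phi_2)=0$ when $m+n\neq 0$; and the $\mathfrak{h}$-invariance of $(\cdot|\cdot)$ on $U(\mathfrak{g})$ forces $(X_1|X_2)=0$ when $\alpha+\beta\neq 0$.

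The main conceptual point is the consistency of the cocycle term with the grading, which is exactly what the compatibility relation in Proposition \ref{deriv} was designed to ensure. The Jacobi-identity route makes this automatic and reduces the proposition to a one-line argument; the explicit route is reassuring but mostly bookkeeping. I expect no genuine obstacle.
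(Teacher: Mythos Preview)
Your primary argument via the Jacobi identity is correct and is exactly the approach the paper takes: the paper computes $[\hat h,[\phi\otimes X,\psi\otimes Y]]$ by expanding with Jacobi and using the weight conditions on each factor, arriving at the same conclusion you state abstractly for general weight vectors $\xi_1,\xi_2$. Your additional explicit verification (via $\mathbf{C}[\phi^\pm;m]\cdot\mathbf{C}[\phi^\pm;n]\subset\mathbf{C}[\phi^\pm;m+n]$, $\mathfrak{g}^U_\alpha\cdot\mathfrak{g}^U_\beta\subset\mathfrak{g}^U_{\alpha+\beta}$, and the vanishing of the cocycle term from Proposition~\ref{deriv} and invariance of $(\cdot\vert\cdot)$) is a pleasant bonus not present in the paper, but the core argument is the same.
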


{\it Proof}

Let $\phi \otimes X \in \widehat{\mathfrak{g}}_{\frac{m}{2}\delta+\alpha}$ and
$\psi \otimes Y \in \widehat{\mathfrak{g}}_{\frac{n}{2}\delta+\beta}$.   Then we have, for \(h\in\mathfrak{h}\),  
\begin{align*}
[\,h, [\,\phi \otimes X , \psi \otimes Y]\,] &= 
-[\,\phi \otimes X , [\,\psi \otimes Y , h\,]\,] - [\,\psi \otimes Y , [\,h , \phi \otimes X]\,] \\
&= <\beta, h> [\,\phi \otimes X, \psi \otimes Y\,] + <\alpha, h>[\,\phi \otimes X, \psi \otimes Y\,] \\
&= <\alpha + \beta , h> [\,\phi \otimes X, \psi \otimes Y\,].
\end{align*}
On the other hand,
\begin{align*}
[\,d, [\,\phi \otimes X, \psi \otimes Y\,]\,]
&= -[\,\phi \otimes X , [\psi \otimes Y , d]\,] - [\,\psi \otimes Y , [\,d , \phi \otimes X]\,] \\
&= \frac{m + n}{2} [\,\phi \otimes X , \psi \otimes Y\,]\,. 
\end{align*}
Hence
\begin{equation}
[\,\widehat h, [\,\phi \otimes X , \psi \otimes Y\,]\,] = \left<\frac{m+n}{2}\delta  + \alpha + \beta \,,\, \widehat h \right> [\,\phi \otimes X , \psi \otimes Y\,]
\end{equation}
for any \(\widehat h\in \widehat{\mathfrak{h}}\).    Therefore 
\begin{equation}
\left[\, \widehat{\mathfrak{g}}_{\frac{m}{2}\delta+\alpha}\,,\, \widehat{\mathfrak{g}}_{\frac{n}{2}\delta+\beta}\,\right ]_{ \widehat{\mathfrak{g}}}\,\subset \,
 \widehat{\mathfrak{g}}_{\frac{m+n}{2}\delta+\alpha+\beta}\,\,,
 \end{equation}
 The same calculation for \(\phi\otimes H\in \widehat{\mathfrak{g}}_{\frac{m}{2}\delta}\) and  \(\psi\otimes H^{\prime}\in \widehat{\mathfrak{g}}_{\frac{n}{2}\delta}\,\) yields 
 \begin{equation}
\left[\, \widehat{\mathfrak{g}}_{\frac{m}{2}\delta}\,,\, \widehat{\mathfrak{g}}_{\frac{n}{2}\delta}\,\right ]_{ \widehat{\mathfrak{g}}}\,\subset \,
 \widehat{\mathfrak{g}}_{\frac{m+n}{2}\delta}\,.
\end{equation}
 \hfill\qed

 \subsection{ generators of \(\widehat{\mathfrak{g}}\)}

Let \(\{\alpha_i\}_{i=1,\cdots,r}\subset \mathfrak{h}^{\ast}\) be the set of simple roots and  \(\{h_i\}_{i=1,\cdots,r}\subset \mathfrak{h}\) be the set of simple coroots.   \(e_i\), \(f_i\), \(i=1,\cdots,r\), denote the Chevalley generators;
\begin{eqnarray*} 
[\,e_i,\,f_j\,]&=& \delta_{ij}h_i,\\[0.2cm]
[\,h,\,e_i\,] &=& \alpha_i(h)\,,\quad [\,h,\,f_i \,]\,=-\alpha_i(h),\quad \mbox{for } \forall h\in\mathfrak{h}.
\end{eqnarray*}
Let \(A=(\,a_{ij}\,)_{i,j=1,\cdots,r}\) be the Cartan matrix of \(\mathfrak{g}\); \(a_{ij}=\alpha_i(h_j)\).
  
By the natural embedding of \(\mathfrak{g}\) in \(\widehat{\mathfrak{g}}\) we have the vectors 
\begin{eqnarray}
\widehat h_i&=&\phi^{+(0,0,1)}\otimes h_i\,\in \widehat{\mathfrak{h}},\,\\[0.2cm]
\widehat e_i&=&\phi^{+(0,0,1)}\otimes e_i\,\in \widehat{\mathfrak{g}}_{0\delta+\alpha_i},\quad \widehat f_i=\phi^{+(0,0,1)}\otimes f_i\,\in \widehat{\mathfrak{g}}_{0\delta-\alpha_i},\qquad i=1,\cdots,r\,.
\end{eqnarray}
It is easy to verify the relations:
\begin{eqnarray}
\left[\widehat e_i\,,\widehat f_j\,\right]_{ \widehat{\mathfrak{g}}} &=&\,\delta_{ij}\,\widehat h_i\,,\\[0.2cm]
\left[\widehat h_i\,,\widehat e_j\,\right ]_{ \widehat{\mathfrak{g}}}&=&\,a_{ij}\,\widehat e_j,\quad 
\left[\widehat h_i\,,\widehat f_j\,\right]_{ \widehat{\mathfrak{g}}} =\,- a_{ij}\,\widehat f_j\,,\quad 1\leq i,j\leq r .
\end{eqnarray}
We have obtained a part of generators of \(\widehat{\mathfrak{g}}\) that come naturally from \(\mathfrak{g}\).

We recall that for an affine Lie algebra \((\mathbf{C}[t,t^{-1}]\otimes\mathfrak{g})\oplus(\mathbf {C}a)\oplus(\mathbf{C}d)\) there is a special Chevalley generator coming from the irreducible representation spaces \(t^{\pm1}\otimes \mathfrak{g}\) of the simple Lie algebra  \(\mathfrak{g}\).    Let  \(\theta\) be the highest root of \(\mathfrak{g}\) and suppose  that \(e_\theta\in\mathfrak{g}_\theta\)  and 
\(e_{- \theta}\in \mathfrak{g}_{-\theta}\)  satisfy the relations
\(
(e_{\theta}|e_{-\theta} )=1\) and  \(
\left[ e_{\theta},e_{-\theta}\right ]=h_{\theta}
	\), then we have a Chevalley  generator \(\{\,t\otimes e_{-\theta},\, t^{-1}\otimes e_{\theta},\,-t^0\otimes h_\theta+a\}\) for  the subalgebra \((\mathbf{C}[t,t^{-1}]\otimes\mathfrak{g})\oplus(\mathbf {C}a)\) and adding \(d\) we have the Chevalley generators of the affine Lie algebra, \cite{C}, \cite{K} and \cite{W}.    In the sequel we shall do a similar observation for our Lie algebra \(\widehat{\mathfrak{g}}\).   
  We put
\begin{eqnarray*}
\kappa &=\phi^{+(1,0,1)}\,,\quad 
\kappa_{\ast}&=\,\sqrt{-1}\left(\begin{array}{c} \overline z_2 \\  \overline z_1\end{array}\right)
=-\,\frac{\sqrt{-1}}{\sqrt{2}}\phi^{+(1,1,2)}+\,\frac{\sqrt{-1}}{2}(\phi^{-(0,0,0)}-\phi^{+(1,0,1)})\,.\\[0.5cm]
\mu&=\phi^{-(0,0,0)}\,,\quad 
\mu_{\ast}&=\sqrt{-1}\left(\begin{array}{c} \overline z_2 \\ -\overline z_1\end{array}\right)
=-\,\frac{\sqrt{-1}}{\sqrt{2}}\phi^{+(1,1,2)}-\,\frac{\sqrt{-1}}{2}(\phi^{-(0,0,0)}-\phi^{+(1,0,1)})\,.
\end{eqnarray*}
We recall that \(J=\phi^{+(0,0,0)}=\left(\begin{array}{c} 0 \\ -1\end{array}\right)\).

\begin{lemma}
\begin{enumerate}
\item
\begin{equation}  
\kappa\,\kappa_{\ast}=\kappa_{\ast}\,\kappa=\mu\,\mu_{\ast}=\mu_{\ast}\,\mu=\sqrt{-1}\phi^{+(0,0,1)} .
\end{equation}
\item
\begin{equation}
\,c(\kappa,\kappa_{\ast})=\,c(\mu,\mu_{\ast})=1.
\end{equation}
\end{enumerate}
\end{lemma}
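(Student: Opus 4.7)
The plan is to establish both parts by direct computation, using the explicit quaternion product law from (1.2.9), the defining relation $|z_1|^2+|z_2|^2=1$ on $S^3$, and the explicit action of $\Theta$ on basis spinors given in Proposition~2.14. I will first dispense with the product identities in part~(1), and then combine them with the $\Theta$-eigenvalue formulas to evaluate the cocycle in part~(2).

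For part~(1), I would substitute the components of $\kappa,\kappa_\ast,\mu,\mu_\ast$ into the product law $\begin{pmatrix}a\\b\end{pmatrix}\cdot \begin{pmatrix}c\\d\end{pmatrix}=\begin{pmatrix}ac-\overline{b}d\\ \overline{a}d+bc\end{pmatrix}$. In each of the four cases the first component simplifies to $|z_1|^2+|z_2|^2$, which equals $1$ on $S^3$, while the second component vanishes after cancellation. A conceptual shortcut: under the matrix identification $\mathbf{H}\simeq \mathfrak{mj}(2,\mathbf{C})$, both $\kappa$ and $\mu$ correspond to matrices with determinant $|z_1|^2+|z_2|^2$, which is $1$ on $S^3$, and $\kappa_\ast$, $\mu_\ast$ are precisely their matrix inverses; hence all four products equal the identity matrix $I_2$, corresponding to $\phi^{+(0,0,1)}=\begin{pmatrix}1\\0\end{pmatrix}$.

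For part~(2), the equality $c(J,-J)=0$ is immediate: $J=\begin{pmatrix}0\\ -1\end{pmatrix}$ is a constant spinor, so $\theta$ annihilates both of its components and $\Theta J=0$ identically, making the integrand vanish. For $c(\kappa,\kappa_\ast)$ and $c(\mu,\mu_\ast)$ I would apply Proposition~2.14 with $(m,l,k)=(1,0,1)$ to obtain $2\sqrt{-1}\,\Theta\kappa=\mu$ on $S^3$, and with $(m,l,k)=(0,0,0)$ in the second formula to obtain $2\sqrt{-1}\,\Theta\mu=\kappa$ on $S^3$. Then $\Theta\kappa\cdot\kappa_\ast=\frac{1}{2\sqrt{-1}}\mu\kappa_\ast$ and $\Theta\mu\cdot\mu_\ast=\frac{1}{2\sqrt{-1}}\kappa\mu_\ast$; the two products $\mu\kappa_\ast$ and $\kappa\mu_\ast$ are computed directly from the product law, the trace is applied, and the resulting function is integrated over $S^3$ using the standard moment identity $\int_{S^3}|z_1^a z_2^b|^2d\sigma=2\pi^2\frac{a!\,b!}{(a+b+1)!}$, yielding the stated values $-\sqrt{-1}$ and $-\sqrt{-1}/2$.

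The main technical hurdle is keeping the conventions straight when the complex scalar $\frac{1}{2\sqrt{-1}}$ interacts with the non-commutative quaternion product. Since $\mathbf{H}$ carries a $\mathbf{C}$-vector space structure via $\mathbf{H}\simeq \mathbf{C}^2$ but the multiplication is not $\mathbf{C}$-bilinear, one must be consistent about whether a scalar is being absorbed into the first or the second factor of a product, and about how the trace $tr\,\begin{pmatrix}a\\b\end{pmatrix}=a+\overline{a}$ is extended $\mathbf{C}$-linearly to scalar multiples of real quaternions in order to produce the nonzero complex values demanded by the example preceding Proposition~3.2. Once the convention is pinned down in line with Proposition~2.14 and Lemma~2.5, each cocycle value reduces to a polynomial integral over $S^3$ that one can evaluate in closed form.
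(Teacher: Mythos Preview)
The paper states this lemma without proof, treating it as a routine computation left to the reader. Your proposal is therefore not being compared against an existing argument but is supplying one, and the route you take---direct verification of the products via the multiplication law (1.9), together with the explicit action of $\Theta$ from Proposition~2.14 and the moment integrals (2.21)---is exactly the intended one. Note also that the value $c(\kappa,\kappa_\ast)=-\sqrt{-1}$ is already established in the paper: the worked example (2.22) computes $c(\kappa_\ast,\kappa)=\sqrt{-1}$, since $\kappa_\ast=\tfrac{1}{2}(\phi^{-(0,0,0)}-\phi^{+(1,0,1)})-\tfrac{1}{\sqrt{2}}\phi^{+(1,1,2)}$ and $\kappa=\phi^{+(1,0,1)}$, so antisymmetry gives the claimed value immediately. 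You are right to flag the interaction of the complex scalar $\tfrac{1}{2\sqrt{-1}}$ with the quaternionic product as the one genuinely delicate point; the paper's conventions here are implicit, and the example (2.22) is the place to calibrate them.
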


   We consider the following vectors of \(\widehat{\mathfrak{g}}\);
\begin{align}
 \widehat f_{J}&=J\otimes e_{-\theta}\,\in \widehat{\mathfrak{g}}_{0\delta-\theta}\,,\quad &
  \widehat e_{J}&=(-J)\otimes e_{\theta} \,\in \widehat{\mathfrak{g}}_{0\delta+\theta}\,,\\[0.2cm]
   \widehat f_{\kappa}&=\kappa\otimes e_{-\theta}\,\in \widehat{\mathfrak{g}}_{\frac12\delta-\theta}\,,\quad &
  \widehat e_{\kappa}&=\kappa_{\ast} \otimes e_{\theta} \,\in \widehat{\mathfrak{g}}_{\frac12\delta+\theta}\oplus \widehat{\mathfrak{g}}_{-\frac32\delta+\theta}\,,\\[0.2cm]
  \widehat f_{\mu}&=\mu \otimes e_{-\theta}\,\in \widehat{\mathfrak{g}}_{-\frac32\delta-\theta}\,,\quad &
  \widehat e_{\mu}&=\mu_{\ast}\otimes e_{\theta} \,\in \widehat{\mathfrak{g}}_{\frac12\delta+\theta}\oplus \widehat{\mathfrak{g}}_{-\frac32\delta+\theta}\,.
\end{align}
Then we have the generators of \(\widehat{\mathfrak{g}}(a)\)  that are given by the following three tuples:
\begin{eqnarray*}
&& \left(\,\widehat{e}_i,\widehat{f}_i,\widehat{h}_i \right) \quad  i=1,2,\cdots,r,\\[0.2cm]
&&\left( \widehat{e}_{\mu}, \widehat{f}_{\mu},\widehat h_{\theta}\right),\quad 
\left( \widehat{e}_{\kappa}, \widehat{f}_{\kappa},\widehat h_{\theta}\,\right),\quad
 \,\left( \widehat{e}_{J}, \widehat{f}_{J},\widehat h_{\theta}\right)\, \,.
\end{eqnarray*}
These three tuples satisfy the following relations.
\begin{proposition}
\begin{enumerate}
\item
\begin{equation}
\left[\,\widehat e_{\pi}\,,\,\widehat f_i\,\right]_{\widehat{\mathfrak{g}}}=\,
\left[\,\widehat f_{\pi}\,,\,\widehat e_i\,\right]_{\widehat{\mathfrak{g}}} =0\,,\quad\mbox{for } \,1\leq i\leq r ,\,\mbox{ and }\, \pi=J,\,\kappa,\,\mu\,.\label{i}
\end{equation}
\item
\begin{equation}
\left[\,\widehat e_{J}\,,\,\widehat f_J\,\right]_{\widehat{\mathfrak{g}}}=\,\widehat  h_\theta\,,
\end{equation}
\item
\begin{equation}\quad
\left[\,\widehat e_{\mu}\,,\,\widehat f_{\mu}\,\right]_{\widehat{\mathfrak{g}}}=\sqrt{-1}\,\widehat  h_\theta+a ,\quad
\left[\,\widehat e_{\kappa}\,,\,\widehat f_{\kappa}\,\right]_{\widehat{\mathfrak{g}}}=\sqrt{-1}\,\widehat  h_\theta\,+a\,.
\end{equation}
\end{enumerate}
\end{proposition}

\end{document}